\let\mathbb=\mathds
\DeclareMathOperator{\Tr}{Tr}
\DeclareMathOperator{\wt}{wt}
\DeclareMathOperator{\e}{\mathrm{e}}
\newcommand{\si}{{\sigma}}
\newcommand{\be}{{\mathbf e}}
\newcommand{\tr}{\operatorname{Tr}}
\newcommand{\al}{{\alpha}}
\newcommand{\ten}{\otimes}
\newcommand{\pl}{\hspace{.1cm}}
\def\0{{\mathbf{0}}}
\def\1{{\mathbf{1}}}
\def\2{{\mathbf{2}}}
\def\3{{\mathbf{3}}}
\def\4{{\mathbf{4}}}
\def\5{{\mathbf{5}}}
\def\6{{\mathbf{6}}}
\def\7{{\mathbf{7}}}
\def\8{{\mathbf{8}}}
\def\9{{\mathbf{9}}}
\def\be{\begin{equation}}
\def\ee{\end{equation}}
\def\bea{\begin{eqnarray}}
\def\eea{\end{eqnarray}}
\theoremstyle{plain}
\newtheorem{theo}{Theorem} 
\newtheorem{prop}[theo]{Proposition} 
\newtheorem{lemm}[theo]{Lemma} 
\newtheorem{coro}[theo]{Corollary} 
\theoremstyle{definition}
\newtheorem*{defn*}{Definition} 
\theoremstyle{remark}
\newtheorem{remark}{Remark}[section]
\newtheorem{cons}{Construction}
\numberwithin{equation}{section}
\newcommand{\opnorm}{\@ifstar\@opnorms\@opnorm}
\newcommand{\@opnorms}[1]{%
	$\left|\mkern-1.5mu\left|\mkern-1.5mu\left|
	#1
	\right|\mkern-1.5mu\right|\mkern-1.5mu\right|$
}
\newcommand{\@opnorm}[2][]{%
	\mathopen{#1|\mkern-1.5mu#1|\mkern-1.5mu#1|}
	#2
	\mathclose{#1|\mkern-1.5mu#1|\mkern-1.5mu#1|}
}
\begin{document}

\let\origmaketitle\maketitle
\def\maketitle{
	\begingroup
	\def\uppercasenonmath##1{} 
	\let\MakeUppercase\relax 
	\origmaketitle
	\endgroup
}

\title{\bfseries \Large{ 
Strong Converse for Privacy Amplification \\ against Quantum Side Information
		}}
		
\author{ \normalsize \textsc{Yu-Chen Shen$^{1}$, Li Gao$^{5}$, and Hao-Chung Cheng$^{1,2,3,4}$}}
\address{\small  	
$^{1}$Department of Electrical Engineering and Graduate Institute of Communication Engineering,\\ National Taiwan University, Taipei 106, Taiwan (R.O.C.)\\
$^{2}$Department of Mathematics, National Taiwan University\\
$^{3}$Center for Quantum Science and Engineering,  National Taiwan University\\
$^{4}$Hon Hai (Foxconn) Quantum Computing Center, New Taipei City 236, Taiwan\\
$^{5}$Department of Mathematics, University of Houston, Houston, TX 77204, USA\\
}

\email{\href{mailto:haochung.ch@gmail.com}{haochung.ch@gmail.com}}
\date{\today}
\begin{abstract}
	We establish a one-shot strong converse bound for privacy amplification against quantum side information using trace distance as a security criterion. 
	This strong converse bound implies that in the independent and identical scenario, the trace distance exponentially converges to one in every finite blocklength when the rate of the extracted randomness exceeds the quantum conditional entropy. The established one-shot bound has an application to bounding the information leakage of classical-quantum wiretap channel coding and private communication over quantum channels.
	That is, the trace distance between Alice and Eavesdropper's joint state and its decoupled state vanishes as the rate of randomness used in hashing exceeds the quantum mutual information. On the other hand, the trace distance converges to one when the rate is below the quantum mutual information, resulting in an exponential strong converse.
	Our result also leads to an exponential strong converse for entropy accumulation, which complements a recent result by Dupuis [arXiv:2105.05342]. 
	Lastly, our result and its applications apply to the moderate deviation regime. Namely, we characterize the asymptotic behaviors of the trace distances when the associated rates approach the fundamental thresholds with speeds slower than \(O(\sfrac{1}{\sqrt{n}})\).
	
\end{abstract}
\maketitle

\section{Introduction}
\emph{Privacy amplification} (also called \emph{randomness extraction} in \cite{SIAM17})
is a vital protocol in classical and quantum cryptography for extracting randomness from a source partially leaked to environment.
Privacy amplification has been widely studied for its applications in security bounds (known as the leftover hash lemma) \cite{Ren05,Tom16,Hay13,Dup21}, random number generation \cite{Hay13}, channel coding \cite{IEEE64,IEEE57},  wiretap channel coding \cite{Hay13,Hay11,Hay2112}, quantum key distribution,
as well as error correction and data compression \cite{Proc465,Proc467,IEEE64,Tsurumaru2021EquivalenceOT}.
Many works have been made on characterizations of information leakage via privacy amplification. The achievability part of privacy amplification against quantum side information has been obtained using security criteria such as trace distance \cite{Ren05,MH333,Tom16,Dup21}, purified distance \cite{TH13,KL21}, and the quantum relative entropy \cite{KL21,YW66}.
On the other hand, a converse bound using purified distance as security criterion has been found in \cite{TH13}. However, a \emph{direct} strong converse analysis via trace distance as the security criterion is still unclear. The main goal of this paper is to establish a large deviation type exponential strong converse bound for privacy amplification against quantum side information using the trace distance as the security criterion.


Throughout the paper, we consider privacy amplification against quantum adversaries.
Suppose Alice and Eve (adversary) share a classical-quantum (c-q) state $\rho_{XE}:= \sum_{x\in\mathcal{X}} p_X(x)|x\rangle \langle x|\otimes \rho_E^x$, where Alice holds classical system $X$ and Eve holds quantum system $E$.
The goal of privacy amplification is for Alice to extract randomness, say on system $Z$, such that it is independent of the quantum side information $E$.
The conventional protocol of privacy amplification is to apply a random hash function to Alice's system.
In this paper, we adopt the \emph{strongly $2$-universal hash function} as follows.
\begin{defn*}[Strongly $2$-universal hash functions]
	\label{defn:hash}
	A random hash functions $h :\mathcal{X} \to \mathcal{Z}$ is strongly $2$-universal if for all $ x, x'\in \mathcal{X}$ with $x\neq x'$ and $z,z'\in \mathcal{Z}$,
	\begin{align}
		\Pr_{h}\left\{ h(x) = z \; \wedge \; h(x') = z' \right\} = \frac{1}{|\mathcal{Z}|^2}\, .
	\end{align}
\end{defn*}
Namely, the output $h(x)$ for every input $x$ is uniform and pairwise independent.
The implementation of the hash function $h$ is the following linear operation $\mathcal{R}^h_{X\to Z} $ on Alice's system $X$, i.e.~
\begin{align}
	\mathcal{R}^h(\rho_{XE}) &:= \sum_{x\in\mathcal{X}} p_X(x)|h(x)\rangle \langle h(x)|\otimes \rho_E^x\\
	&= \sum_{z\in\mathcal{Z}} |z\rangle\langle z|\otimes\left( \sum_{x\in h^{-1}(z)} p_X(x)\rho_E^x\right).
\end{align}
Alice's goal is to make the extracted randomness close to a uniform distribution $\sfrac{\mathds{1}_Z}{|\mathcal{Z}|}$ and independent of $E$, which can be measured by
the \emph{trace distance} as a security criterion:
\begin{align}
	\varepsilon_\text{PA}
	:=\frac{1}{2}\mathds{E}_h\left\| \mathcal{R}^h(\rho_{XE}) - \frac{\mathds{1}_Z}{|\mathcal{Z}|} \otimes \rho_E \right\|_1.
\end{align}

Our first main result is a one-shot strong converse bound for privacy amplification against quantum side information, i.e.~an exponential convergence of $\varepsilon_\text{PA} \to 1$ when $\log|\mathcal{Z}|$ is too large. This result applies to the independent and identical (i.i.d.) scenario, where Alice and Eve now hold $n$-fold product $\rho_{XE}^{\otimes n}$. Combined with the recent achievability result by Dupuis \cite{Dup21}, the following hold
for every blocklength $n\in\mathds{N}$ and $|\mathcal{Z}^n|=e^{nR}$ (Theorem~\ref{theo:sc}),
\begin{align}
\begin{dcases}
	\varepsilon_{\text{PA}} \leq  \mathrm{e}^{-n \sup_{\alpha\in(1,2)}\frac{1-\alpha}{\alpha}  \left( R -  H_{\alpha}^{*}(X{\,|\,}E)_\rho\right) }, &  R < H(X{\,|\,}E)_\rho 
	\\
	\varepsilon_\text{PA} \geq 1 -  4 \, \mathrm{e}^{ -n  \sup\limits_{\alpha\in(\sfrac12,1)} \frac{1-\alpha}{\alpha} \left( R -  H_{2-\sfrac{1}{\alpha}}^{\downarrow}(X{\,|\,}E)_\rho  \right) }, &R > H(X{\,|\,}E)_\rho 
\end{dcases}
\end{align}
where $H_{\alpha}^{*}(X{\,|\,}E)_\rho$ (resp.~$H_{2-\sfrac{1}{\alpha}}^{\downarrow}(X{\,|\,}E)_\rho$) is a sandwiched-- (resp.~Petz--) R\'enyi version of the quantum conditional entropy $H(X{\,|\,}E)_\rho$ (see Section~\ref{sec:notation} for the definitions). Here, the upper bound of the trace distance  $\varepsilon_\text{PA}$  was proved by Dupuis \cite[Theorem 8]{Dup21}, and the lower bound follows from our one-shot strong converse bound, which shows that $\varepsilon_\text{PA}$ converges to $1$ exponentially fast when the rate of the extracted randomness is above $H(X{\,|\,}E)_\rho$.

Our result has an application in bounding the information leakage to eavesdropper (Eve) when transmitting message $m\in\{1,\ldots, M\}$ to Bob through a classical-quantum wiretap channel. 
Let $\sigma_{ME}^{\mathcal{C}}$ be the joint classical-quantum (c-q) state between Alice and Eve when Alice employs a random codebook $\mathcal{C}$ (with codewords drawn according to distribution $p_X$). We define $\varepsilon_{\text{leakage}}$ as the trace distance between $\sigma_{ME}^{\mathcal{C}}$ and its decoupled product state as a security index:
\begin{align}
    \varepsilon_{\text{leakage}} :=
    \frac{1}{2} \mathds{E}_{\mathcal{C}\sim p_X }\left\| \sigma_{ME}^{\mathcal{C}} - \frac{\mathds{1}_M}{M}\otimes \sigma_{E}^{\mathcal{C}} \right\|_1 \, .
\end{align}
Then, we obtain that for any $R$ bits of randomness in hashing used for Alice's secret communication and every coding blocklength $n\in\mathds{N}$, 
\begin{align}
\begin{dcases}
    \varepsilon_{\text{leakage}} \leq  2\mathrm{e}^{-n \sup_{\alpha\in(1,2)}\frac{1-\alpha}{\alpha}  \left(  I_{\alpha}^{*}(X{\,:\,}E)_\rho - R\right) }, & R> I(X{\,:\,}E)_\sigma  \\
    \varepsilon_{\text{leakage}} \geq 1-5 \mathrm{e}^{-n \sup\limits_{\alpha\in(1/2,1)} \frac{1-\alpha}{\alpha} \left(  I_{2-\sfrac{1}{\alpha}}^{\downarrow}(X{\,:\,}E)_\rho - R\right) }, & R<I(X{\,:\,}E)_\sigma
\end{dcases}
\end{align}
where $I_{\alpha}^{*}(X{\,:\,}E)_\sigma$ (resp.~$I_{2-\sfrac{1}{\alpha}}^{\downarrow}(X{\,:\,}E)_\sigma$) is a sandwiched-- (resp.~Petz--) R\'enyi version of the quantum mutual information $I(X{\,:\,}E)_\sigma$ (see Section~\ref{sec:notation}), and $\sigma_{XE} := \sum_{x\in\mathcal{X}} p_X(x)|x\rangle \langle x|\otimes \sigma_E^x$ for each $\sigma_E^x$ being the output at Eve's wiretap channel. We proved the upper bound in Theorem \ref{theo:wireach}. Note that a slightly different upper bound was obtained earlier by Jiawei \textit{et al.} \cite{Hay2112}. The strong converse lower bound (Theorem~\ref{theo:wiretap}) relies on our previous one-shot strong converse of privacy amplification.
These results of c-q wiretap channel coding indicates that if the rate of the randomness in Alice's hashing is above $I(X{\,:\,}E)_\sigma$, then $\varepsilon_{\text{leakage}}\to 0$ exponentially fast. On the other hand, if the rate is below $I(X{\,:\,}E)_\sigma$, then $\varepsilon_{\text{leakage}}\to 1$ exponentially fast, resulting in an \emph{exponential strong converse} \cite{WWY14, MO17, Cheng2021b}.
We remark that similar results applies to bounding the information leakage for private communications over a quantum channel (Corollaries~\ref{coro:private_ach} and \ref{coro:private_sc}) .

Our result also gives an application in strong converse for \emph{entropy accumulation} (EA) \cite{DFR20,Dup21}. The question of EA we ask here is that: given classical side information $X^n_1$ and a global statistical information $T^n_1$, how much uncertainty remains about the classical bit-string $A^n_1$? The security index used here is the trace distance as below:
\begin{align*}\varepsilon_\text{EA}(w) :=\frac{1}{2}\mathbb{E}_h&\left\|\mathcal{R}^h(\rho_{A^n_1X^n_1E\mid\wt(T^n_1)=w})-\frac{\mathds{1}}{2^{nR}}\otimes\rho_{X^n_1E\mid\wt(T^n_1)=w}\right\|_1. 
\end{align*}
The smaller $\varepsilon_\text{EA}$ means more uncertainty about the variable $A^n_1$. Then, we have (Theorem~\ref{theo:QKD}):
\begin{align}
\begin{dcases}
	\varepsilon_{\text{EA}}(w) \leq  c_{w}\e^{-n\frac{1}{2}\left(\frac{R-f(w)}{V}\right)^2}, & R < f(w)  \\
	\varepsilon_{\text{EA}}(w) \geq 1-c'_{w}\e^{-n\frac{1}{2}\left(\frac{R-f(w)}{V}\right)^2},
	& R > f(w)
\end{dcases}.
\end{align}
Here, $w$ is a parameter on how much global information we can know; $f(w)$ is the \emph{tradeoff function} \cite{DFR20}; $V$ is a constant; and $c_w, c'_w$ are positive constants depending on $w$.
The upper bound was shown in \cite[Theorem 9]{Dup21}, and the lower bound relies on our one-shot strong converse bound.

Lastly, our results extends to the \emph{moderate deviation regime} \cite{CH17, CTT2017}. That is, for every moderate sequence $(a_n)_{n\in\mathds{N}}$ satisfying (i) $ a_n \downarrow 0$, (ii) $ n a_n^2 \uparrow \infty$, we obtain the following asymptotic error behaviors\footnote{Here, by ``$f(n)\lesssim g(n)$" we meant $\lim_{n\to\infty} \frac{1}{n a_n^2} \log f(n) \leq \lim_{n\to\infty} \frac{1}{n a_n^2} \log g(n) $. See Propositions~\ref{prop:moderate_PA}, \ref{prop:moderate_wiretap}, and \ref{prop:moderate_EA} for the precise statements.} 
as $n\to \infty$ (Proposition~\ref{prop:moderate_PA}):
\begin{align}
\begin{dcases}
	\varepsilon_{\text{PA}} \lesssim \mathrm{e}^{-\frac{na_n^2}{2V(X{\,|\,}E)_\rho}  } \to 0 & R = H(X{\,|\,}E)_\rho - a_n,  \\
	\varepsilon_\text{PA} \gtrsim 1-  \mathrm{e}^{-\frac{na_n^2}{2V(X{\,|\,}E)_\rho} } \to 1 & R = H(X{\,|\,}E)_\rho + a_n,
\end{dcases},
\end{align}
where $V(X{\,|\,}E)_\rho$ is the conditional quantum information variance.
Here, the upper bound can be derived based on Dupuis' result \cite[Theorem 8]{Dup21} of error exponent. The lower bound means that even when the rate of the extracted randomness approaches to $H(X{\,|\,}E)_\rho$ from above at a speed slower than $O(\sfrac{1}{\sqrt{n}})$, the trace distance $\varepsilon_{\text{PA}}$ still converges to $1$ asymptotically.
Similar result in the moderate deviation regime also hold for information leakage of c-q wiretap channel coding and entropy accumulation as well (Proposition~\ref{prop:moderate_wiretap}):
\begin{align}
\begin{dcases}
	\varepsilon_{\text{leakage}} \lesssim \mathrm{e}^{-\frac{na_n^2}{2V(X{\,:\,}E)_\sigma}  } \to 0, & R = I(X{\,:\,}E)_\sigma + a_n  \\
	\varepsilon_\text{leakage} \gtrsim 1-  \mathrm{e}^{-\frac{na_n^2}{2V(X{\,:\,}E)_\sigma} } \to 1, & R = I(X{\,:\,}E)_\sigma - a_n
\end{dcases},
\end{align}
where $V(X{\,:\,}E)_\sigma$ is the quantum information variance, and (Proposition~\ref{prop:moderate_EA})
\begin{align}
\begin{dcases}
	\varepsilon_{\text{EA}}(w) \lesssim \e^{-\frac{n a_n^2}{2V}}, & R =  f(w) - a_n  \\
	\varepsilon_{\text{EA}}(w) \gtrsim 1- \e^{-\frac{n a_n^2}{2V}},
	& R = f(w) + a_n
\end{dcases}.
\end{align}

The paper is structured as follows. In the rest of this section we compare our works with existing literature. Section \ref{sec:notation} reviews the necessary background on entropy quantities. In Section \ref{sec:sc}, we prove our main result: a one-shot strong converse for privacy amplification and its $n$-shot extensions. In Section \ref{sec:c-q}, we bound the information leakage in classical-quantum wiretap channel coding. Section \ref{sec:QKD} includes an application to entropy accumulation. Section \ref{sec:moderate} includes moderate deviation analysis of privacy amplification and the applications on wiretap channel and entropy accumulation. We conclude the paper in Section \ref{sec:conclusion}.  We arrange some proofs in Appendix~\ref{sec:proofs}.

\subsection{Comparison with the existing results}
The well-known \emph{leftover hash lemma} (LHL) of privacy amplification against quantum side information \cite{KMR05, Ren05, Dup21} states that: 
\begin{align}
	\varepsilon_\text{PA} &\leq \e^{ \frac{\alpha-1}{\alpha}\left( \log |\mathcal{Z}| - H^*_\alpha(X{\,|\,}E)_\rho\right) } \quad (\forall \alpha\in (1,2))  \\
	&\leq \e^{ \frac12 \left( \log |\mathcal{Z}| - H^*_2(X{\,|\,}E)_\rho\right) } \\
	&\leq \e^{ \frac12 \left( \log |\mathcal{Z}| - H^*_\infty(X{\,|\,}E)_\rho\right) }.
\end{align}
Our one-shot strong converse bound (Theorem~\ref{theo:sc}) bears a resemblance to the LHL in a complementary way:
\begin{align}
	\varepsilon_\text{PA} &\geq 1 - 4\e^{ \frac{\alpha-1}{\alpha}\left( \log |\mathcal{Z}| - H^\downarrow_\alpha(X{\,|\,}E)_\rho\right) } \quad \forall\alpha\in (\sfrac12,1) \\
	&\geq 1 - 4\e^{ -\frac12 \left( \log |\mathcal{Z}| - H^\downarrow_{\sfrac12}(X{\,|\,}E)_\rho\right) }.
\end{align}
Together, they imply that, for $\varepsilon$-secrete privacy amplification protocols, the maximal number of bits of the extractable uniform randomness ($\log |\mathcal{Z}|$) is bounded as
\begin{align}
	H_\infty^*(X{\,|\,}E)_\rho - 2 \log \frac{1}{\varepsilon}
	\leq \log |\mathcal{Z}| 
	\leq H^\downarrow_{\sfrac12} (X{\,|\,}E)_\rho + 2\log\frac{4}{1-\varepsilon}.
\end{align}

Comparing with the standard converse bounds (e.g.~\cite{TSS+11, TH13, Tom16}), our result gives a \emph{direct} converse analysis to the trace distance as the security criterion, which does not require intermediate steps through analysis via the purified distance \cite{TCR10} and the Fuchs-van de Graaf inequality \cite{FG99}. Moreover, the conditional R\'enyi entropy $H^\downarrow_{\alpha}$ playing as a role of the exponent has a closed-form expression (see Section~\ref{sec:notation} for detailed definition) as opposed to the smooth entropies  \cite{TCR10}. 
 When considering the i.i.d.~extension of $n$-fold product state $\rho_{XE}^{\otimes n}$, the additivity of $H^\downarrow_{\alpha}$ immediately yields exponential bound on $1-\varepsilon_\text{PA}$ for \emph{every} finite blocklength without appealing to asymptotic expansion via the smooth entropies \cite{TCR10}. Similarly, our result implies strong converse for the entropy accumulation \cite{DFR20} in device-independent quantum key distribution without going through smooth entropies, which has a similar flavor as Ref.~\cite{Dup21} in the achievability part.  Lastly, our result provides a R\'enyi-type entropy to characterize the one-shot operational quantity ($\varepsilon_\text{PA}$), partially answering Dupuis' question raised in \cite{Dup21}.

\section{Notation and Information Quantities} \label{sec:notation}
We denote the $[M]:= \{1,\ldots, M\}$ for any integer $M\in\mathds{N}$.
We denote $\mathcal{B(H)}$ as the space of bounded linear operators on a Hilbert space $\mathcal{H}$, and $\mathcal{B}_{\geq 0}(\mathcal{H})$ as the set of positive (semi-definite) operators. For an operator $H\in \mathcal{B(H)}$, the Schatten-$p$ norm is defined as
\begin{align}
    \|H\|_p := \Tr\left[ \left( H^\dagger H \right)^{\sfrac{p}{2}} \right],
\end{align}
where $\Tr$ is the standard matrix trace. The set of density operators (positive with unit trace) is denoted as $\mathcal{S(H)}$.
We use $\text{supp}(\cdot)$ to stand for the support of a function or the support of an operator.

Recall that for $\alpha\in(0,\infty)\backslash 1$, the order-$\alpha$ Petz--R\'enyi divergence $D_\alpha$ \cite{Pet86} is defined as
and the sandwiched R\'enyi divergence $D^*_\alpha$ \cite{MDS+13,WWY14} are defined  as
\begin{align}
D_\alpha(\rho\|\sigma) &:=\frac{1}{\al-1}\log\tr\left[\rho^{\alpha}\sigma^{1-\alpha}\right]\ ; \\ 
D_\alpha^*(\rho\|\sigma)&:=\frac{1}{\al-1}\log\left\|{\sigma^{\frac{1-\alpha}{2\alpha}}\rho\si^{\frac{1-\alpha}{2\alpha}}}\right\|_{\al}^\al \ ,
\end{align}
where $\rho \in \mathcal{S(H)}, \sigma \in \mathcal{B}_{\geq 0}(\mathcal{H})$ and $\text{supp}(\rho)\subseteq \text{supp}(\sigma)$.
Note that when $\alpha\to 1$, both R\'enyi divergences converge to the quantum relative entropy \cite{Ume62} $D(\rho\|\sigma) := \Tr\left[ \rho (\log \rho - \log \sigma )\right]$ (see e.g. \cite[Lemma 3.5]{MO17}), i.e.
\begin{align}\lim_{\alpha\to 1} D_\alpha(\rho\|\sigma)=\lim_{\alpha\to 1} D_\alpha^*(\rho\|\sigma)=D\left(\rho\|\sigma\right).\label{eq:limit}\end{align}
It is well-known that both $\alpha \mapsto D_\alpha$ and $\alpha \mapsto D_\alpha^*$ are monotone increasing on $(0,\infty)$ (see e.g.~\cite[Lemma 3.12]{MO17}).

\noindent For a classical-quantum state $\rho_{XE} = \sum_{x\in\mathcal{X}} p_X(x) |x\rangle \langle x| \otimes \rho_E^x$, 
we define the following Petz-type and sandwiched type conditional entropy and mutual information:
\begin{align} 
	H^{\downarrow}_{\alpha}(X{\,|\,}E)_\rho
	:=  -D_\alpha\left( \rho_{XE} \| \mathds{1}_X\otimes \rho_E \right) \label{eq:Petz_Renyi_down_con}\ ,\ &H^*_{\alpha}(X{\,|\,}E)_{\rho} = - \inf_{\sigma_E\in \mathcal{S}(\mathcal{H}_E)}D^*_{\alpha}(\rho_{XE}\|\mathds{1}_X\otimes \sigma_E); \\
	I^{\downarrow}_{\alpha}(X{\,:\,}E)_\rho
	:=  D_\alpha\left( \rho_{XB} \| \rho_X \otimes \rho_E \right)\ ,\ 
&I^*_{\alpha}(X{\,:\,}E)_{\rho} = \inf_{\sigma_E\in \mathcal{S}(\mathcal{H}_E)}D^*_{\alpha}(\rho_{XE}\|\rho_X\otimes \sigma_E)\label{eq:Petz_Renyi_down_in}
\end{align}
Similar to \eqref{eq:limit}, both R\'enyi quantities converges to the usual quantum conditional entropy and quantum mutual information, i.e.
\begin{align} \label{eq:alpha1}
	\begin{split}
		&\lim_{\alpha\to 1}	H^{\downarrow}_{\alpha}(X{\,|\,}E)_\rho=\lim_{\alpha\to 1}	H^{*}_{\alpha}(X{\,|\,}E)_\rho=H(X{\,|\,}E)_\rho:=-D(\rho_{XE}||1\ten \rho_E);\\
		&\lim_{\alpha\to 1}	I^{\downarrow}_{\alpha}(X{\,:\,}E)_\rho=\lim_{\alpha\to 1}	I^{*}_{\alpha}(X{\,:\,}E)_\rho=I(X{\,:\,}E)_\rho:=D(\rho_{XE}||\rho_X\ten \rho_E).
	\end{split}
\end{align}
The \emph{relative entropy variance} $V(\rho\|\sigma)$ is defined by
\[
V(\rho\|\sigma): = \Tr(\rho(\log \rho -\log \sigma )^2).
\]
For a c-q state $\rho_{XE}$, the \emph{conditional information variance} $V(X{\,|\,}E)_\rho$ and the  \emph{mutual information variance}
$V(X{\,:\,}E)_{\rho}$ are defined as
\[
V(X{\,|\,}E)_\rho := V(\rho_{XE} \| \mathds{1}_X \otimes \rho_E)
\ ,\  V(X{\,:\,}E)_{\rho} := V(\rho_{XE} \,\|\,\rho_X\otimes \rho_E)\ .\]
For two classical systems $\mathcal{X}$ and $\mathcal{Z}$, the perfectly randomizing channel $\mathcal{U}_{X\to Z}$ from $\mathcal{X}$ to $\mathcal{Z}$ is defined as
\begin{align}
	\mathcal{U}(\theta_X) = \frac{\mathds{1}_Z}{|\mathcal{Z}|}\left(\sum_{x}\theta_X(x)\right).
\end{align}
For positive semi-definite operators $A$ and positive definite operator $B$, we use the short notation 
\begin{align}
\frac{A}{B}:= B^{-\frac{1}{2}}AB^{-\frac{1}{2}}
\end{align}
for the \emph{noncommutative quotient}.

\section{Strong Converse for Privacy Amplification} \label{sec:sc}
In \cite[Theorem 8]{Dup21}, the author prove an one-shot achievability bound of privacy amplification: letting $\rho_{XE}=\sum_{x\in \mathcal{X}}p_X(x)\ket{x}\bra{x}\otimes\rho^x_E$ be a classical-quantum state. 
For any strongly $2$-universal hash functions $h: \mathcal{X}\to \mathcal{Z}$, 
the following holds for all $\alpha \in (1,2]$,
\begin{align}
\frac12\mathds{E}_h \left\|  \left(\mathcal{R}^h - \mathcal{U} \right) \rho_{XE} \right\|_1 \leq \e^{\frac{\alpha-1}{\alpha}(\log|\mathcal{Z}|-H^*_{\alpha}(X{\,|\,}E)_{\rho})}.\label{dupuis}
\end{align}
When $\log |\mathcal{Z}| < H^*_{\alpha}(X{\,|\,}E)_{\rho}$, the exponent is positive, meaning that the trace distance exponentially decays.

Our main result in this section is to establish an one-shot strong converse, showing that the trace distance, however, converges to $1$ exponentially fast whenever $\log |\mathcal{Z}| > H^*_{\alpha}(X{\,|\,}E)_{\rho})$. This, it complements Ref.~\cite{Dup21} in the strong converse regime.

\begin{theo}[One-shot strong converse] \label{theo:sc}
	Let $\rho_{XE}=\sum_{x\in \mathcal{X}}p_X(x)\ket{x}\bra{x}\otimes\rho^x_E$ be a classical-quantum state. For any strongly $2$-universal hash function $h: \mathcal{X}\to \mathcal{Z}$, the following holds for all $\alpha \in (\sfrac12, 1)$,
	\begin{align}
		\frac12\mathds{E}_h \left\|  \left(\mathcal{R}^h - \mathcal{U} \right) \rho_{XE} \right\|_1
		\geq  1  -4 \,\mathrm{e}^{-\frac{1-\alpha}{\alpha} \left( \log |\mathcal{Z}| -  H_{2-\sfrac{1}{\alpha}}^{\downarrow}(X{\,|\,}E)_\rho\right) }.
	\end{align}
	Here, $H_{2-\sfrac{1}{\alpha}}^{\downarrow}$ is defined in \eqref{eq:Petz_Renyi_down_con}.

	Moreover, the exponent $\sup_{\alpha\in(\sfrac12, 1)}
	\frac{1-\alpha}{\alpha} \big( \log |\mathcal{Z}| -  H_{2-\sfrac{1}{\alpha}}^{\downarrow}(X{\,|\,}E)_\rho\big)$ is positive if and only if $\log|\mathcal{Z}|> H(X{\,|\,}E)_\rho$. 
\end{theo}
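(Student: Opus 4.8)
The plan is to prove the one-shot lower bound $\tfrac12\E_h\|(\mathcal R^h-\mathcal U)\rho_{XE}\|_1\ge 1-4\,\e^{-\frac{1-\alpha}{\alpha}(\log|\mathcal Z|-H^\downarrow_{2-1/\alpha}(X|E)_\rho)}$ by reducing the trace distance to a quantity that can be controlled by a Petz--R\'enyi divergence. First I would use the general lower bound on trace distance by testing against a well-chosen operator: for any $0\le\Pi\le\mathds 1$ one has $\|A-B\|_1\ge \Tr[\Pi(A-B)]$, equivalently $\tfrac12\|A-B\|_1\ge \Tr[\Pi A]-\Tr[\Pi B]$ for $0\le \Pi\le \mathds 1$. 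The natural test operator here is (a smoothed / truncated version of) the projector onto the ``collision'' part of $\mathcal R^h(\rho_{XE})$, i.e. the part where two distinct inputs $x\neq x'$ collide under $h$; intuitively, once $\log|\mathcal Z|$ exceeds the conditional entropy, collisions are overwhelming and force $\mathcal R^h(\rho_{XE})$ far from the decoupled state $\mathcal U(\rho_{XE})=\tfrac{\mathds 1_Z}{|\mathcal Z|}\otimes\rho_E$. Concretely, I expect to take $\Pi$ to be the projection onto the support of the ``off-diagonal'' contribution, or to use the operator $\{\mathcal R^h(\rho_{XE})\ge \tfrac{\mathds 1_Z}{|\mathcal Z|}\otimes\rho_E\}$ and bound $\Tr[\Pi\,\mathcal U(\rho_{XE})]$ from above, while showing $\E_h\Tr[\Pi\,\mathcal R^h(\rho_{XE})]$ is close to $1$.

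The key technical step is a second-moment / collision estimate. Using that $h$ is strongly $2$-universal, the expectation over $h$ of the ``collision term'' $\sum_{z}\sum_{x\neq x'\in h^{-1}(z)} p_X(x)p_X(x')\rho_E^x\otimes(\cdot)$ factorizes, and its size is governed by $\tfrac{1}{|\mathcal Z|}\sum_{x,x'}p_X(x)p_X(x')\Tr[\,\cdot\,]$-type quantities, which after inserting the noncommutative quotient against $\rho_E$ become exactly the $\alpha=2$ (or rather the $2-1/\alpha$) collision-entropy quantity $\e^{-H^\downarrow_{2}(X|E)_\rho}$ or a R\'enyi variant. To get the full family $\alpha\in(\sfrac12,1)$, and hence $2-1/\alpha\in(0,1)$, I would not work directly with the $\ell_2$ estimate but instead pass through a data-processing / pinching argument: measure the $Z$ system, reduce to a classical hypothesis-testing-like comparison between $\mathcal R^h(\rho_{XE})$ and $\mathcal U(\rho_{XE})$, and bound the resulting quantity by $D_{2-1/\alpha}(\rho_{XE}\|\mathds 1_X\otimes\rho_E)=-H^\downarrow_{2-1/\alpha}(X|E)_\rho$ using monotonicity of Petz--R\'enyi divergence under the channel $\mathcal R^h$ composed with a projective measurement, together with the explicit evaluation of $D_{2-1/\alpha}$ of $\mathcal R^h(\rho_{XE})$ relative to $\mathcal U(\rho_{XE})$ after averaging over $h$ (this is where the factor $4$ and the exponent $\tfrac{1-\alpha}{\alpha}$ will come out, via a Markov-type tail bound converting the R\'enyi bound into a trace-distance bound of the form $1-4\e^{-\ldots}$).

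For the ``moreover'' clause, the claim is that $\sup_{\alpha\in(\sfrac12,1)}\tfrac{1-\alpha}{\alpha}\big(\log|\mathcal Z|-H^\downarrow_{2-1/\alpha}(X|E)_\rho\big)>0$ iff $\log|\mathcal Z|>H(X|E)_\rho$. Here I would argue as follows. Write $g(\alpha):=\tfrac{1-\alpha}{\alpha}\big(\log|\mathcal Z|-H^\downarrow_{2-1/\alpha}(X|E)_\rho\big)$ for $\alpha\in(\sfrac12,1)$. Since $\alpha\mapsto D_\beta$ is monotone increasing, $\beta\mapsto H^\downarrow_\beta$ is monotone \emph{decreasing}; as $\alpha\uparrow 1$ we have $\beta=2-1/\alpha\uparrow 1$, so $H^\downarrow_{2-1/\alpha}(X|E)_\rho\downarrow H(X|E)_\rho$ by \eqref{eq:alpha1}, while the prefactor $\tfrac{1-\alpha}{\alpha}\downarrow 0$. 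Hence $g(\alpha)\to 0$ as $\alpha\uparrow 1$, and the one-sided derivative/first-order behavior of $g$ at $\alpha=1^-$ is controlled by the sign of $\log|\mathcal Z|-H(X|E)_\rho$: a Taylor expansion of $\beta\mapsto H^\downarrow_\beta$ around $\beta=1$ (its derivative at $1$ being finite, related to $-\tfrac12 V(X|E)_\rho$ up to sign conventions) shows $g(\alpha)=(1-\alpha)\big(\log|\mathcal Z|-H(X|E)_\rho\big)+O((1-\alpha)^2)$, which is strictly positive for $\alpha$ close to $1$ precisely when $\log|\mathcal Z|>H(X|E)_\rho$. Conversely, if $\log|\mathcal Z|\le H(X|E)_\rho$ then since $H^\downarrow_{2-1/\alpha}(X|E)_\rho\ge H(X|E)_\rho$ for $\alpha<1$ (monotonicity, $2-1/\alpha<1$) the bracket $\log|\mathcal Z|-H^\downarrow_{2-1/\alpha}(X|E)_\rho\le 0$ for every such $\alpha$, and with $\tfrac{1-\alpha}{\alpha}>0$ we get $g(\alpha)\le 0$ throughout, so the supremum is $\le 0$.

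The main obstacle I anticipate is the first technical step: getting the \emph{clean} constant ``$4$'' and the exponent $\tfrac{1-\alpha}{\alpha}$ uniformly for all $\alpha\in(\sfrac12,1)$ while only using strong $2$-universality (a pairwise-independence, second-moment property) rather than full independence. Bounding $\E_h\Tr[\Pi\,\mathcal R^h(\rho_{XE})]$ from below requires showing the ``diagonal'' (non-collision) part is the dominant contribution to $\mathcal R^h(\rho_{XE})$ \emph{in trace}, while the collision part is simultaneously large enough relative to $\mathcal U(\rho_{XE})$'s mass on $\Pi$; reconciling these with only a second moment typically forces one through a Paley--Zygmund or truncation argument, and tracking all constants through the noncommutative quotient $\tfrac{A}{B}=B^{-1/2}AB^{-1/2}$ and the pinching map is the delicate part. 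I would isolate this as a self-contained lemma (a ``reverse leftover hash'' estimate) and prove the theorem as its corollary.
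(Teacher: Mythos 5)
Your overall skeleton — lower-bound the trace distance via $\tfrac12\|A-B\|_1\ge \Tr[\Pi A]-\Tr[\Pi B]$ for some well-chosen $\Pi$, then bring in a R\'enyi quantity through a Chernoff-type bound — matches the paper's strategy in outline, and your argument for the ``moreover'' clause (monotonicity of $\beta\mapsto H^\downarrow_\beta$, vanishing prefactor, Taylor expansion at $\alpha=1^-$) is essentially the paper's reasoning, worked out in more detail. However, the core of your technical plan has a directional error and a missing ingredient.

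The directional problem: you propose to bound the complementary quantity by monotonicity (data processing) of the Petz--R\'enyi divergence under $\mathcal R^h$ plus a measurement. But data processing only gives $D_\beta\bigl(\mathcal R^h(\rho_{XE})\,\big\|\,\mathcal U(\rho_{XE})\bigr)\le D_\beta\bigl(\rho_{XE}\,\big\|\,\tau_{XE}\bigr)$ for a suitable input reference $\tau_{XE}$, i.e.\ an \emph{upper} bound on the divergence between the hashed state and the uniform decoupled state. To lower-bound the trace distance you would need a \emph{lower} bound on that divergence; monotonicity goes the wrong way and cannot produce one (indeed, hashing can only bring the state closer to uniform, which is the direct part, not the converse). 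You also mention Paley--Zygmund and a second-moment argument, but you don't show how to make a pairwise-independence estimate survive the operator power $(\cdot)^{1-s}$ for the whole range $s\in(0,1)$, and you flag this obstacle yourself without resolving it.

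What the paper actually does, and what your sketch is missing, is the following. The test operator is not a projector but the noncommutative quotient $\Pi=\dfrac{\mathcal R^h(\rho_{XE})}{\mathcal R^h(\rho_{XE})+\mathcal U(\rho_{XE})}$, a pretty-good-measurement-type operator; the cross term $\Tr\bigl[K(K+L)^{-1/2}L(K+L)^{-1/2}\bigr]$ that appears after dropping the ``correct'' diagonal contribution is then controlled \emph{per input $x$} by a dedicated trace inequality (Lemma~\ref{lemm:trace} in the paper): for positive $K,L$ and any $s\in(0,1)$,
\begin{align}
\Tr\!\left[ K (K+L)^{-\sfrac12} L (K+L)^{-\sfrac12} \right] \leq \Tr\!\left[ K^{1-s} L^s \right],
\end{align}
proved by a Barnum--Knill style Cauchy--Schwarz argument combined with Audenaert \textit{et al.}'s inequality $\Tr\bigl[\tfrac{K+L-|K-L|}{2}\bigr]\le\Tr[K^{1-s}L^s]$. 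Only \emph{after} this pointwise reduction to $\Tr[K^{1-s}L^s]$ does the pairwise independence of $h$ enter, through Jensen's inequality and operator concavity of $t\mapsto t^s$ applied to the ``collision plus reference'' operator $\sum_{x'\ne x}\rho_{h(x')E}+\pi_Z\otimes\rho_E\,$. This is what produces the factor $(2/|\mathcal Z|)^s$ and, setting $\alpha=1/(1+s)$, the exponent $\tfrac{1-\alpha}{\alpha}\bigl(\log|\mathcal Z|-H^\downarrow_{2-1/\alpha}\bigr)$ with the constant $4$ (one factor $2$ from the two terms, one from $(2/|\mathcal Z|)^s\le 2/|\mathcal Z|^s$). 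Without a trace inequality of this kind, the second-moment intuition does not reach the full $\alpha\in(\tfrac12,1)$ family, and the monotonicity route you propose cannot supply the lower bound you need.
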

Using the additivity of the Petz--R\'enyi divergence, $D_\alpha$, our one-shot result easily applies to the independent and identically distributed (i.i.d.)~case where Alice and Eve holds state $\rho_{XE}^{\otimes n}$ with $|\mathcal{Z}^n| = \e^{nR}$.
Moreover, our result holds for every finite blocklength $n\in\mathds{N}$.
\begin{coro}[Finite-blocklength exponential strong converse] \label{coro:sc}
Let $\rho_{XE}=\sum_{x\in \mathcal{X}}p_X(x)\ket{x}\bra{x}\otimes\rho^x_E$ be a classical-quantum state. Then for every $n\in \mathbb{N}$, the rate $R=\frac{1}{n}\log  |\mathcal{Z}|$ and a strongly $2$-universal hash $h^n: \mathcal{X}^n\to \mathcal{Z}^n$ be a strongly $2$-universal hash function,
\begin{align}	\frac12\mathds{E}_h \left\|  \left(\mathcal{R}^{h^n} - \mathcal{U}^n \right) \rho_{XE}^{\ten n} \right\|_1
\geq 1 -  4 \, \mathrm{e}^{-n \frac{1-\alpha}{\alpha} \left( R -  H_{2-\sfrac{1}{\alpha}}^{\downarrow}(X{\,|\,}E)_\rho  \right) }\ ,\ \ \ \  \alpha\in(\sfrac12, 1),
\end{align}
where $\mathcal{U}^n$ is the perfectly randomizing channel from $\mathcal{X}^n$ to $\mathcal{Z}^n$. 
The above trace distance converges to $1$ exponentially fast for every $n\in\mathds{N}$ when the rate $R > H(X{\,|\,}E)_\rho$.
\end{coro}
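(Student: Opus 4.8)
The plan is to prove the inequality for a fixed R\'enyi parameter $\beta:=2-\sfrac{1}{\alpha}\in(0,1)$ and then optimise over $\alpha$; throughout I bound $1-\varepsilon_\text{PA}$ from above. Write $\tilde\rho_E^{h,z}:=\sum_{x\in h^{-1}(z)}p_X(x)\rho_E^x$, so that $\mathcal{R}^h(\rho_{XE})=\sum_z|z\rangle\langle z|\otimes\tilde\rho_E^{h,z}$ and $\mathcal{U}(\rho_{XE})=\sum_z|z\rangle\langle z|\otimes(\rho_E/|\mathcal{Z}|)$. Since these two c-q states share the classical register $Z$, the trace norm splits over $z$, and a short computation gives, for each hash $h$,
\begin{align}
	1-\tfrac12\left\|(\mathcal{R}^h-\mathcal{U})\rho_{XE}\right\|_1=\sum_{z}\tfrac12\left(\Tr\tilde\rho_E^{h,z}+\tfrac{1}{|\mathcal{Z}|}-\left\|\tilde\rho_E^{h,z}-\tfrac{1}{|\mathcal{Z}|}\rho_E\right\|_1\right).
\end{align}
To each block I would apply the quantum Chernoff / Audenaert-type inequality $\tfrac12(\Tr A+\Tr B-\|A-B\|_1)\le\Tr[A^\beta B^{1-\beta}]$, valid for $A,B\ge0$ and $\beta\in(0,1)$, obtaining the one-hash estimate
\begin{align}
	1-\tfrac12\left\|(\mathcal{R}^h-\mathcal{U})\rho_{XE}\right\|_1\ \le\ |\mathcal{Z}|^{-(1-\beta)}\sum_{z}\Tr\left[(\tilde\rho_E^{h,z})^\beta\rho_E^{1-\beta}\right].
\end{align}
(The endpoint $\beta\downarrow0$ is just the test ``project onto $\supp\mathcal{R}^h(\rho_{XE})$'', already giving the $H_0^{\downarrow}$ bound; general $\beta$ is needed only for the optimisation in $\alpha$.)

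The heart of the proof is to bound the $h$-average of the right-hand side using strong $2$-universality, i.e.\ pairwise independence $\Pr_h\{h(x)=h(x')\}=1/|\mathcal{Z}|$ for $x\neq x'$; concretely I would show
\begin{align}
	\mathds{E}_h\sum_{z}\Tr\left[(\tilde\rho_E^{h,z})^\beta\rho_E^{1-\beta}\right]\ \le\ 4\sum_{x}p_X(x)^\beta\Tr\left[(\rho_E^x)^\beta\rho_E^{1-\beta}\right].
\end{align}
I expect this to be the main obstacle. In the \emph{classical} case (all $\rho_E^x$ commuting with $\rho_E$) it is immediate and holds with constant $1$ for \emph{every} $h$: the scalar subadditivity $(\sum_x a_x)^\beta\le\sum_x a_x^\beta$ applied eigenvalue-by-eigenvalue collapses $\sum_z(\sum_{x\in h^{-1}(z)}\cdots)^\beta$ into $\sum_x(\cdots)^\beta$. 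Quantum-mechanically $t\mapsto t^\beta$ is not operator-subadditive, and in fact $\Tr[(\sum_x A_x)^\beta C^{1-\beta}]\le\sum_x\Tr[A_x^\beta C^{1-\beta}]$ can fail, so the hash-by-hash argument breaks down and one must genuinely average. The repair exploits that, after averaging, the operators $\tilde\rho_E^{h,z}$ concentrate: pairwise independence controls the ``collision'' second moments $\mathds{E}_h\sum_z\tilde\rho_E^{h,z}\otimes\tilde\rho_E^{h,z}$, while the first-order part collapses via $\sum_z\tilde\rho_E^{h,z}=\rho_E$. I would implement this through the integral representation $t^\beta=\tfrac{\sin\pi\beta}{\pi}\int_0^\infty s^{\beta-1}\,t(t+s)^{-1}\,ds$ together with resolvent estimates, decomposing the non-commuting (off-diagonal in the eigenbasis of $\rho_E$) contributions into positive parts; it is exactly this handling of the quantum cross-terms that costs the universal constant $4$ in place of $1$.

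To finish, identify $\sum_x p_X(x)^\beta\Tr[(\rho_E^x)^\beta\rho_E^{1-\beta}]=\Tr[\rho_{XE}^\beta(\mathds{1}_X\otimes\rho_E)^{1-\beta}]=\mathrm{e}^{(1-\beta)H_\beta^{\downarrow}(X{\,|\,}E)_\rho}$, which is just the definition of $H_\beta^{\downarrow}=-D_\beta(\rho_{XE}\|\mathds{1}_X\otimes\rho_E)$. Chaining the three displays and taking $\mathds{E}_h$ gives $\varepsilon_\text{PA}\ge1-4\,\mathrm{e}^{-(1-\beta)(\log|\mathcal{Z}|-H_\beta^{\downarrow}(X{\,|\,}E)_\rho)}$, and substituting $\beta=2-\sfrac{1}{\alpha}$, $1-\beta=\tfrac{1-\alpha}{\alpha}$ yields the stated bound for each $\alpha\in(\sfrac12,1)$. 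For the ``moreover'' part, set $g(\alpha):=\tfrac{1-\alpha}{\alpha}\big(\log|\mathcal{Z}|-H_{2-\sfrac{1}{\alpha}}^{\downarrow}(X{\,|\,}E)_\rho\big)$. Since $\beta\mapsto D_\beta$ is nondecreasing, $\beta\mapsto H_\beta^{\downarrow}$ is nonincreasing, hence $H_{2-\sfrac{1}{\alpha}}^{\downarrow}\ge H_1^{\downarrow}=H(X{\,|\,}E)_\rho$ for all $\alpha\in(\sfrac12,1)$, so $\log|\mathcal{Z}|\le H(X{\,|\,}E)_\rho$ forces $g\le0$ throughout; conversely, using $\lim_{\alpha\to1}H_{2-\sfrac{1}{\alpha}}^{\downarrow}=H(X{\,|\,}E)_\rho$ and $\tfrac{d}{d\alpha}\tfrac{1-\alpha}{\alpha}\big|_{\alpha=1}=-1$, a first-order expansion gives $g(\alpha)=(1-\alpha)\big(\log|\mathcal{Z}|-H(X{\,|\,}E)_\rho\big)+O((1-\alpha)^2)$ as $\alpha\to1^-$, so $\log|\mathcal{Z}|>H(X{\,|\,}E)_\rho$ makes $g>0$ just below $1$ and hence $\sup_\alpha g>0$. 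Finally, Corollary~\ref{coro:sc} follows at once by applying Theorem~\ref{theo:sc} to $\rho_{XE}^{\otimes n}$ with $|\mathcal{Z}^n|=\mathrm{e}^{nR}$ and using additivity, $D_\beta(\rho_{XE}^{\otimes n}\|(\mathds{1}_X\otimes\rho_E)^{\otimes n})=n\,D_\beta(\rho_{XE}\|\mathds{1}_X\otimes\rho_E)$, i.e.\ $H_\beta^{\downarrow}(X^n{\,|\,}E^n)_{\rho^{\otimes n}}=n\,H_\beta^{\downarrow}(X{\,|\,}E)_\rho$.
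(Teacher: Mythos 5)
Your overall skeleton is sound, and both endpoints of your chain (the per-hash Audenaert step, and the deduction of the corollary from the one-shot theorem via additivity of $D_\beta$) are correct. The genuine gap is exactly where you flag it: the hash-average estimate
\begin{align}
	\mathds{E}_h\sum_{z}\Tr\left[(\tilde\rho_E^{h,z})^\beta\,\rho_E^{1-\beta}\right]\ \le\ 4\sum_{x}p_X(x)^\beta\,\Tr\left[(\rho_E^x)^\beta\,\rho_E^{1-\beta}\right]
\end{align}
is never proved, and the proposed repair (integral representation of $t^\beta$ plus resolvent/off-diagonal bookkeeping yielding a universal constant $4$) is a research sketch rather than an argument. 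The difficulty is real: once you take $(\tilde\rho_E^{h,z})^\beta$ with $\tilde\rho_E^{h,z}=\sum_{x\in h^{-1}(z)}p(x)\rho_E^x$, you are asking for a weighted trace form of operator subadditivity of $t\mapsto t^\beta$ against the non-commuting reference $\rho_E^{1-\beta}$; neither operator concavity nor Rotfel'd-type trace subadditivity $\Tr[(A+B)^\beta]\le\Tr[A^\beta]+\Tr[B^\beta]$ delivers this with the $\rho_E^{1-\beta}$ insertion, and pairwise independence only controls second-moment quantities, which do not obviously reconstruct a fractional operator power. So as written, the middle link of your chain does not hold up.

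The paper takes a route that avoids this issue altogether. Instead of applying Audenaert's inequality to the whole block $A=\tilde\rho_E^{h,z}$, it lower-bounds the trace distance by testing against the noncommutative-quotient measurement $\Pi=\mathcal{R}^h(\rho_{XE})\big(\mathcal{R}^h(\rho_{XE})+\mathcal{U}(\rho_{XE})\big)^{-1}$, then splits \emph{inside} the quotient term by term, keeping a single $K=\rho_{h(x)E}=p(x)\ket{h(x)}\bra{h(x)}\otimes\rho_E^x$ on one side and lumping everything else into $L$. The Barnum--Knill-type inequality (Lemma~\ref{lemm:trace}, which combines with Audenaert's $\Tr\big[\tfrac{K+L-|K-L|}{2}\big]\le\Tr[K^{1-s}L^s]$) is therefore applied with a \emph{single} summand as $K$, so the fractional power only ever acts on a rank-structured $p(x)\ket{h(x)}\bra{h(x)}\otimes\rho_E^x$ and no quantum subadditivity of $(\cdot)^\beta$ over the hash buckets is needed. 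Pairwise independence and operator concavity/monotonicity of $(\cdot)^s$ then enter through the conditional expectation $\mathds{E}_{h\mid h(x)=z}$, not through concentrating $(\tilde\rho_E^{h,z})^\beta$. The constant $4$ comes from the innocuous bookkeeping $2\cdot 2^s\le 4$ (two symmetric terms times a factor $2^s$ from $\rho_E+\rho_E$), not from any quantum cross-term control as you conjectured. To fix your proof you would either need to actually establish the displayed hash-average bound (nontrivial and, to my knowledge, not in the literature with these constants), or restructure the argument along the paper's term-by-term quotient route.
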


Before proving Theorem~\ref{theo:sc}, we first introduce a Lemma that will be used in the proof.

\label{sec:lemmas}
\begin{lemm}[A trace inequality] \label{lemm:trace}
	For non-zero positive semi-definite operators $K$ and $L$ and any $s\in(0,1)$, the following holds,
	\begin{align}
		\Tr\left[ K (K+L)^{-\sfrac12} L (K+L)^{-\sfrac12} \right] \leq \Tr\left[ K^{1-s} L^s \right].
	\end{align}
\end{lemm}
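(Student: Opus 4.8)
The plan is to prove the inequality $\Tr[K(K+L)^{-\sfrac12}L(K+L)^{-\sfrac12}] \leq \Tr[K^{1-s}L^s]$ by reducing it, via a substitution, to a known operator-convexity fact. First I would set $M := K+L$ and work with the ``sandwiched'' operators $\tilde K := M^{-\sfrac12}KM^{-\sfrac12}$ and $\tilde L := M^{-\sfrac12}LM^{-\sfrac12}$, which are positive semidefinite with $\tilde K + \tilde L = \Pi$, the projection onto $\supp(M)$; on that support we may assume $\tilde K + \tilde L = \mathds{1}$. The left-hand side is exactly $\Tr[M\tilde K M^{\sfrac12}\cdot\text{(something)}]$—more cleanly, a short manipulation gives $\Tr[K(K+L)^{-\sfrac12}L(K+L)^{-\sfrac12}] = \Tr[M^{\sfrac12}\tilde K M^{\sfrac12}(K+L)^{-\sfrac12}L(K+L)^{-\sfrac12}]$, but the honest route is: write $K = M^{\sfrac12}\tilde K M^{\sfrac12}$ and $L = M^{\sfrac12}\tilde L M^{\sfrac12}$, plug in, and use cyclicity of the trace to obtain $\Tr[\tilde K M \tilde L \cdot \mathds{1}]$ is \emph{not} quite right either; so the first real step is to carefully rewrite the LHS. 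In fact $\Tr[K(K+L)^{-\sfrac12}L(K+L)^{-\sfrac12}] = \Tr\big[M^{\sfrac12}\tilde K M^{\sfrac12}M^{-\sfrac12}M^{\sfrac12}\tilde L M^{\sfrac12}M^{-\sfrac12}\big]$ does not collapse, because the inner $M^{\sfrac12}$ and $M^{-\sfrac12}$ are adjacent only after one cyclic move. Doing that cyclic move: the LHS equals $\Tr[\tilde K M^{\sfrac12}\tilde L M^{\sfrac12}]$. This is the clean reformulation I want.

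Next, with $\tilde K + \tilde L = \mathds{1}$ on $\supp M$, set $A := \tilde K$, so $\tilde L = \mathds{1}-A$ with $0\le A\le \mathds{1}$, and $M\ge 0$. The goal becomes
\[
\Tr\big[A\, M^{\sfrac12}(\mathds{1}-A)M^{\sfrac12}\big] \le \Tr\big[(M^{\sfrac12}AM^{\sfrac12})^{1-s}(M^{\sfrac12}(\mathds{1}-A)M^{\sfrac12})^{s}\big],
\]
since $K = M^{\sfrac12}AM^{\sfrac12}$ and $L = M^{\sfrac12}(\mathds{1}-A)M^{\sfrac12}$. The right-hand side is a Petz-type $\Tr[K^{1-s}L^s]$, which by Araki–Lieb–Thirring / the joint-concavity results of Lieb is bounded below by expressions of the form $\Tr[K^{1-s}L^s]\ge \Tr[K \,(\text{a noncommutative quotient of }L\text{ by }K+L)]$. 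Concretely, I would invoke the variational/operator-Jensen characterization: for $s\in(0,1)$ the map $(X,Y)\mapsto \Tr[X^{1-s}Y^{s}]$ is jointly concave (Lieb's concavity theorem), and evaluating the natural ``tangent'' at the commuting point gives the quotient expression on the left. Alternatively—and this is the route I'd actually push—use the integral representation $\lambda^{1-s} = \frac{\sin(\pi s)}{\pi}\int_0^\infty \frac{\lambda}{\lambda+t}\,t^{-s}\,\d t$ applied to $K$ relative to $K+L$, which yields $\Tr[K^{1-s}(K+L)^s]\ge\text{LHS}$-type bounds after a Cauchy–Schwarz step, and then relate $\Tr[K^{1-s}(K+L)^s]$ to $\Tr[K^{1-s}L^s]$ by monotonicity in the second slot.

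The cleanest self-contained argument, which I would write up, is: apply the operator inequality $X(X+Y)^{-\sfrac12}Y(X+Y)^{-\sfrac12}\preceq$ (something) is false as an operator inequality, so instead I use that for the scalar function this is just $\frac{xy}{x+y}\le x^{1-s}y^s$ for $x,y>0$, $s\in(0,1)$ (AM–GM/weighted-means inequality), and then lift it: the LHS trace $\Tr[\tilde K M^{\sfrac12}\tilde L M^{\sfrac12}]$ can, after simultaneously processing through the spectral decomposition of $M$ and a pinching/Cauchy–Schwarz argument in the eigenbasis of $M$, be dominated termwise by the corresponding bilinear form for $\Tr[K^{1-s}L^s]$ using Lieb's concavity theorem to handle the non-commutativity of $K$, $L$, and $M$.

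The main obstacle, and where I expect to spend the most care, is precisely this non-commutativity: $K$, $L$, and $K+L$ do not commute, so the scalar inequality $\frac{xy}{x+y}\le x^{1-s}y^s$ does not lift by naive functional calculus. The decisive tool will be Lieb's concavity theorem (concavity of $(X,Y)\mapsto\Tr[K^\ast X^{1-s}K Y^{s}]$) or equivalently the Araki–Lieb–Thirring inequality together with monotonicity of $t\mapsto t^r$ for operator means; getting the tangent-line/variational step to produce exactly the noncommutative quotient $\Tr[K(K+L)^{-\sfrac12}L(K+L)^{-\sfrac12}]$ on the correct side of the inequality is the crux. I would verify the $s\to 0$ and $s\to 1$ limits and the commuting case as sanity checks before committing to the general argument.
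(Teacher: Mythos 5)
Your algebraic reformulation is correct: with $M:=K+L$ and $\tilde K:=M^{-\sfrac12}KM^{-\sfrac12}$, $\tilde L:=M^{-\sfrac12}LM^{-\sfrac12}$, the left-hand side does equal $\Tr[\tilde K M^{\sfrac12}\tilde L M^{\sfrac12}]$. But this is an invertible change of variables, so the target inequality
\[
\Tr\big[A\,M^{\sfrac12}(\mathds{1}-A)M^{\sfrac12}\big]\le\Tr\big[(M^{\sfrac12}AM^{\sfrac12})^{1-s}(M^{\sfrac12}(\mathds{1}-A)M^{\sfrac12})^{s}\big]
\]
with $A=\tilde K$ is literally the original statement restated — you have not made progress, you have only renamed the variables. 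From that point on, the proposal is a list of candidate tools (Lieb concavity, Araki--Lieb--Thirring, integral representation of $\lambda^{1-s}$, pinching, Cauchy--Schwarz in the eigenbasis of $M$) without any one of them carried to completion; each is introduced with phrases like ``I would invoke,'' ``I'd actually push,'' ``after \dots a pinching/Cauchy--Schwarz argument,'' and the text explicitly concedes at the end that ``getting the tangent-line/variational step to produce exactly the noncommutative quotient \dots is the crux.'' You correctly identify the crux and then stop. That is a genuine gap: there is no proof here, only a plan whose hardest step is left open.

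The missing idea is a two-step decomposition through the \emph{quantum Chernoff quantity} $\Tr\!\left[\tfrac{K+L-|K-L|}{2}\right]$. One shows first that
\[
\Tr\big[K(K+L)^{-\sfrac12}L(K+L)^{-\sfrac12}\big]\le \Tr\!\left[\frac{K+L-|K-L|}{2}\right],
\]
by a Barnum--Knill-type argument: write the right-hand side as $\Tr[K\Pi_0]+\Tr[L\Pi_1]$ for the Holevo--Helstrom projections $\Pi_0=\{K\ge L\}$, $\Pi_1=\mathds{1}-\Pi_0$, and bound this above with two applications of Cauchy--Schwarz involving $(K+L)^{\pm\sfrac14}$-weighted Hilbert--Schmidt inner products. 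Then one invokes Audenaert \emph{et al.}'s quantum Chernoff bound, $\Tr\!\left[\tfrac{K+L-|K-L|}{2}\right]\le\Tr[K^{1-s}L^{s}]$ for all $s\in(0,1)$, to finish. Your proposal never introduces $\Tr\!\left[\tfrac{K+L-|K-L|}{2}\right]$ as a middleman, and that is precisely what lets one avoid the ``tangent-line'' computation you flag as the obstacle: the trace-norm quantity is what both sides naturally compare against, and the scalar inequality $\frac{xy}{x+y}\le\min(x,y)\le x^{1-s}y^{s}$ gets lifted through this intermediary rather than directly. Without identifying this pivot, neither Lieb concavity nor ALT on its own will close the argument, and the hand-waved pinching step would at best reproduce the difficulty you already acknowledge.
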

We defer the proof of Lemma~\ref{lemm:trace} to Appendix~\ref{sec:proof_trace}. Based on that, we shall now prove Theorem~\ref{theo:sc}.
\begin{proof}[Proof of Theorem~\ref{theo:sc}]
For the ease of notation, we shorthand $p \equiv p_X$, each $\rho_x \equiv \rho_E^x$ and
introduce the notation 
\begin{align}
 \rho_{h(x)E} &:=p(x)\ket{h(x)}\bra{h(x)}\ten \rho_x\,; \\
 \pi_Z &:=\frac{\mathds{1}_{\mathcal{Z}}}{|\mathcal{Z}|}\,.
\end{align}
Then
	\begin{align}
	\mathcal{R}^h(\rho_{XE}) &=\sum_{x\in\mathcal{X}}\rho_{h(x)E}\ ;\\  \mathcal{U}(\rho_{XE})&=\pi_Z\ten \rho_E\pl. 
	\end{align}
	Take the measurement
	\begin{align}
		\Pi = \frac{\mathcal{R}^h(\rho_{XE})}{\mathcal{R}^h(\rho_{XE})+\mathcal{U}(\rho_{XE})}=\frac{\sum_{x}\rho_{h(x)E}}{\sum_{x}\rho_{h(x)E}+\pi_Z\ten \rho_E}\ .
	\end{align}
	Recall the  duality that for positive matrices $A$ and $B$ with $\tr(A)=\tr(B)$ \cite[\S 9]{NC09},
	\begin{align}
		\frac{1}{2}\left\|A- B\right\|_1 = \sup_{0\leq \Pi \leq \mathds{1}} \Tr[\Pi(A-B)].
	\end{align}
	Then, we have
	\begin{align}
		\frac{1}{2}\mathbb{E}_{h} \left\| (\mathcal{R}^h-\mathcal{U})(\rho_{XE})\right\|_1 
		&\geq \mathbb{E}_{h} \Tr\left[\big(\mathcal{R}^h(\rho_{XE})-\mathcal{U})(\rho_{XE})\big) \Pi\right]
		\\
		&=\mathbb{E}_{h} \Tr\left[ \mathcal{R}^h(\rho_{XE})\Pi\right] - \mathbb{E}_{h} \Tr\left[ \mathcal{U}(\rho_{XE})\Pi\right].\label{21}
	\end{align}
	The first term is bounded by
	\begin{align}
		\mathbb{E}_{h} \Tr\left[\mathcal{R}^h(\rho_{XE})\Pi\right]
		&= \mathbb{E}_{h}\sum_{x\in \mathcal{X}}\Tr\left[\rho_{h(x)E}
		\frac{\sum_{x'}\rho_{h(x')E}}{\sum_{x'}\rho_{h(x')E}+\pi_Z\ten \rho_E}\right]\\
		&\geq  \mathbb{E}_{h}\sum_{x\in \mathcal{X}}\Tr\left[\rho_{h(x)E}
		\frac{\rho_{h(x)E}}{\sum_{x'}\rho_{h(x')E}+\pi_Z\ten \rho_E}\right]\\
		&= 1 - \mathbb{E}_{h}\sum_{x\in \mathcal{X}}\Tr\left[\rho_{h(x)E}
		\frac{\sum_{x'\neq x}\rho_{h(x')E}+\pi_Z\ten \rho_E}{\sum_{x'}\rho_{h(x')E}+\pi_Z\ten \rho_E}\right]\\
		&\overset{(a)}{\geq} 1-\sum_{x\in \mathcal{X}}\mathbb{E}_{h}\Tr\left[\rho_{h(x)E}^{1-s}\left(\sum_{x'\neq x}\rho_{h(x')E}+\pi_Z\ten \rho_E\right)^s\right]
		\\
		&\overset{(b)}{\geq} 1-\left(\frac{2}{|\mathcal{Z}|}\right)^s\sum_{x\in \mathcal{X}}\Tr\left[\left(p(x)\rho_x\right)^{1-s}
		\left(\rho_E \right)^s\right], \quad \forall s\in(0,1), \label{eq:sc_1}
	\end{align}
	where (a) follows from Lemma \ref{lemm:trace}
	with $K = \rho_{h(x)E}$ and $L=\sum_{x'}\rho_{h(x')E}+\pi_Z\ten \rho_E$.
	We explain inequality (b) as follows. Recal that the hash function is basically a family of pairwise-independent random variables $h(x),x\in \mathcal{X}$ such that for any $z,z'\in \mathcal{Z}$ and $x'\in\mathcal{X}$,
\begin{align}\text{Pr}(h(x)=z\wedge h(x')=z')=\frac{1}{|\mathcal{Z}|^2}\ . \label{eq:inde}\end{align}
Given $x\in \mathcal{X}$ and $z\in \mathcal{Z}$, we denote $\mathbb{E}_{h|h(x)=z}$ as the expectation of $h(x')$ for each $x'\neq x$ conditional on $h(x)=z$.
Then for each $x\in \mathcal{X}$,
	\begin{align}
		&\mathbb{E}_{h}\tr\left[\rho_{h(x)E}^{1-s}\left(\sum_{x'\neq x}\rho_{h(x')E}+\pi_Z\ten \rho_E\right)^s\right]\\
		 &=\sum_{z}\text{Pr}(h(x)=z) \mathbb{E}_{h|h(x)=z} \tr\left[\left(\rho_{zE}^{1-s}\right)\left(\sum_{x'\neq x}\rho_{h(x')E}+\pi_Z\ten \rho_E\right)^s\right]\\
		&\overset{(a)}{=}\sum_{z}\text{Pr}(h(x)=z) \tr\left[\left(\rho_{zE}^{1-s}\right)\mathbb{E}_{h|h(x)=z}\left[\left(\sum_{x'\neq x}\rho_{h(x')E}+\pi_Z\ten \rho_E\right)^s\right]\right]
		\\ 
		&\overset{(b)}{\le} \sum_{z}\text{Pr}(h(x)=z)\tr\left[\left(\rho_{zE}^{1-s}\right)\left(\mathbb{E}_{h|h(x)=z} \left[\sum_{x'\neq x}\rho_{h(x')E}+\pi_Z\ten \rho_E\right]\right)^s\right]
		\\ 
		&\overset{(c)}{\le} \tr\left[\left(\mathbb{E}_{h(x)}\left[ \rho_{h(x)E}^{1-s}\right]\right)\left(\sum_{x'\neq x} \pi_Z\ten p(x')\rho_{x'}+\pi_Z\ten \rho_E\right)^s\right]
		\\ 		
		&\overset{(d)}{\le} \tr\left[\left(\mathbb{E}_{h(x)}\left[ \rho_{h(x)E}^{1-s}\right]\right)\left(\pi_Z\ten \rho_E+\pi_Z\ten \rho_E\right)^s\right]
		\\ 
		&\overset{}{=} \tr\left[\left(\mathbb{E}_{h(x)} \left[ \ket{h(x)}\bra{h(x)}\ten \left(p(x)\rho_x\right)^{1-s}\right]\right)\left(2\pi_Z\ten \rho_E\right)^s\right]
		\\ 
		&= \left(\frac{2}{|\mathcal{Z}|}\right)^s\tr\left[\left(p(x)\rho_x\right)^{1-s}\left(\rho_E\right)^s\right].
	\end{align}
	Here, (a) follows linearity of trace, 
	(b) follows from Jensen's inequality and the operator concavity of power function $(\cdot)^s$ for $s\in(0,1)$;
	(c) follows from the pairwise independence \eqref{eq:inde} between $h(x)$ and $h(x')$ that $x\neq x'$ and the uniformity of $h$:
	\begin{align}&\mathbb{E}_{h\mid h(x)=z} \left[\rho_{h(x')E}\right]  =
	\mathbb{E}_{h} \left[\rho_{h(x')E}\right]
	= p(x')\pi_Z\ten \rho_{x'}\ ;
	\end{align}
	(d) follows from the operator monotonicity of power function $(\cdot)^s$ for $s\in(0,1)$ and
	\begin{align*}
	\sum_{x'\neq x}p(x')\pi_Z\ten \rho_{x'} \leq \sum_{x' \in \mathcal{X} }p(x')\pi_Z\ten \rho_{x'}
	= \pi_Z\ten \rho_E\ .
	\end{align*}
	Hence, inequality~\eqref{eq:sc_1} is proved.
	
	The second part can be bounded similarly:
	\begin{align}
		\mathbb{E}_{h} \Tr\left[(\pi_Z\ten \rho_E)\Pi\right]
		&=\mathbb{E}_{h} \Tr\left[(\pi_Z\ten \rho_E)\frac{\sum_{x}\rho_{h(x)E}}{\sum_{x}\rho_{h(x)E}+\pi_Z\ten \rho_E}\right]
		\\
		&=\mathbb{E}_{h} \sum_{x\in\mathcal{X}}\Tr\left[\rho_{h(x)E}\frac{(\pi_Z\ten \rho_E)}{\sum_{x'}\rho_{h(x')E}+\pi_Z\ten \rho_E}\right]
		\\
		&\le \mathbb{E}_{h} \sum_{x\in\mathcal{X}}\Tr\left[\rho_{h(x)E}\frac{(\sum_{x'\neq x}\rho_{h(x')E}+\pi_Z\ten \rho_E)}{\sum_{x'}\rho_{h(x')E}+\pi_Z\ten \rho_E}\right]
		\\
		&\le \sum_{x\in\mathcal{X}}\mathbb{E}_{h} \Tr\left[\rho_{h(x)E}^{1-s} \left(\sum_{x'\neq x}\rho_{h(x')E}+\pi_Z\ten \rho_E\right)^s\right]
		\\
		&\le \left(\frac{2}{|\mathcal{Z}|}\right)^s\sum_{x\in\mathcal{X}}\tr\left[\left(p(x)\rho_{E}^x\right)^{1-s}\left(\rho_E\right)^s\right], \quad \forall s\in(0,1).
	\end{align}
	Therefore, by choosing $\alpha=\frac{1}{1+s} \in (\sfrac{1}{2},1)$, we have
	\begin{align}
		\frac{1}{2}\mathbb{E}_{h} \left\| (\mathcal{R}^h-\mathcal{U})(\rho_{XE})\right\|_1 
		&\geq  1- 2\left(\frac{2}{|\mathcal{Z}|}\right)^s\sum_{x\in \mathcal{X}}\Tr\left[\left(p(x)\rho^x_E\right)^{1-s}(\rho_E)^s\right]\\
		&\geq 1-4\e^{-\frac{1-\alpha}{\alpha}\left(\log|\mathcal{Z}|-H^{\downarrow}_{2-\sfrac{1}{\alpha}}(X{\,|\,}E)\right)}.
	\end{align}
The positivity of the exponent follows from the monotone decreasing of $\alpha\mapsto H_{2-\sfrac{1}{\alpha}}^\downarrow$ and \eqref{eq:alpha1}.
\end{proof}

\section{Application: classical-quantum wiretap channel coding}\label{sec:c-q}
We now apply our strong converse Theorem \ref{theo:sc} as well as Dupuis' achievability result \cite{Dup21} to estimate the information leak to the eavesdropper in communication via a classical-quantum (c-q) wiretap channel.
Recall that a c-q wiretap channel $\mathcal{N}_{X\to BE}(\cdot)$ from a classical system $\mathcal{X}$ to the joint quantum system $BE$ is defined as follows,
\begin{align}
	\mathcal{N}_{X\to BE}(\cdot) &:= \sum_{x\in\mathcal{X}} \langle x| \cdot |x\rangle\sigma_{BE}^x\, .
\end{align}
If Alice sends a classical symbol $x\in \mathcal{X}$, the channel output states received by Bob and Eve are respectively the marginal states $\sigma_B^x \in \mathcal{S}(\mathcal{H}_B)$ and $\sigma_E^x \in \mathcal{S}(\mathcal{H}_E)$. The goal of Alice is to transmit classical messages from a message set $[M]$ to Bob over a c-q wiretap channel, without leaking too much information to the environment, or simply Eve.
%
%

In the protocol, we need two additional ingredients: first, a random codebook subject to $p_X$ on $\mathcal{X}$, where $p_X$ can be arbitrarily chosen \textit{a priori}; second, a strongly $2$-universal hash function $h: [ML] \to [M]$ accessible by both Alice and Bob such that the probability of it being \emph{balanced}\footnote{A hash function $h:[ML] \to [M]$ is balanced if for any $m\in [M]$, $\left|\{k\in[ML]:h(k)=m\}\right| = L$.} is at least $(1-1/ML) $, which we called an \emph{almost balanced strongly $2$-universal hash function}. The existence of latter can be obtained via a standard construction of hash function. The following presents one of such constructions.
\begin{cons}\label{cons}
To map from $[ML] \equiv \{0,1\}^u$ to $[M] \equiv \{0,1\}^v$, we identify $[ML]$ with the Galois field $GF(2^u)$ in the natural way. 
Pick two (uniformly) random numbers $a,b\in GF(2^u)$. For any $x\in [ML]$, define
\begin{align}
h(x) := [ax+b]_{v}
\end{align}
where the calculation $ax+b$ is done over the field $GF(2^u)$ and $[y]_v$ denotes the first $v$ bits of $y\in GF(2^u)\cong \{0,1\}^u$. It is clear that when $a$ is nonzero (hence invertible in a field), 
\[ |\{[ax+b]_{v}=z \}|=|\{ [y]_{v}=z\}|=2^{u-v}=L\pl.\]
\end{cons}
Hence construction \ref{cons} is strongly 2-universal and it has the probability of $(1-\frac{1}{ML})$ being balanced, and the probability of $\frac{1}{ML}$ being a uniform distribution independent of each $x$. In our protocol, the following steps are observed when the hash function is balanced.
\begin{enumerate}[label=\arabic*.]
    \item\label{label:1} Alice uniformly chooses a message $m\in[M]$ to send, i.e.~the state at Alice is
    \begin{align}
        \rho_M = \frac1M\sum_{m\in[M]} |m\rangle \langle m|\label{rhom}\ .
    \end{align}

    \item\label{label:2} Alice picks a random hash function $h$ in the almost balanced strongly $2$-universal family, and applies its reverse function $\mathcal{R}_h^{-1}$ on her message $m\in[M]$ as follows: for every $m\in[M]$, $\mathcal{R}_h^{-1}(m) = k\in{[ML]}$ is uniform for $\{k : h(k)=m \}$. 
    Alice's state at this step is
    \begin{align}
        \rho_{ML} = \frac{1}{ML}\sum_{k\in[ML]} \ket{k}\bra{k}\label{rhoml}.
    \end{align}

    \item\label{label:3} For each message $k\in [ML]$, Alice generates the codeword $\mathcal{C}(k) = x_{k} \in \mathcal{X}$ under distribution $p_X$ and announces it publicly.
    Namely, Alice's encoder is described by $\mathcal{E} := \mathcal{C}\circ \mathcal{R}_h^{-1}$.
    Given the codebook $\mathcal{C}$, Alice's state is now
    \begin{align}
        \rho_{MLX}^{\mathcal{C}} = \frac{1}{ML}\sum_{k\in[ML]} \ket{k}\bra{k} \otimes |x_{k}\rangle \langle x_{k}|\label{rhoxml}.
    \end{align}

    \item Alice transmit her codeword $x_{k}$ through the c-q wiretap channel $\mathcal{N}$.
    The joint state between the channel input and output is then
    \begin{align}
        &\sigma^{\mathcal{C}}_{ABE}\equiv\sigma^\mathcal{C}_{MLXBE}
        :=\mathcal{N}^{X\to BE}\left( \rho_{MLX}^{\mathcal{C}} \right) 
        =
        \sum_{ k\in [ML]}\frac{1}{ML}\ket{k}\bra{k}\otimes\ket{x_{k}}\bra{x_{k}}\otimes \sigma_{BE}^{x_{k}}\label{sigmaabe}.
    \end{align}

    \item\label{label:5} Upon receiving the channel output state, Bob performs a positive operator-valued measure (POVM) $\Pi := \left\{\Pi^{x_{k}}_B\right\}_{k\in[ML]}$ to obtain outcome $\hat{x}_{k}$. 
    He then applies decoding $\mathcal{E}^{-1}$ on $\hat{x}_{k}$ to obtain the estimated message $\hat{m}\in[M]$.
\end{enumerate}

When the hash function is balanced, the average error probability and distinquishability of Eve's state via trace distance under random hash function $h$ and random codebook $\mathcal{C}$ is 
\begin{align}
\epsilon\left(\mathcal{N} \mid \mathcal{E}, \Pi\right) &:=
	\frac{1}{M}\sum_{m\in[M]} \Pr(\hat{m}\neq m\mid \mathcal{E}, \Pi) \\
	&=
	\frac{1}{M}\sum_{m\in[M]} \Tr \left[ \frac{1}{L}\sum_{k\in[L]}\sigma_B^{x_{k}} \left(\mathbb{1}-\sum_{k' \in [L] }\Pi^{x_{k'}}_B\right) \right];\\
		d_1(\mathcal{N}\mid\mathcal{E}) &:=
	\frac{1}{2}\left\| \sigma_{ME}^{\mathcal{C}} - \rho_M\otimes \sigma_{E}^{\mathcal{C}} \right\|_1.
\end{align}
For achievability part, if the hash function is unbalanced, we can assume Alice just publicly announce the message. By this assumption, we have the following result.
\begin{theo}[Secrecy exponent for wiretap channel coding] \label{theo:wireach}
	Consider a classical-quantum wiretap channel $\mathcal{N}_{X\to BE}$.
	For any integers L, M, and any prior distribution $p_X$ on $\mathcal{X}$,
	a coding strategy $(\mathcal{E}, \Pi)$ satisfies
	\begin{align}
		\mathbb{E}_{\mathcal{C},h}\left[\epsilon\left(\mathcal{N} \mid \mathcal{E}, \Pi\right)\right] &\leq 4 
		\e^{\sup_{\frac{1}{2}\leq\alpha\leq 1} \frac{\alpha-1}{\alpha} \left(  I^{\downarrow}_{2-\sfrac{1}{\alpha}}(X;B)_{\sigma} - \log ML \right)},\\
		\mathbb{E}_{\mathcal{C},h} \left[ d_1(\mathcal{N}\mid\mathcal{E})\right] &\leq
		2\e^{\sup_{1<\alpha\leq2} \frac{\alpha-1}{\alpha}(I^{*}_{\alpha}(X{\,:\,}E)_{\sigma} - \log L)},
	\end{align}
where,
	   $\sigma_{XBE} := \sum_{x\in\mathcal{X}} p_X(x) |x\rangle \langle x|\otimes \sigma_{BE}^x$ for each $\sigma_{BE}^{x}$ being the channel output of $\mathcal{N}_{X\to BE}$, and $I^{\downarrow}_{2-\sfrac{1}{\alpha}}$ and $I^{*}_{\alpha}$
	   are defined in \eqref{eq:Petz_Renyi_down_in}.
	   
The secrecy exponent $\sup_{1<\alpha\leq2} \frac{\alpha-1}{\alpha}(I^{*}_{\alpha}(X{\,:\,}E)_{\sigma} - \log L)$ is positive if and only if $\log L > I(X{\,:\,}E)_\sigma$.
\end{theo}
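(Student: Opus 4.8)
\emph{Bob's error probability.} The plan is to reduce the message error to an index-decoding error for $k\in[ML]$ and then run a square-root-measurement random-coding argument powered by Lemma~\ref{lemm:trace}. When the hash is balanced I would let Bob use the Holevo--Helstrom POVM $\{\Pi_B^{(k)}\}_{k\in[ML]}$ with $\Pi_B^{(k)}:=T^{-\sfrac12}\bigl(\tfrac{1}{ML}\sigma_B^{x_k}\bigr)T^{-\sfrac12}$ and $T:=\tfrac{1}{ML}(\sum_{k'\in[ML]}\sigma_B^{x_{k'}}+\sigma_B)$ (so $\sum_k\Pi_B^{(k)}\le\mathds{1}$, with a small abort element), and recover the message by post-composing with $h$; on the unbalanced event Alice just announces the message, so Bob makes no error. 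Since $\{\hat m\neq m\}\subseteq\{\hat k\neq k\}$ and $\Pi_B^{(k)}\le\sum_{k'\in h^{-1}(m)}\Pi_B^{(k')}$ whenever $k\in h^{-1}(m)$, balancedness gives $\epsilon(\mathcal{N}\mid\mathcal{E},\Pi)\le\tfrac{1}{ML}\sum_{k}\Tr[\sigma_B^{x_k}(\mathds{1}-\Pi_B^{(k)})]$. Writing $\mathds{1}-\Pi_B^{(k)}=T^{-\sfrac12}(T-\tfrac{1}{ML}\sigma_B^{x_k})T^{-\sfrac12}$ on $\supp(\sigma_B)$ and applying Lemma~\ref{lemm:trace} with $K=\tfrac{1}{ML}\sigma_B^{x_k}$ and $L=\tfrac{1}{ML}(\sum_{k'\neq k}\sigma_B^{x_{k'}}+\sigma_B)$ (so $K+L=T$) bounds $\Tr[\sigma_B^{x_k}(\mathds{1}-\Pi_B^{(k)})]=ML\,\Tr[K(K+L)^{-\sfrac12}L(K+L)^{-\sfrac12}]$ by $\Tr[(\sigma_B^{x_k})^{1-s}(\sum_{k'\neq k}\sigma_B^{x_{k'}}+\sigma_B)^{s}]$ for any $s\in(0,1)$. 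Averaging over the i.i.d.\ codebook, I would condition on $x_k$ and use operator concavity of $t\mapsto t^{s}$, then $\mathds{E}_{\mathcal{C}}[\sum_{k'\neq k}\sigma_B^{x_{k'}}+\sigma_B]=ML\,\sigma_B$ and operator monotonicity of $t\mapsto t^{s}$, leaving $(ML)^{s}\sum_x p_X(x)\Tr[(\sigma_B^x)^{1-s}\sigma_B^{s}]$. Since $\sum_x p_X(x)\Tr[(\sigma_B^x)^{1-s}\sigma_B^{s}]=\e^{-s\,I^{\downarrow}_{1-s}(X;B)_\sigma}$, the substitution $\alpha=\tfrac{1}{1+s}$ (whence $2-\sfrac{1}{\alpha}=1-s$ and $\tfrac{\alpha-1}{\alpha}=-s$) turns this into $\e^{\frac{\alpha-1}{\alpha}(I^{\downarrow}_{2-\sfrac{1}{\alpha}}(X;B)_\sigma-\log ML)}$ for every $\alpha\in(\sfrac12,1)$; optimising over $\alpha$ (the constant $4$ in the statement is not tight) gives the first inequality.

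\emph{Information leakage: reduction to privacy amplification.} For the second bound the plan is to recognise the hashing-and-encoding procedure, as seen by Eve, as an instance of privacy amplification and invoke Dupuis' bound~\eqref{dupuis}. Let $\sigma^{\mathcal{C}}_{(ML)E}:=\tfrac{1}{ML}\sum_{k\in[ML]}|k\rangle\langle k|\otimes\sigma_E^{x_k}$ be the c-q state shared between the uniform index $K$ and $E$. One checks that on the balanced event $d_1(\mathcal{N}\mid\mathcal{E})=\tfrac{1}{2}\|(\mathcal{R}^h-\mathcal{U})(\sigma^{\mathcal{C}}_{(ML)E})\|_1$, whereas on the unbalanced event (probability $\sfrac{1}{ML}$) the leakage is trivially at most $1$. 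Since Construction~\ref{cons} is strongly $2$-universal as a \emph{whole} family, \eqref{dupuis} applies to $\sigma^{\mathcal{C}}_{(ML)E}$ with $|\mathcal{Z}|=M$, giving $\mathds{E}_h[\tfrac12\|(\mathcal{R}^h-\mathcal{U})(\sigma^{\mathcal{C}}_{(ML)E})\|_1]\le\e^{\frac{\alpha-1}{\alpha}(\log M-H^*_\alpha(K{\,|\,}E)_{\sigma^{\mathcal{C}}})}$ for each codebook $\mathcal{C}$ and each $\alpha\in(1,2]$, so that $\mathds{E}_{\mathcal{C},h}[d_1]\le\e^{\frac{\alpha-1}{\alpha}\log M}\,\mathds{E}_{\mathcal{C}}[\e^{-\frac{\alpha-1}{\alpha}H^*_\alpha(K{\,|\,}E)_{\sigma^{\mathcal{C}}}}]+\sfrac{1}{ML}$.

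\emph{The codebook average and positivity.} It then remains to show $\mathds{E}_{\mathcal{C}}[\e^{-\frac{\alpha-1}{\alpha}H^*_\alpha(K{\,|\,}E)_{\sigma^{\mathcal{C}}}}]\le\e^{\frac{\alpha-1}{\alpha}(I^*_\alpha(X{\,:\,}E)_\sigma-\log ML)}$. I would lower-bound $H^*_\alpha(K{\,|\,}E)_{\sigma^{\mathcal{C}}}$ by inserting the minimiser $\tau_E^\star$ of $I^*_\alpha(X{\,:\,}E)_\sigma$ into the infimum defining it, so that $\e^{-\frac{\alpha-1}{\alpha}H^*_\alpha(K{\,|\,}E)_{\sigma^{\mathcal{C}}}}\le\e^{\frac{\alpha-1}{\alpha}D^*_\alpha(\sigma^{\mathcal{C}}_{KE}\|\mathds{1}_K\otimes\tau_E^\star)}$ (the inequality survives multiplication by $\tfrac{\alpha-1}{\alpha}>0$); block-diagonality in $K$ turns the right side into $\tfrac{1}{ML}(\sum_k\|(\tau_E^\star)^{\frac{1-\alpha}{2\alpha}}\sigma_E^{x_k}(\tau_E^\star)^{\frac{1-\alpha}{2\alpha}}\|_\alpha^\alpha)^{1/\alpha}$. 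Taking $\mathds{E}_{\mathcal{C}}$, concavity of $t\mapsto t^{1/\alpha}$ (valid since $\alpha>1$) pulls the expectation inside the power, and the resulting per-codeword expectation $\mathds{E}_{x\sim p_X}\|(\tau_E^\star)^{\frac{1-\alpha}{2\alpha}}\sigma_E^{x}(\tau_E^\star)^{\frac{1-\alpha}{2\alpha}}\|_\alpha^\alpha$ is precisely $\e^{(\alpha-1)D^*_\alpha(\sigma_{XE}\|\rho_X\otimes\tau_E^\star)}=\e^{(\alpha-1)I^*_\alpha(X{\,:\,}E)_\sigma}$ (the same block computation, with the $p_X(x)$-weights folding into a single power $p_X(x)^{1/\alpha}$); collecting powers of $ML$ gives the estimate. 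Feeding it back, and using $\sfrac{1}{ML}\le L^{-\frac{\alpha-1}{\alpha}}\le\e^{\frac{\alpha-1}{\alpha}(I^*_\alpha(X{\,:\,}E)_\sigma-\log L)}$ (because $I^*_\alpha\ge0$ and $L^{\frac{\alpha-1}{\alpha}}\le ML$), I obtain $\mathds{E}_{\mathcal{C},h}[d_1]\le2\,\e^{\frac{\alpha-1}{\alpha}(I^*_\alpha(X{\,:\,}E)_\sigma-\log L)}$ for every $\alpha\in(1,2]$; optimising over $\alpha$ yields the second inequality. The positivity claim then follows exactly as in the last line of the proof of Theorem~\ref{theo:sc}: $\alpha\mapsto I^*_\alpha(X{\,:\,}E)_\sigma$ is nondecreasing (a pointwise infimum of the order-monotone family $D^*_\alpha$) and tends to $I(X{\,:\,}E)_\sigma$ as $\alpha\downarrow1$ by~\eqref{eq:alpha1}, so the secrecy exponent is strictly positive precisely when $\log L>I(X{\,:\,}E)_\sigma$.

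\emph{Main obstacle.} None of the analytic ingredients is deep: checking that the square-root POVM is a legitimate measurement and that Lemma~\ref{lemm:trace} applies with the generalised inverse on $\supp(\sigma_B)$, pushing the i.i.d.\ averaging through the operator-Jensen step, and the block-diagonal Rényi identity are all routine. The genuinely fiddly part will be the bookkeeping around the almost-balanced hash — in particular, the observation that Dupuis' bound must be applied to the \emph{full} strongly $2$-universal family (its restriction to the balanced instances is not exactly $2$-universal), with the $\Pr(\text{unbalanced})=\sfrac{1}{ML}$ term and the trivial ``announce the message'' fallback absorbed into the constants — together with keeping the support conditions for the R\'enyi divergences in order throughout.
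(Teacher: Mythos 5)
Your proof is correct, and for the \emph{first} inequality (Bob's error probability) it takes a genuinely different route from the paper. Where the paper simply cites \cite[Eq.~(63)]{Hay132}, you give a self-contained derivation by building a square-root POVM with the extra ``$+\sigma_B$'' inserted into the normaliser $T$, reducing the conditional error probability to the quantity $\Tr[K(K+L)^{-\sfrac12}L(K+L)^{-\sfrac12}]$, and then invoking Lemma~\ref{lemm:trace} followed by operator concavity of $t\mapsto t^s$ to pull the codebook average through. This is exactly parallel to the mechanism used in the proof of Theorem~\ref{theo:sc} (where $\pi_Z\otimes\rho_E$ plays the role of the inserted $\sigma_B$), which is an appealing unification, and as you observe it even sharpens the quoted constant from $4$ to $1$. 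I checked the algebra: with $K=\tfrac1{ML}\sigma_B^{x_k}$, $L=\tfrac1{ML}(\sum_{k'\ne k}\sigma_B^{x_{k'}}+\sigma_B)$ one indeed lands on $(ML)^s\sum_x p_X(x)\Tr[(\sigma_B^x)^{1-s}\sigma_B^s]=\e^{-sI^\downarrow_{1-s}(X{:}B)_\sigma+s\log ML}$, and the substitution $\alpha=\tfrac1{1+s}$ gives the stated exponent.

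For the \emph{second} inequality your route is essentially the paper's. Both recognise $d_1(\mathcal{N}\mid\mathcal{E})$ on the balanced event as $\tfrac12\|(\mathcal{R}^h-\mathcal{U})(\sigma^{\mathcal C}_{KE})\|_1$, apply Dupuis' achievability \eqref{dupuis} for each fixed codebook, control the $\Pr[\text{unbalanced}]\le\sfrac1{ML}$ event by the trivial bound $d_1\le1$, and absorb that term into a factor $2$ by the same $I^*_\alpha\ge 0$ argument. The only implementation difference is the codebook-average step: you lower-bound $H^*_\alpha(K\mid E)_{\sigma^{\mathcal C}}$ by plugging the fixed minimiser $\tau^\star_E$ of $I^*_\alpha(X{:}E)_\sigma$ into the infimum, then use scalar concavity of $t\mapsto t^{1/\alpha}$; the paper instead rewrites $H^*_\alpha(A\mid E)_{\sigma^{\mathcal C}}=\log(ML)-I^*_\alpha(A{:}E)_{\sigma^{\mathcal C}}$ (valid since $\sigma^{\mathcal C}_A$ is uniform on $ML$ points) and invokes Lemma~\ref{lemm:concavity}, whose proof is exactly your inline argument. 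So the two are the same computation in slightly different clothing. The positivity argument coincides. One small remark on rigour: as you flag, \eqref{dupuis} must be applied to the full strongly $2$-universal family (the balanced sub-family is not itself $2$-universal), and your bookkeeping handles this correctly by bounding the balanced-restricted expectation by the unrestricted one before invoking Dupuis.
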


\noindent The proof of Theorem~\ref{theo:wireach} is deferred to Appendix~\ref{sec:proof_secrecy}.

\begin{remark}
An similar result of our Theorem \ref{theo:wireach} has been proved in \cite[Theorem 2]{Hay2112}. There are two main difference between their theorem and ours. First, for \cite[Theorem 2]{Hay2112} the hash function is chosen from a 2-universal family with balanced condition, while in our theorem the hash function is chosen from a strongly 2-universal family which is almost balanced. Second, the distribution $p_X$ used in \cite[Theorem 2]{Hay2112} has to be uniform distribution, while no such limitation exists in our theorem.
\end{remark}

Theorem~\ref{theo:wireach} shows that the secrecy exponent is positive when we use enough randomness in hashing (i.e.~$\log L > I(X{\,:\,}E)_\sigma$).
On the other hand, when the randomness is not enough (i.e.~$\log L < I(X{\,:\,}E)_\sigma$), the following exponential strong converse (i.e.~$d_1 \to 1 $ exponentially fast) can be derived by utilizing our one-shot strong converse bound proved in Theorem~\ref{theo:sc} of Section~\ref{sec:sc}.  
Here, we do not need to specify our protocol for the case that the hash function is unbalanced, because the probability is exponential small. 
\begin{theo}[Exponential strong converse of wiretap channel coding]\label{theo:wiretap}
Let $\mathcal{N}_{X\to BE}$ be a classical-quantum wiretap channel. 
	For any integers $L$ and $M$ and any prior distribution $p_X$ on $\mathcal{X}$, when using the above protocol, the expected distinbuishability of Eve satisfies
	\begin{align}
		\mathbb{E}_{\mathcal{C}, h} \left[ d_1(\mathcal{N}\mid\mathcal{E}) \right] &\geq 1-5\e^{-\sup_{\sfrac{1}{2}<\alpha<1}\frac{1-\alpha}{\alpha}\left(I^{\downarrow}_{2-\sfrac{1}{\alpha}}(X{\,:\,}E)_\sigma - \log L\right)},
	\end{align}
where $\sigma_{XBE} := \sum_{x\in\mathcal{X}} p_X(x) |x\rangle \langle x|\otimes \sigma_{BE}^x$, and $I^{\downarrow}_{2-\sfrac{1}{\alpha}}$ is defined in \eqref{eq:Petz_Renyi_down_in}.

The  exponent $\sup_{\sfrac{1}{2}<\alpha<1}\frac{1-\alpha}{\alpha}(I^{\downarrow}_{2-\sfrac{1}{\alpha}}(X{\,:\,}E)_\sigma - \log L )$ is positive if and only if $\log L < I(X{\,:\,}E)_\sigma$.
\end{theo}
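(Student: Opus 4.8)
The plan is to reduce Theorem~\ref{theo:wiretap} to the one-shot strong converse for privacy amplification (Theorem~\ref{theo:sc}), applied to the c-q state between Alice's hashed input and Eve's system, on the $n$-letter channel output. First I would fix the random codebook $\mathcal{C}$ and the hash function $h$ (assumed balanced, which happens with probability at least $1-\sfrac{1}{ML}$), and observe that the map sending $k\in[ML]$ to $m=h(k)\in[M]$ is exactly a strongly $2$-universal hash $h:[ML]\to[M]$ applied to the ``source'' state $\sigma^{\mathcal{C}}_{(ML)E}$ carried by Alice's enlarged register and Eve's system; indeed, since $\mathcal{R}^{-1}_h$ spreads $m$ uniformly over its preimage, one has $\mathcal{R}^{h}\big(\sigma^{\mathcal{C}}_{(ML)E}\big) = \sigma^{\mathcal{C}}_{ME}$ and $\mathcal{U}\big(\sigma^{\mathcal{C}}_{(ML)E}\big) = \rho_M\otimes\sigma^{\mathcal{C}}_E$. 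Thus $d_1(\mathcal{N}\mid\mathcal{E})$ is precisely the privacy-amplification security index for the c-q state $\sigma^{\mathcal{C}}_{(ML)E}$ with output alphabet $[M]$, and Theorem~\ref{theo:sc} gives, for every $\alpha\in(\sfrac12,1)$,
\begin{align}
 d_1(\mathcal{N}\mid\mathcal{E}) \;\geq\; 1 - 4\,\mathrm{e}^{-\frac{1-\alpha}{\alpha}\left(\log M - H^{\downarrow}_{2-\sfrac{1}{\alpha}}\big((ML){\,|\,}E\big)_{\sigma^{\mathcal{C}}}\right)}.
\end{align}

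Next I would take the expectation over the random codebook $\mathcal{C}\sim p_X^{\otimes}$ and translate the conditional Petz--R\'enyi entropy back into the mutual-information quantity $I^{\downarrow}_{2-\sfrac{1}{\alpha}}(X{\,:\,}E)_\sigma$. The key computation is that, because $\log M = \log(ML) - \log L$, the exponent becomes $\log(ML) - H^{\downarrow}_{2-\sfrac{1}{\alpha}}\big((ML){\,|\,}E\big)_{\sigma^{\mathcal{C}}} - \log L$, and the $\log(ML)$ term cancels against the ``uniform-prior'' contribution inside $H^{\downarrow}_\beta\big((ML){\,|\,}E\big)$ since $\rho_{ML}$ is maximally mixed: for a c-q state with uniform classical marginal, $\log|\text{alphabet}| - H^{\downarrow}_\beta(\text{classical}{\,|\,}E) = D_\beta\big(\sigma_{(ML)E}\,\|\,\rho_{ML}\otimes\sigma_E\big) = I^{\downarrow}_\beta\big((ML){\,:\,}E\big)$. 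I would then invoke convexity/concavity of $\beta\mapsto D_\beta$ together with a data-processing or averaging step over $\mathcal{C}$ (since averaging the codebook amounts to a classical channel acting on the $X$-register, and the relevant Petz--R\'enyi divergence is monotone under such processing) to bound $\mathbb{E}_{\mathcal{C}}\,\mathrm{e}^{\frac{1-\alpha}{\alpha}H^{\downarrow}_{2-\sfrac{1}{\alpha}}((ML)|E)_{\sigma^{\mathcal{C}}}}$ by the corresponding single-letter quantity $\mathrm{e}^{\frac{1-\alpha}{\alpha}(\log L - I^{\downarrow}_{2-\sfrac{1}{\alpha}}(X{\,:\,}E)_\sigma)}$ via Jensen applied to the concave exponential-of-divergence map in the appropriate regime. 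Finally, I would account for the event that $h$ is unbalanced, which contributes at most $\sfrac{1}{ML}$ to the expectation; absorbing this term and the factor $4$ into the constant $5$ gives
\begin{align}
 \mathbb{E}_{\mathcal{C},h}\big[d_1(\mathcal{N}\mid\mathcal{E})\big] \;\geq\; 1 - 5\,\mathrm{e}^{-\frac{1-\alpha}{\alpha}\left(I^{\downarrow}_{2-\sfrac{1}{\alpha}}(X{\,:\,}E)_\sigma - \log L\right)},
\end{align}
and optimizing over $\alpha\in(\sfrac12,1)$ yields the claimed bound.

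The positivity-of-exponent claim follows exactly as in Theorem~\ref{theo:sc}: the map $\alpha\mapsto I^{\downarrow}_{2-\sfrac{1}{\alpha}}(X{\,:\,}E)_\sigma$ is monotone (via monotonicity of $\beta\mapsto D_\beta$ with $\beta=2-\sfrac1\alpha$ increasing in $\alpha$ on $(\sfrac12,1)$) and tends to $I(X{\,:\,}E)_\sigma$ as $\alpha\to1$, so $\sup_{\alpha\in(\sfrac12,1)}\frac{1-\alpha}{\alpha}\big(I^{\downarrow}_{2-\sfrac{1}{\alpha}}(X{\,:\,}E)_\sigma - \log L\big)>0$ precisely when $\log L < I(X{\,:\,}E)_\sigma$.

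I expect the main obstacle to be the codebook-averaging step: passing from $\mathbb{E}_{\mathcal{C}}$ of an exponential of $H^{\downarrow}_\beta\big((ML){\,|\,}E\big)_{\sigma^{\mathcal{C}}}$ to the single-letter $I^{\downarrow}_\beta(X{\,:\,}E)_\sigma$ is not a bare triangle-inequality argument — one must carefully identify the averaged state, verify that the relevant $\beta$ lies in the range where the needed convexity (or data-processing) of the Petz--R\'enyi quantity holds, and check that the direction of Jensen's inequality is the favorable one; handling the unbalanced-hash correction cleanly without spoiling the exponent is a secondary but routine nuisance.
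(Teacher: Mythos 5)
Your high‑level plan tracks the paper's own proof quite closely: fix the codebook and a balanced hash, recognize $d_1$ as the privacy‑amplification security index for $\sigma^{\mathcal{C}}_{(ML)E}$ under a strongly $2$‑universal hash $[ML]\to[M]$, apply Theorem~\ref{theo:sc}, rewrite $\log M - H^{\downarrow}_{2-\sfrac1\alpha}\bigl((ML)|E\bigr)_{\sigma^{\mathcal{C}}} = I^{\downarrow}_{2-\sfrac1\alpha}\bigl((ML):E\bigr)_{\sigma^{\mathcal{C}}}-\log L$ using the uniform marginal, handle the unbalanced‑hash event with the $\sfrac{1}{ML}$ correction, then average over $\mathcal{C}$. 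This skeleton is exactly that of Appendix~\ref{sec:proof_esc}, and your positivity argument at the end is also the paper's.

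The gap is precisely the step you flag as the ``main obstacle,'' and your proposed cure (Jensen for a concave ``exponential‑of‑divergence'' map plus data processing) does not obviously close it. Write out what the averaging has to accomplish: with $\beta = 2-\sfrac1\alpha\in(0,1)$ and $\tfrac{1-\alpha}{\alpha}=1-\beta$, one has $\mathrm{e}^{-\frac{1-\alpha}{\alpha}I^{\downarrow}_{\beta}((ML):E)_{\sigma^{\mathcal{C}}}} = \frac{1}{ML}\sum_{k}\Tr\bigl[(\sigma^{x_k}_E)^{\beta}(\sigma^{\mathcal{C}}_E)^{1-\beta}\bigr]$, and the claim needed is $\mathbb{E}_{\mathcal{C}}\bigl[\frac{1}{ML}\sum_{k}\Tr[(\sigma^{x_k}_E)^{\beta}(\sigma^{\mathcal{C}}_E)^{1-\beta}]\bigr] \le \sum_{x}p_X(x)\Tr[(\sigma^{x}_E)^{\beta}\sigma_E^{1-\beta}]$. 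This is not a linear functional of the codebook: the reference state $\sigma^{\mathcal{C}}_E = \frac{1}{ML}\sum_{k'}\sigma^{x_{k'}}_E$ appearing inside $(\,\cdot\,)^{1-\beta}$ itself depends on $\mathcal{C}$ and is positively correlated with each $\sigma^{x_k}_E$. Lieb/operator concavity of $(\cdot)^{1-\beta}$ gives $\mathbb{E}_{\mathcal{C}|x_k}[(\sigma^{\mathcal{C}}_E)^{1-\beta}] \le (\frac{1}{ML}\sigma^{x_k}_E+\frac{ML-1}{ML}\sigma_E)^{1-\beta}$, which still leaves an order‑$\sfrac{1}{ML}$ ``self‑correlation'' term that is not dominated by $\sigma_E^{1-\beta}$, so a bare Jensen step does not deliver the displayed inequality; a data‑processing argument also does not apply, since $\sigma^{\mathcal{C}}_{(ML)E}\mapsto\mathbb{E}_{\mathcal{C}}[\sigma^{\mathcal{C}}_{(ML)E}]$ is not a channel acting on a fixed state. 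Note that the paper's own one‑line justification (``the map is linear'') runs into the same issue, so you cannot import it verbatim; a complete argument has to quantify and absorb the self‑correlation contribution (for instance, via a modified measurement in Theorem~\ref{theo:sc} with a reference $\tau_E \ge \rho_E$, or an explicit bound on $\mathbb{E}_{\mathcal{C}}\Tr[(\sigma^{x_1}_E)^{\beta}(\sigma^{\mathcal{C}}_E)^{1-\beta}]$) before the stated constant $5$ can be claimed.
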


\noindent The proof of Theorem~\ref{theo:wiretap} is deferred to Appendix~\ref{sec:proof_esc}.

\medskip

Those two bounds established in Theorems~\ref{theo:wireach} and \ref{theo:wiretap} can be applied similarly when Alice communicate classical information through a quantum channel instead of a c-q wiretap channel.
Consider a quantum channel $\mathcal{N}_{A'\rightarrow B}$ from Alice to Bob. Note that $A'$ is now a quantum system.
Given an isometric extension $U^{\mathcal{N}}_{A'\rightarrow BE}$ of the channel $\mathcal{N}_{A'\rightarrow B}$ via Stinespring dilation, the complementary channel $\mathcal{N}_{A'\rightarrow E}$ to the environment or Eve, is
\begin{align}
	\mathcal{N}_{A'\rightarrow E}(\cdot) = \Tr_B\left[U^{\mathcal{N}}_{A'\rightarrow BE}(\cdot)\right].
\end{align}
Given a c-q coding scheme $x\mapsto \rho_{A'}^x$, the protocol when the hash function is balanced can be structured similarly with the protocol of c-q wiretap channel as follows.
\begin{enumerate}[label=\arabic*.]
\item[1.--3.] The first three steps are exactly the same as Step~\ref{label:1}, \ref{label:2}, and \ref{label:3} of the c-q channel version 
by the following substitutions:
\begin{align}
\begin{cases}
	\mathcal{N}_{X\to B} \leftarrow \mathcal{N}_{A'\to B}\\
	\sigma_{B}^{x} \leftarrow \mathcal{N}_{A'\to B}\left(\rho^{x}_{A'}\right)\\
	\end{cases}.
\end{align}

\setcounter{enumi}{3}
\item Alice encodes the $x_{k}$ to a quantum state $\rho^{x_{k}}_{A'} \in \mathcal{S}(\mathcal{H}_{A'})$. Hence, the encoded state is
\begin{align}
     \rho_{MLA'}^{\mathcal{C}} = \frac{1}{ML}\sum_{k\in[ML]} |k\rangle \langle k| \otimes \rho^{x_{k}}_{A'} \,.
\end{align}
\item Alice sends the state $\rho^{x_{k}}_{A'}$ through the quantum channel $\mathcal{N}_{A'\to BE}$ to have state
\begin{align}
        \sigma^\mathcal{C}_{MLBE} &:= U^\mathcal{N}_{A'\to BE}\left(\rho_{MLA'}^{\mathcal{C}} \right)\\
        &=
        \sum_{k\in [ML]}\frac{1}{ML}\ket{k}\bra{k}\otimes U^\mathcal{N}_{A'\to BE} \left(\rho_{A'}^{x_{k}}\right).
\end{align}
\item Similar to Step~\ref{label:5} of the c-q version.
\end{enumerate}
Following similar analysis in the case of c-q wiretap channel, we get the following results.
\begin{coro}[Secrecy exponent for private communication over quantum channels] \label{coro:private_ach}
Let ${U}^{\mathcal{N}}_{A'\rightarrow BE}(\cdot)$
be a Stinespring dilation of a quantum channel $\mathcal{N}_{A'\to B}$.
	For every integers $L$ and $M$, any prior distribution $p_X$ on $\mathcal{X}$ and mapping $x\mapsto \rho_{A'}^x$, when using the above protocol,
	a coding strategy $(\mathcal{E}, \Pi)$ satisfies
	\begin{align}
		\mathbb{E}_{\mathcal{C},h} \left[\epsilon\left(\mathcal{N} \mid \mathcal{E}, \Pi\right) \right] &\leq 4 
		\e^{\sup_{\frac{1}{2}\leq\alpha\leq 1} \frac{\alpha-1}{\alpha} \left(  I^{\downarrow}_{2-\sfrac{1}{\alpha}}(X;B)_{\sigma} - \log ML \right)},\\
		\mathbb{E}_{\mathcal{C},h} \left[ d_1(\mathcal{N}\mid\mathcal{E}) \right] &\leq
		2\e^{\sup_{1<\alpha\leq2} \frac{\alpha-1}{\alpha}(I^{*}_{\alpha}(X{\,:\,}E)_{\sigma} - \log L)},
	\end{align}
where $\sigma_{XBE} := \sum_{x\in\mathcal{X}} p_X(x) |x\rangle \langle x|\otimes U^\mathcal{N}_{A'\to BE}(\rho_{A'}^x)$.
\end{coro}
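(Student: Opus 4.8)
The plan is to reduce the problem of private classical communication over a quantum channel to the classical--quantum wiretap channel problem already settled by Theorem~\ref{theo:wireach}. The starting observation is that fixing the encoding map $x\mapsto \rho_{A'}^x$ and composing it with the Stinespring isometry $U^{\mathcal{N}}_{A'\to BE}$ produces a genuine c-q wiretap channel
\begin{align}
	\tilde{\mathcal{N}}_{X\to BE}(\cdot) := \sum_{x\in\mathcal{X}} \langle x|\cdot|x\rangle\, \tilde{\sigma}_{BE}^x,
	\qquad \tilde{\sigma}_{BE}^x := U^{\mathcal{N}}_{A'\to BE}(\rho_{A'}^x),
\end{align}
whose Bob-marginal is $\tilde{\sigma}_B^x = \mathcal{N}_{A'\to B}(\rho_{A'}^x)$ and whose Eve-marginal is $\tilde{\sigma}_E^x = \mathcal{N}_{A'\to E}(\rho_{A'}^x) = \Tr_B[U^{\mathcal{N}}_{A'\to BE}(\rho_{A'}^x)]$.

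First I would verify that the five-step protocol stated above for the quantum-channel setting coincides system by system with the wiretap protocol of Section~\ref{sec:c-q} applied to $\tilde{\mathcal{N}}$: Steps~1--3 are, by the displayed substitution, identical, and Steps~4--5 produce the joint state $\sigma^{\mathcal{C}}_{MLBE} = \sum_{k\in[ML]} \frac{1}{ML}\ket{k}\bra{k}\otimes \tilde{\sigma}_{BE}^{x_k} = \tilde{\mathcal{N}}^{X\to BE}(\rho_{MLX}^{\mathcal{C}})$, so Bob decodes from the $B$-marginal $\tilde{\sigma}_B^{x_k}$ and Eve's reduced state is $\tilde{\sigma}_E^{x_k}$. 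Consequently the two figures of merit $\epsilon(\mathcal{N}\mid\mathcal{E},\Pi)$ and $d_1(\mathcal{N}\mid\mathcal{E})$ are exactly the corresponding wiretap-channel quantities for $\tilde{\mathcal{N}}$. Second, I would note that the auxiliary state $\sigma_{XBE}$ in the statement equals $\sum_x p_X(x)\ket{x}\bra{x}\otimes \tilde{\sigma}_{BE}^x$, so the R\'enyi mutual informations $I^{\downarrow}_{2-\sfrac{1}{\alpha}}(X;B)_\sigma$ and $I^*_\alpha(X{\,:\,}E)_\sigma$ appearing in Theorem~\ref{theo:wireach} are literally those appearing in the corollary. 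Invoking Theorem~\ref{theo:wireach} for $\tilde{\mathcal{N}}$ then delivers both bounds verbatim.

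I do not expect a genuine obstacle here; this is a reduction, and the only point requiring care is the Hilbert-space bookkeeping: one checks that ``$B$'' and ``$E$'' in Theorem~\ref{theo:wireach} correspond respectively to the output and environment systems of the dilation, that the complementary channel satisfies $\mathcal{N}_{A'\to E} = \Tr_B\circ U^{\mathcal{N}}_{A'\to BE}$, and that no step in the proof of Theorem~\ref{theo:wireach} uses anything about the channel input beyond the c-q structure of the encoded state $\rho_{MLX}^{\mathcal{C}}$, which the reduction preserves. Once this is confirmed, the secrecy exponent and the error exponent for private communication over a quantum channel follow immediately from those for the c-q wiretap channel, and in particular the sign characterization of the secrecy exponent is inherited as well.
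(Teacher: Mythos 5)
Your reduction is exactly the argument the paper intends: the paper itself provides no explicit proof of Corollary~\ref{coro:private_ach} and instead writes ``Following similar analysis in the case of c-q wiretap channel, we get the following results,'' which is precisely the observation that composing $x\mapsto\rho_{A'}^x$ with $U^{\mathcal{N}}_{A'\to BE}$ yields a c-q wiretap channel $\tilde{\mathcal{N}}$ to which Theorem~\ref{theo:wireach} applies verbatim. Your bookkeeping check (that the protocol states, figures of merit, and R\'enyi quantities all transport under this identification) is correct and fills in the details the paper leaves implicit.
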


\begin{coro}[Exponential strong converse for private communication over quantum channels] \label{coro:private_sc}
Let ${U}^{\mathcal{N}}_{A'\rightarrow BE}(\cdot)$
be a Stinespring dilation of a quantum channel $\mathcal{N}_{A'\to B}$.
	For any integers $L$ and $M$, any prior distribution $p_X$ on $\mathcal{X}$ and mapping $x\mapsto \rho_{A'}^x$, when using the above protocol, the expected distinbuishability of Eve is bounded by
	\begin{align}
		\mathbb{E}_{\mathcal{C}, h} \left[ d_1(\mathcal{N}\mid\mathcal{E})\right] &\geq 1-5\e^{-\sup_{\sfrac{1}{2}<\alpha<1}\left(\frac{1-\alpha}{\alpha}(I^{\downarrow}_{2-\frac{1}{\alpha}}(X{\,:\,}E)_\sigma - \log L)\right)}\, ,
	\end{align}
where $\sigma_{XBE} := \sum_{x\in\mathcal{X}} p_X(x) |x\rangle \langle x|\otimes U^\mathcal{N}_{A'\to BE}(\rho_{A'}^x)$.
\end{coro}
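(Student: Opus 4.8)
The plan is to observe that, from the point of view of Alice and Eve, the private‑communication protocol over $\mathcal{N}_{A'\to B}$ is literally an instance of the c‑q wiretap protocol analysed in Theorem~\ref{theo:wiretap}, and then invoke that theorem. First I would define the c‑q wiretap channel $\mathcal{M}_{X\to BE}$ by $x\mapsto U^{\mathcal{N}}_{A'\to BE}(\rho_{A'}^x)$; equivalently, Eve's part is the complementary channel $\mathcal{N}_{A'\to E}(\cdot)=\Tr_B[U^{\mathcal{N}}_{A'\to BE}(\cdot)]$ precomposed with the fixed c‑q encoding $x\mapsto\rho_{A'}^x$, with outputs $\sigma_E^x:=\Tr_B[U^{\mathcal{N}}_{A'\to BE}(\rho_{A'}^x)]$. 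By construction, the quantum‑channel protocol (Steps~1--6 of Section~\ref{sec:c-q}) is obtained from the c‑q wiretap protocol, Steps~\ref{label:1}--\ref{label:5}, simply by replacing $\sigma_{BE}^x$ with $U^{\mathcal{N}}_{A'\to BE}(\rho_{A'}^x)$: in both, Alice prepares $\rho_M$, applies $\mathcal{R}_h^{-1}$, draws a codebook $\mathcal{C}\sim p_X$, and sends through the channel, producing
\begin{align}
  \sigma^{\mathcal{C}}_{MLBE}=\sum_{k\in[ML]}\frac{1}{ML}\ket{k}\bra{k}\otimes U^{\mathcal{N}}_{A'\to BE}(\rho_{A'}^{x_k}).
\end{align}
Applying the hash to pass from $[ML]$ to $[M]$ and tracing out $B$ yields the same Eve state $\sigma^{\mathcal{C}}_{ME}$ in both settings.

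Consequently $d_1(\mathcal{N}\mid\mathcal{E})=\tfrac12\|\sigma_{ME}^{\mathcal{C}}-\rho_M\otimes\sigma_E^{\mathcal{C}}\|_1$, and hence $\mathbb{E}_{\mathcal{C},h}[d_1(\mathcal{N}\mid\mathcal{E})]$, is literally the same quantity in the two settings, so Theorem~\ref{theo:wiretap} applies verbatim with the wiretap channel $\mathcal{M}_{X\to BE}$ and gives
\begin{align}
  \mathbb{E}_{\mathcal{C},h}\left[d_1(\mathcal{N}\mid\mathcal{E})\right]\geq 1-5\,\e^{-\sup_{\sfrac12<\alpha<1}\frac{1-\alpha}{\alpha}\left(I^{\downarrow}_{2-\sfrac{1}{\alpha}}(X{\,:\,}E)_{\sigma}-\log L\right)},
\end{align}
with $\sigma_{XBE}=\sum_x p_X(x)\ket{x}\bra{x}\otimes U^{\mathcal{N}}_{A'\to BE}(\rho_{A'}^x)$ exactly as in the statement. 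Since $\Tr_B[\sigma_{XBE}]=\sum_x p_X(x)\ket{x}\bra{x}\otimes\sigma_E^x$ and $I^{\downarrow}_{2-\sfrac{1}{\alpha}}(X{\,:\,}E)_\sigma$ only involves this $XE$‑marginal, the right‑hand side is precisely the asserted bound.

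I do not anticipate a genuine obstacle: the corollary is a transcription of Theorem~\ref{theo:wiretap} through this identification, and the proof is correspondingly short. The one detail worth a sentence is the event that the randomly chosen hash is unbalanced, which occurs with probability at most $1/(ML)$; exactly as in the proof of Theorem~\ref{theo:wiretap}, it contributes an exponentially small additive correction that is already absorbed in the prefactor $5$ (versus the $4$ of Theorem~\ref{theo:sc}), so that case requires no separate handling. The argument is valid for all integers $L,M$ and every prior $p_X$; and, should one wish to record it, positivity of the exponent iff $\log L<I(X{\,:\,}E)_\sigma$ follows, as in Theorem~\ref{theo:wiretap}, from the monotonicity of $\alpha\mapsto I^{\downarrow}_{2-\sfrac{1}{\alpha}}(X{\,:\,}E)$ together with $\lim_{\alpha\to1}I^{\downarrow}_\alpha(X{\,:\,}E)_\sigma=I(X{\,:\,}E)_\sigma$.
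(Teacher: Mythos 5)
Your proof is correct and matches the paper's intent: the paper gives no separate argument for Corollary~\ref{coro:private_sc}, stating only that it follows ``Following similar analysis in the case of c-q wiretap channel,'' and your reduction — defining the induced c-q wiretap channel $x\mapsto U^{\mathcal{N}}_{A'\to BE}(\rho_{A'}^x)$ and invoking Theorem~\ref{theo:wiretap} verbatim — is precisely the clean way to make that remark rigorous. You also correctly note that only the $XE$-marginal of $\sigma_{XBE}$ enters $I^{\downarrow}_{2-\sfrac{1}{\alpha}}(X{\,:\,}E)_\sigma$, and that the unbalanced-hash event is already absorbed into the constant $5$ by Theorem~\ref{theo:wiretap}.
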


\section{Application: entropy accumulation} \label{sec:QKD}
The main topic of entropy accumulation (EA) \cite{DFR20,Dup21} is to measure how much uncertainty remains about the bitstring $A^n_1$ given access to side information $X^n_1$. In fact, the EA protocol can be generally generated as the figure below:
\begin{figure}[H]
\vspace{-0.8em}
\includegraphics[width = 0.5\columnwidth]{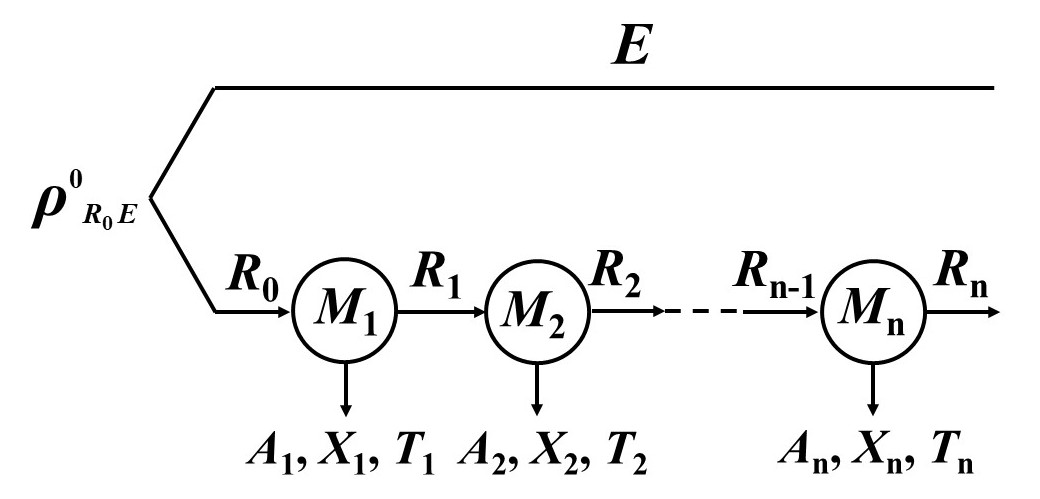}
\end{figure}
Each $M_i$ is a process that passes information on to the next one using a ``memory'' register $R_i$.
Note that the additional messages $T^n_1$ are global information about $A_1^n$ and $X_1^n$. For instance, in cryptographic scenarios, each process $M_i$ can be inferred by tests carried out by quantum key distribution protocols on some generated bits. Then, $T_i$ should tell us whether position $i$ is tested, and if so, the outcome of the test performed at step $i$. Here we restrict $A_1^n$ and $X_1^n$ to classical systems and $T^n_1$ being classical bits.

In \cite[Theorem 9]{Dup21}, an upper bound of information about the bitstring is shown: for hash functions $h:\mathcal{A}\to \mathcal{Z}$ with $\log|\mathcal{Z}|=nR$, if $0<f(w)-R\leq \frac{V^2}{2}$, then
\begin{align}
	\frac{1}{2}\mathbb{E}_h\left\|\mathcal{R}^h(\rho_{A^n_1X^n_1E|\wt(T^n_1)=w})-\frac{\mathds{1}}{2^{nR}}\otimes\rho_{X^n_1E|\wt(T^n_1)=w}\right\|_1
	\leq  \frac{1}{\Pr[\wt(T_1^n)=w]}\e^{-\frac{n}{2}\left(\frac{R-f(w)}{V}\right)^2}.
\end{align}
where $\wt(T_1^n)$ is the hamming weight of $T_1^n$,
and $f(\cdot)$ is called a \emph{tradeoff function} (defined below) telling us how much entropy is expected to get given the probability of seeing $1$ on $T_i$, $V$ is a constant larger than 2. 
\begin{defn*}[Definition 4.1 in \cite{DFR20}]
A real function $f$ on probability distributions is called a \emph{min-tradeoff function} for process $M_i$ if 
\begin{align}
f(w) \leq \inf_{\nu\in \Sigma_i(w)} H(A_i{\,|\,}X_i E)_{\nu},
\end{align}
and a real function $f$ on probability distributions is called a \emph{max-tradeoff function} for process $M_i$ if
\begin{align}
f(w) \geq \sup_{\nu\in \Sigma_i(w)} H(A_i{\,|\,}X_i E)_{\nu}.
\end{align}
where $\Sigma_i(w)$ is the set of states $\nu_{T_iA_iX_iR_iE}$ with probability of $T_i = 1$ being $w$.
\end{defn*}

Using Theorem \ref{theo:sc}, we provide an exponential strong converse bound complementary to the above result \cite[Theorem 9]{Dup21}. 
\begin{theo}\label{theo:QKD}
	For any strongly $2$-universal hash function $ h: \mathcal{A} \to \mathcal{Z}$ with $\log  |\mathcal{Z}|=nR$, if $0<R-f(w)<V$, then
	\begin{align}
		\frac{1}{2}\mathbb{E}_h\left\|\mathcal{R}^h(\rho_{A^n_1X^n_1E|\wt(T^n_1)=w})-\frac{\mathds{1}}{2^{nR}}\otimes\rho_{X^n_1E|\wt(T^n_1)=w}\right\|_1
		\geq 1-\frac{4}{\Pr[\wt(T_1^n)=w]}\e^{-\frac{n}{2}\left(\frac{R-f(w)}{V}\right)^2}.
	\end{align}
\end{theo}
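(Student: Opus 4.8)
\textbf{Proof plan for Theorem~\ref{theo:QKD}.} The plan is to reduce the entropy-accumulation statement to the one-shot strong converse bound of Theorem~\ref{theo:sc} applied to the conditioned state, and then to control the resulting Petz--R\'enyi conditional entropy $H^{\downarrow}_{2-\sfrac{1}{\alpha}}(A^n_1{\,|\,}X^n_1 E)$ by a Taylor expansion around $\alpha=1$, matching the quadratic exponent $\frac{n}{2}\left(\frac{R-f(w)}{V}\right)^2$.

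First I would set $\tau := \rho_{A^n_1 X^n_1 E\mid\wt(T^n_1)=w}$, note that this is a c-q state with classical part $A^n_1 X^n_1$ (the roles of ``$X$'' and ``$E$'' in Theorem~\ref{theo:sc} being played here by $A^n_1$ and $X^n_1 E$ respectively, which is legitimate since $X^n_1$ is classical and can be absorbed into the side-information system). Applying Theorem~\ref{theo:sc} verbatim to $\tau$ with $|\mathcal{Z}| = 2^{nR}$ gives, for every $\alpha\in(\sfrac12,1)$,
\begin{align}
\frac12\mathbb{E}_h\left\|(\mathcal{R}^h-\mathcal{U})(\tau)\right\|_1 \geq 1 - 4\,\e^{-\frac{1-\alpha}{\alpha}\left(nR - H^{\downarrow}_{2-\sfrac{1}{\alpha}}(A^n_1{\,|\,}X^n_1 E)_\tau\right)}.
\end{align}
Next, following the strategy of Dupuis~\cite[Theorem~9]{Dup21}, I would relate the conditional R\'enyi entropy of $\tau$ to the tradeoff function via an entropy-accumulation-type chain rule: there is a bound of the form $H^{\downarrow}_{\beta}(A^n_1{\,|\,}X^n_1 E)_\tau \leq n f(w) + n(\beta-1)\cdot\frac{V^2}{2}\cdot(1+o(1)) + \frac{1}{\beta-1}\log\frac{1}{\Pr[\wt(T^n_1)=w]}$ for $\beta$ near $1$, where the $\log(1/\Pr)$ term comes from the cost of conditioning on the event $\{\wt(T^n_1)=w\}$ and the linear-in-$(\beta-1)$ term comes from a second-order expansion of the per-round tradeoff bound. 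Substituting $\beta = 2-\sfrac1\alpha$ (so $\beta-1 = \frac{1-\alpha}{\alpha}$ up to sign bookkeeping: $\beta - 1 = 1 - \sfrac1\alpha = -\frac{1-\alpha}{\alpha}$, and since $\alpha<1$ this is negative, consistent with $\beta<1$), the exponent in the displayed inequality becomes $\frac{1-\alpha}{\alpha}\left(nR - nf(w) - n\frac{1-\alpha}{\alpha}\cdot\frac{V^2}{2} + \log\Pr[\wt(T^n_1)=w]\right)$ up to lower-order terms; factoring out $\Pr[\wt(T^n_1)=w]^{-1}$ and optimizing the quadratic in the small parameter $t := \frac{1-\alpha}{\alpha}>0$ yields $\sup_{t>0}\, t(R-f(w)) - t^2 \frac{V^2}{2} = \frac{(R-f(w))^2}{2V^2}$, attained at $t = (R-f(w))/V^2$, which lies in the admissible range precisely when $0 < R - f(w) < V$ (this is where the hypothesis is used, to keep $\alpha\in(\sfrac12,1)$). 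This gives the claimed bound $1 - \frac{4}{\Pr[\wt(T^n_1)=w]}\e^{-\frac{n}{2}\left(\frac{R-f(w)}{V}\right)^2}$.

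The main obstacle I anticipate is establishing the Taylor-type estimate for $H^{\downarrow}_{\beta}(A^n_1{\,|\,}X^n_1 E)_\tau$ with the correct second-order constant $V$ and the correct dependence on $\Pr[\wt(T^n_1)=w]$: one must either invoke the R\'enyi entropy accumulation machinery of \cite{DFR20} (entropy accumulation with a min-tradeoff function, controlling the derivative of the R\'enyi rate at $\alpha=1$ by the variance constant $V$), or re-derive it in the Petz--R\'enyi form needed here. The conditioning step — passing from the unconditioned state to $\rho_{\,\cdot\,\mid\wt(T^n_1)=w}$ — must be handled carefully because it is exactly what produces the $\frac{1}{\Pr}$ prefactor; the cleanest route is to observe that for a sub-normalized operator obtained by restricting to an event of probability $p$, the relevant trace quantity $\Tr[(p(x)\rho_x)^{1-s}\rho_E^s]$ picks up a factor controlled by $p^{-s}$, which after the substitution $s = \frac{1-\alpha}{\alpha}$ and the supremum over $\alpha$ collapses to the single power $\Pr[\wt(T^n_1)=w]^{-1}$ shown in the achievability bound. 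Everything else is bookkeeping: the reparametrization $\alpha\leftrightarrow\beta$, checking the endpoint constraints, and the elementary one-variable optimization of the quadratic exponent.
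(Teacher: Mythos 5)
Your high-level route is the same as the paper's: apply Theorem~\ref{theo:sc} to the conditioned state $\rho_{A^n_1 X^n_1 E\mid\wt(T^n_1)=w}$ (with $A^n_1$ playing the role of ``$X$'' and $X^n_1 E$ the role of ``$E$''), then bound the resulting Petz--R\'enyi conditional entropy by the tradeoff function plus a quadratic correction, and finally optimize a parabola in the small parameter. The one genuine gap is the form of R\'enyi quantity fed into the entropy-accumulation bound: the result you need is not available for the Petz quantity $H^{\downarrow}_{\beta}$ directly. The paper first replaces $H^{\downarrow}_{2-\sfrac{1}{\alpha}}$ by a sandwiched-type quantity $H^{\downarrow*}_{2-\sfrac{1}{\alpha}}(X{\,|\,}E)_\rho := \frac{1}{1-\alpha}\log \tr\bigl[(\rho_E^{\sfrac{(1-\alpha)}{\alpha}}\rho_{XE}\rho_E^{\sfrac{(1-\alpha)}{\alpha}})^{\alpha}\bigr]$ via the inequality $H^{\downarrow}_{2-\sfrac{1}{\alpha}} \geq H^{\downarrow*}_{2-\sfrac{1}{\alpha}}$, because it is for $H^{\downarrow*}$ (after the reparametrization $\sfrac{1}{\alpha'} = 2 - \sfrac{1}{\alpha}$) that \cite[Proposition 4.5]{DFR20} gives the bound
\begin{align}
H^{\downarrow*}_{\sfrac{1}{\alpha'}}\bigl(A^n_1{\,|\,}X^n_1E{\,|\,}\wt(T^n_1)=w\bigr)\leq nf(w) + n\frac{\alpha'-1}{4}V^2 + \frac{\alpha'}{\alpha'-1}\log\frac{1}{\Pr[\wt(T_1^n)=w]}\,,\qquad 1<\alpha'<1+\tfrac2V.
\end{align}
Your proposed bound for $H^{\downarrow}_{\beta}$ with $\beta<1$ has the wrong signs (both corrections should be positive, not $(\beta-1)$-proportional), and the coefficient is $\frac{\alpha'-1}{4}V^2$ rather than $\frac{V^2}{2}(\beta-1)$; after multiplying by $\frac{\alpha'-1}{\alpha'}$ and relaxing $\frac{\alpha'-1}{4}$ to $\frac{\alpha'-1}{2\alpha'}$ (valid since $\alpha'<2$), one gets exactly the parabola $n\bigl(\beta(R-f(w)) - \beta^2\frac{V^2}{2}\bigr)$ with $\beta:=\frac{\alpha'-1}{\alpha'}$, which you optimize correctly. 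The hypothesis $0<R-f(w)<V$ (together with the implicit $V>2$) is what keeps the optimizer $\beta=\frac{R-f(w)}{V^2}$ inside the admissible range $\bigl(0,\frac{2}{V+2}\bigr)$. So: right strategy, right final computation, but you need the bridge $H^{\downarrow}\geq H^{\downarrow*}$ and the precise DFR20 proposition rather than a re-derived Taylor bound, and the intermediate signs/constants need to be fixed accordingly.
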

\begin{proof}[Proof of Theorem \ref{theo:QKD}]
	Using Theorem \ref{theo:sc},
	\begin{align}
		&\frac{1}{2}\mathbb{E}_h\left\|\mathcal{R}^h(\rho_{A^n_1X^n_1E|\wt(T^n_1)=w})--\frac{\mathds{1}}{2^{nR}}\otimes\rho_{X^n_1E|\wt(T^n_1)=w}\right\|_1\\
		\geq& 1-4\e^{-\frac{1-\alpha}{\alpha}\left(nR-H^{\downarrow}_{2-\frac{1}{\alpha}}(A^n_1|X^n_1E|\wt(T^n_1)=w)\right)}\\
		\overset{(a)}{\geq}& 1-4\e^{-\frac{1-\alpha}{\alpha}\left(nR-H^{\downarrow*}_{2-\frac{1}{\alpha}}(A^n_1|X^n_1E|\wt(T^n_1)=w)\right)}\\
		\overset{(b)}{=} &1-4\e^{-\frac{\alpha'-1}{\alpha'}\left(nR-H^{\downarrow*}_{\frac{1}{\alpha'}}(A^n_1|X^n_1E|\wt(T^n_1)=w)\right)}\\
		\overset{(c)}{\geq}& 1-4\e^{-\frac{\alpha'-1}{\alpha'}\left(nR-nf(w)-n\left(\frac{\alpha'-1}{4}V^2\right) - \frac{\alpha'}{\alpha'-1}\log\frac{1}{\Pr[\wt(T_1^n)=w]}\right)}\\
		=& 1-\frac{4}{\Pr[\wt(T_1^n)=w]}\e^{-\frac{\alpha'-1}{\alpha'}\left(nR-nf(w)-n\left(\frac{\alpha'-1}{4}V^2\right)\right)}
	\end{align}
	In (a), $H_{\alpha}^{\downarrow*}(X{\,|\,}E)_{\rho} := \frac{1}{1-\alpha}\log \tr \left[  (\rho_E^{\frac{1-\alpha}{\alpha}}\rho_{XE}\rho_E^{\sfrac{(1-\alpha)}{\alpha}})^{\alpha} \right]$. In (b), $\frac{1}{\alpha'} = 2-\frac{1}{\alpha}$. Since $\frac{1}{2}<\alpha<1$, $\alpha'\geq1$. In (c) we use \cite[Proposition 4.5]{DFR20}: for $1<\alpha'<1+\frac{2}{V} < 2$ the following bound holds:
	\begin{align}
		&H^{\downarrow*}_{\frac{1}{\alpha'}}(A^n_1|X^n_1E|\wt(T^n_1)=w)\leq nf(w) + n\left(\frac{\alpha'-1}{4}\right)V^2 + \frac{\alpha'}{\alpha'-1}\log\frac{1}{\Pr[\wt(T_1^n)=w]}.
	\end{align}
	Let $\beta := \frac{\alpha'-1}{\alpha'}$, the exponent become
	\begin{align}
		\frac{\alpha'-1}{\alpha'}\left(nR-nf(w)-n\left(\frac{\alpha'-1}{4}V^2\right)\right)
		&\geq \frac{\alpha'-1}{\alpha'}\left(nR-nf(w)-n\left(\frac{\alpha'-1}{2\alpha'}V^2\right)\right)\\
		& = n\left(\beta(R-f(w))-\beta^2(\frac{V^2}{2})\right)
	\end{align}
		Since $1<\alpha'<1+\sfrac2V$, then $0<\beta<\frac{2}{V+2}$. Under the assumption $0<R-f(w)<V< \frac{2V^2}{V+2}$,
 the exponent can be maximized to $n(\frac{1}{2}\frac{(R-f(w))^2}{V^2})$ at $\beta = \frac{R-f(w)}{V^2}<\frac{2}{V+2}$. That completes the proof.
\end{proof}

\section{Moderate Deviation Analysis} \label{sec:moderate}
We shall now applies our
 one-shot strong converse to the \emph{moderate deviation regime} \cite{CH17, CTT2017}.
We call $(a_n)_{n\in \mathds{N}}$ a positive moderate deviation sequence if it satisfies
\begin{align}
        \lim_{n\to \infty }a_n=0 \ ,\   
          \lim_{n\to \infty }n a_n^2=\infty. \label{eq:an}
\end{align}
In other words, the moderate deviation sequence vanishes (i.e.~of order $o(1)$) but slower than $O(\sfrac{1}{\sqrt{n}})$ (i.e.~of order $\omega(\sfrac{1}{\sqrt{n}})$).

For each $n\in\mathds{N}$, let us denote $R_n := \frac1n \log |\mathcal{Z}^n|$ as the rate of the extracted randomness. 
We have the following moderate deviation characterization for privacy amplification against quantum side information when the rate $R_n$ approaches the first-order limit $H(X{\,|\,}E)_\rho$ at speed of $a_n$.
\begin{prop}[Moderate deviations for privacy amplification]\label{prop:moderate_PA}
	Let $\rho_{XE}$ be a classical-quantum state and assume that $V(X{\,|\,}E)_\rho>0$. Suppose $(a_n)_{n\in \mathds{N}}$ is a moderate deviation sequence. Then for any strongly $2$-universal hash function $h^n: \mathcal{X}^n\to \mathcal{Z}^n$,
	\begin{align}
	\begin{dcases}
	\liminf_{n\to \infty} - \frac{1}{n a_n^2} \log \left(  \frac12\mathds{E}_{h} \left\|  \left(\mathcal{R}^{h^n} - \mathcal{U}^{n} \right) \left(\rho_{XE}^{\otimes n}\right) \right\|_1 \right) \geq \frac{1}{2 V(X{\,|\,}E)_\rho},
	& \text{ if  } |\mathcal{Z}^n| = \e^{n(H(X{\,|\,}E)_\rho - a_n)} \\
	\liminf_{n\to \infty} - \frac{1}{n a_n^2} \log \left( 1 - \frac12\mathds{E}_{h} \left\|  \left(\mathcal{R}^{h^n} - \mathcal{U}^{n} \right) \left(\rho_{XE}^{\otimes n}\right) \right\|_1 \right) \geq \frac{1}{2 V(X{\,|\,}E)_\rho}, &  \text{ if  } |\mathcal{Z}^n| = \e^{n(H(X{\,|\,}E)_\rho + a_n)}
	\end{dcases}.
	\end{align}
where $\mathcal{U}^n$ is the perfectly randomizing channel from $\mathcal{X}^n$ to $\mathcal{Z}^n$.
\end{prop}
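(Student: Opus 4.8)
The plan is to push the two exponential i.i.d.\ bounds that are already available through the standard moderate‑deviation expansion, the only analytic input being the second‑order behaviour of the conditional R\'enyi entropies at $\alpha=1$. For $|\mathcal{Z}^n|=\mathrm{e}^{n(H(X{\,|\,}E)_\rho-a_n)}$ (so $R_n=H(X{\,|\,}E)_\rho-a_n$) I would apply Dupuis' leftover hash lemma \eqref{dupuis} to $\rho_{XE}^{\otimes n}$ with the hash $h^n\colon\mathcal{X}^n\to\mathcal{Z}^n$; using that the sandwiched conditional R\'enyi entropy is additive under tensor products --- or, equivalently for our purposes, replacing $H^*_\alpha(X{\,|\,}E)_\rho$ by its lower bound $-D^*_\alpha(\rho_{XE}\|\mathds{1}_X\otimes\rho_E)$, which is trivially additive --- this yields, for every $\alpha\in(1,2]$,
\begin{align}
  \tfrac12\mathds{E}_{h}\big\|(\mathcal{R}^{h^n}-\mathcal{U}^n)(\rho_{XE}^{\otimes n})\big\|_1 \leq \mathrm{e}^{-n\frac{\alpha-1}{\alpha}\left(H^*_\alpha(X{\,|\,}E)_\rho-R_n\right)}.
\end{align}
For $|\mathcal{Z}^n|=\mathrm{e}^{n(H(X{\,|\,}E)_\rho+a_n)}$ (so $R_n=H(X{\,|\,}E)_\rho+a_n$) I would start from Corollary~\ref{coro:sc} and substitute $\beta=2-\sfrac{1}{\alpha}$, so that $\beta$ ranges over $(0,1)$ as $\alpha$ ranges over $(\sfrac12,1)$ and $\tfrac{1-\alpha}{\alpha}=1-\beta$, giving
\begin{align}
  1-\tfrac12\mathds{E}_{h}\big\|(\mathcal{R}^{h^n}-\mathcal{U}^n)(\rho_{XE}^{\otimes n})\big\|_1 \leq 4\,\mathrm{e}^{-n(1-\beta)\left(R_n-H^{\downarrow}_{\beta}(X{\,|\,}E)_\rho\right)},\qquad \beta\in(0,1).
\end{align}

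Next I would record the first‑order expansions at $\alpha=1$. By \eqref{eq:alpha1}, $H^*_\alpha(X{\,|\,}E)_\rho$ and $H^{\downarrow}_\alpha(X{\,|\,}E)_\rho$ both equal $H(X{\,|\,}E)_\rho$ at $\alpha=1$; differentiating the Petz, resp.\ sandwiched, divergence there and identifying the first‑order coefficient with the relative‑entropy variance gives, for small $t>0$, $H^{\downarrow}_{1-t}(X{\,|\,}E)_\rho=H(X{\,|\,}E)_\rho+\tfrac t2 V(X{\,|\,}E)_\rho+O(t^2)$ and $H^*_{1+t}(X{\,|\,}E)_\rho\geq -D^*_{1+t}(\rho_{XE}\|\mathds{1}_X\otimes\rho_E)=H(X{\,|\,}E)_\rho-\tfrac t2 V(X{\,|\,}E)_\rho+O(t^2)$, with remainders uniform on a fixed interval $t\in(0,\delta]$; the inequality in the second expression disposes of the infimum over $\sigma_E$ and leaves a manifestly smooth function of $\alpha$, so no differentiability issue arises. (The Taylor expansions of $D_\alpha$ and $D^*_\alpha$ about $\alpha=1$ are standard and may be cited.)

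Given these, the rest is the elementary optimisation $\max_{t>0}\, t\big(a_n-\tfrac t2 V\big)=\tfrac{a_n^2}{2V}$, attained at $t=a_n/V$, where $V:=V(X{\,|\,}E)_\rho>0$. In the first case I would take $\alpha_n:=1+a_n/V$, which lies in $(1,2]$ for $n$ large and tends to $1$; since $R_n=H(X{\,|\,}E)_\rho-a_n$, the exponent $n\tfrac{\alpha_n-1}{\alpha_n}\big(H^*_{\alpha_n}(X{\,|\,}E)_\rho-R_n\big)$ is at least $n\tfrac{\alpha_n-1}{\alpha_n}\big(a_n-\tfrac{\alpha_n-1}{2}V-O((\alpha_n-1)^2)\big)=\tfrac{na_n^2}{2V}(1+o(1))$, because $\alpha_n-1=a_n/V\to 0$; dividing by $na_n^2$ and letting $n\to\infty$ (using $na_n^2\to\infty$) yields $\liminf\geq\tfrac{1}{2V(X{\,|\,}E)_\rho}$. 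In the second case I would take $\beta_n:=1-a_n/V\in(0,1)$ for $n$ large; since $R_n=H(X{\,|\,}E)_\rho+a_n$, the exponent $n(1-\beta_n)\big(R_n-H^{\downarrow}_{\beta_n}(X{\,|\,}E)_\rho\big)$ is again $\tfrac{na_n^2}{2V}(1+o(1))$, and since $na_n^2\to\infty$ the constant prefactor $4$ only contributes $-\tfrac{\log 4}{na_n^2}\to 0$ after taking $-\tfrac{1}{na_n^2}\log(\cdot)$, so once more $\liminf\geq\tfrac{1}{2V(X{\,|\,}E)_\rho}$.

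The only genuinely technical point --- and the place I expect to spend the most care --- is the expansion step: establishing differentiability of the R\'enyi conditional entropies at $\alpha=1$, pinning the first derivative to $\mp\tfrac12 V(X{\,|\,}E)_\rho$, and controlling the remainder uniformly enough that, after multiplying by $n$, by the vanishing factor $\tfrac{\alpha_n-1}{\alpha_n}\asymp a_n$ (resp.\ $1-\beta_n\asymp a_n$), and dividing by $na_n^2$, every contribution other than $\tfrac{1}{2V(X{\,|\,}E)_\rho}$ disappears in the limit. This is precisely where both defining properties of a moderate sequence, $a_n\downarrow 0$ and $na_n^2\to\infty$, enter; everything else is routine bookkeeping, and the same template transfers verbatim to Propositions~\ref{prop:moderate_wiretap} and \ref{prop:moderate_EA}.
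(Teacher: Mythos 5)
Your proposal follows essentially the same route as the paper's proof: feed the i.i.d.\ exponential bounds (Dupuis' Theorem~8 / Eq.~\eqref{dupuis} and Corollary~\ref{coro:sc}) through a first-order Taylor expansion of the conditional R\'enyi entropies at $\alpha=1$, choose $\alpha_n$ (or your $\beta_n$) at distance $a_n/V$ from $1$, and verify that the resulting exponent is $\tfrac{n a_n^2}{2V}(1+o(1))$ while the constant prefactors $1$ and $4$ wash out after dividing by $n a_n^2 \to \infty$. The reparameterization $\beta = 2 - \sfrac{1}{\alpha}$ is cosmetic and consistent with the paper's direct use of $\alpha$.

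Two small remarks. First, your parenthetical observation---that for the achievability side one may drop the infimum over $\sigma_E$ and work with $-D^*_\alpha(\rho_{XE}\|\mathds{1}_X\otimes\rho_E)$, which is manifestly smooth at $\alpha=1$ with the same first derivative $-\tfrac12 V(X{\,|\,}E)_\rho$---is a legitimate simplification that sidesteps the need to invoke differentiability of the optimized $H^*_\alpha$; the paper instead cites \cite[Proposition~11]{Tomhay16} (Lemma~\ref{lemm:updiff}) to get this directly, so both routes are available. Second, your $O(t^2)$ remainder claim is slightly stronger than what Lemma~\ref{lemm:updiff} guarantees for $H^*_\alpha$ (only $C^1$ on $[1,2]$, hence a Peano $o(t)$ remainder), though it holds for $H^\downarrow_{2-\sfrac{1}{\alpha}}$ by analyticity (Lemma~\ref{lemm:downdiff}); in either case the $o(t)$ form is what the paper actually uses and is sufficient for the limit, so this does not affect the conclusion. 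Beyond that, your identification of the one technical pressure point (controlling the remainder uniformly so that everything but $\tfrac{1}{2V}$ vanishes) is exactly where the paper spends its care too.
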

The above theorem means that the trace distance vanishes asymptotically when the rate $R_n$ approaches $H(X{\,|\,}E)_\rho$ from below at speed no faster than $O(\sfrac{1}{\sqrt{n}})$.
On the other hand, in the strong converse regime where $R_n > H(X{\,|\,}E)_\rho$, the trace distance still converges to $1$ asymptotically when the rate of maximal extractable randomness approaches $H(X{\,|\,}E)_\rho$ from above at speed no faster than $O(\sfrac{1}{\sqrt{n}})$.

Before proving Theorem \ref{prop:moderate_PA}, we state the lemmas about derivative of entropic quantities that will be used in the proof.
\begin{lemm}[{\cite[Proposition 11]{Tomhay16}}]\label{lemm:updiff}
For every classical-quantum state $\rho_{XE}$, $\alpha \mapsto H^*_{\alpha}(X{\,|\,}E)_{\rho}$ and $\alpha \mapsto I^*_{\alpha}(X{\,:\,}E)_{\rho}$ are continuously differentiable on $\alpha \in [1,2]$. Moreover,
\begin{align}
\left.\frac{\mathrm{d}}{\mathrm{d}\alpha}H^*_{\alpha}(X{\,|\,}E)_{\rho}\right|_{\alpha = 1} = -\frac{V(X{\,|\,}E)_{\rho}}{2}\, ,\ \ \ 
\left.\frac{\mathrm{d}}{\mathrm{d}\alpha}I^*_{\alpha}(X{\,:\,}E)_{\rho}\right|_{\alpha = 1} = \frac{V(X{\,:\,}E)_{\rho}}{2}\,.
\end{align}
\end{lemm}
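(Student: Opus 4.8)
The plan is to treat the sandwiched conditional entropy as a parametrized minimization, $H^*_\alpha(X{\,|\,}E)_\rho = -\min_{\sigma_E}\, g_\alpha(\sigma_E)$ with $g_\alpha(\sigma_E):=D^*_\alpha(\rho_{XE}\,\|\,\mathds{1}_X\otimes\sigma_E)$, and to reduce the lemma to three facts: (i) $(\alpha,\sigma_E)\mapsto g_\alpha(\sigma_E)$ is jointly $C^1$ (indeed real-analytic) on a neighborhood of $[1,2]$ in the region $\sigma_E>0$; (ii) for every $\alpha$ in that neighborhood the minimizer $\sigma^*_\alpha$ exists, is unique, lies in a fixed compact subset of the interior of the density operators on $\supp\rho_E$, and hence depends continuously on $\alpha$, with $\sigma^*_1=\rho_E$; and (iii) for fixed second argument, $\partial_\alpha g_\alpha(\sigma_E)\big|_{\alpha=1}=\tfrac12 V(\rho_{XE}\,\|\,\mathds{1}_X\otimes\sigma_E)$. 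Granting (i)--(iii), Danskin's envelope theorem applies: $\alpha\mapsto H^*_\alpha(X{\,|\,}E)_\rho$ is $C^1$ on $[1,2]$ with $\frac{\mathrm d}{\mathrm d\alpha}H^*_\alpha(X{\,|\,}E)_\rho = -\partial_\alpha g_\alpha(\sigma_E)\big|_{\sigma_E=\sigma^*_\alpha}$, the $\sigma_E$-direction contribution vanishing because $\sigma^*_\alpha$ is an interior stationary point; evaluating at $\alpha=1$ with $\sigma^*_1=\rho_E$ and using $V(\rho_{XE}\,\|\,\mathds{1}_X\otimes\rho_E)=V(X{\,|\,}E)_\rho$ gives $-\tfrac12 V(X{\,|\,}E)_\rho$. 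The mutual-information statement is identical after replacing $\mathds{1}_X$ by $\rho_X$, since at $\alpha=1$ the minimizer of $\sigma_E\mapsto D(\rho_{XE}\,\|\,\rho_X\otimes\sigma_E)$ is again $\rho_E$ and $V(\rho_{XE}\,\|\,\rho_X\otimes\rho_E)=V(X{\,:\,}E)_\rho$.

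For (i), I would restrict to $\supp\rho_E$, on which $\rho_E>0$; any $\sigma_E$ whose support does not contain $\supp\rho_E$ makes $g_\alpha=+\infty$, so the minimization is effectively over $\sigma_E>0$. On that open set the objective is assembled from the scalar functions $t\mapsto t^{(1-\alpha)/(2\alpha)}$ and $t\mapsto t^\alpha$ applied to positive Hermitian matrices, followed by matrix products, trace, and $\log$; each step is real-analytic in the matrix entries and in $\alpha\in(\tfrac12,\infty)$, and the prefactor $\tfrac1{\alpha-1}$ combines with $\log Q^*_\alpha$ (which vanishes at $\alpha=1$ since $Q^*_1=\Tr\rho_{XE}=1$) into a function that extends real-analytically across $\alpha=1$; hence $g_\alpha(\sigma_E)$ is jointly real-analytic, in particular $C^1$, on a neighborhood of $[1,2]$. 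For (ii), convexity of $\sigma_E\mapsto D^*_\alpha(\rho_{XE}\,\|\,\mathds{1}_X\otimes\sigma_E)$ for $\alpha\ge1$ together with the observation that $g_\alpha(\sigma_E)\to+\infty$ uniformly for $\alpha$ in a compact set as any eigenvalue of $\sigma_E$ tends to $0$ gives existence and confines all minimizers to one fixed compact subset of the positive cone; uniqueness for classical-quantum states is known (it follows from strict convexity on the relevant set, or equivalently from the explicit form of the optimal $\sigma_E$), and uniqueness plus the uniform compactness bound yields continuity of $\alpha\mapsto\sigma^*_\alpha$ by the usual argument (any subsequential limit of $\sigma^*_{\alpha_n}$ is a minimizer for the limiting $\alpha$, hence equals $\sigma^*_\alpha$). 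Finally $\sigma^*_1=\rho_E$ because $\inf_{\sigma_E}D(\rho_{XE}\,\|\,\mathds{1}_X\otimes\sigma_E)=-H(X{\,|\,}E)_\rho$ is attained exactly at $\sigma_E=\rho_E$ by Klein's inequality and its equality case. For (iii), I would invoke (or reprove by differentiating $\alpha\mapsto\tfrac1{\alpha-1}\log Q^*_\alpha(\rho\,\|\,\sigma)$ via the integral representation of operator powers) the standard second-order expansion $D^*_\alpha(\rho\,\|\,\sigma)=D(\rho\,\|\,\sigma)+\tfrac{\alpha-1}{2}V(\rho\,\|\,\sigma)+O((\alpha-1)^2)$, whose first-order coefficient is precisely the asserted $\partial_\alpha$-value.

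The main obstacle is item (ii): ensuring the minimizer is unique and varies continuously, so that Danskin's theorem yields genuine differentiability of $\alpha\mapsto H^*_\alpha$ rather than merely a subdifferential, and ensuring it stays in the interior uniformly over $[1,2]$ so that the envelope formula is the interior stationary one. I would handle this either by appealing to the known strict convexity / closed-form optimality of $\sigma_E$ for classical-quantum states, or, failing that, by proving continuity of the argmin directly from the compactness-of-minimizers estimate above. The endpoint $\alpha=2$ poses no difficulty since all smoothness is established on an open neighborhood of $[1,2]$, and the endpoint $\alpha=1$ is covered by the removable-singularity observation in (i); everything else (the substitutions in the definitions of $V(X{\,|\,}E)_\rho$ and $V(X{\,:\,}E)_\rho$, and the parallel treatment of $I^*_\alpha$) is routine bookkeeping.
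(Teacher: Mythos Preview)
The paper does not prove this lemma; it merely cites \cite[Proposition~11]{Tomhay16} and uses the result as a black box. There is therefore no in-paper argument to compare your proposal against.

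That said, your envelope-theorem approach is sound and would furnish a self-contained proof. Items (i) and (iii) are routine exactly as you note, and you correctly isolate (ii)---uniqueness and interior location of the minimizer $\sigma^*_\alpha$, hence its continuity in $\alpha$---as the one nontrivial ingredient needed for Danskin to upgrade from a subdifferential statement to genuine $C^1$ regularity. Uniqueness does hold for $\alpha\in[1,2]$ (it is established in the cited reference, and also follows from the explicit fixed-point characterization of the optimal marginal in $H^*_\alpha$); once that is secured, your compactness argument gives continuity of $\alpha\mapsto\sigma^*_\alpha$, Danskin delivers the $C^1$ claim on $[1,2]$, and evaluating at $\alpha=1$ with $\sigma^*_1=\rho_E$ yields $-\tfrac12 V(X{\,|\,}E)_\rho$ as desired. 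The treatment of $I^*_\alpha$ is, as you say, verbatim with $\mathds{1}_X$ replaced by $\rho_X$.
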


\begin{lemm}[\cite{CHD+21},\cite{CH17}]\label{lemm:downdiff}
For every classical-quantum state $\rho_{XE}$, $\alpha \mapsto H^{\downarrow}_{2-\sfrac{1}{\alpha}}(X{\,|\,}E)_{\rho}$ and $\alpha \mapsto I^{\downarrow}_{2-\sfrac{1}{\alpha}}(X{\,:\,}E)_{\rho}$ are analytical on $\alpha \in \left[\sfrac{1}{2},1\right]$. Moreover,
\begin{align}
\left.\frac{\mathrm{d}}{\mathrm{d}\alpha}H^{\downarrow}_{2-\sfrac{1}{\alpha}}(X{\,|\,}E)_{\rho}\right|_{\alpha = 1} = -\frac{V(X{\,|\,}E)_{\rho}}{2}\, ,\ \ \ 
\left.\frac{\mathrm{d}}{\mathrm{d}\alpha}I^{\downarrow}_{2-\sfrac{1}{\alpha}}(X{\,:\,}E)_{\rho}\right|_{\alpha = 1} = \frac{V(X{\,:\,}E)_{\rho}}{2}\,.
\end{align}
\end{lemm}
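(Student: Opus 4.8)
The plan is to reduce both assertions to a single holomorphic function of one variable via the change of parameter $\beta = 2 - \sfrac{1}{\alpha}$, which is analytic on a neighbourhood of $[\sfrac{1}{2},1]$ (it avoids $\alpha = 0$) and maps $[\sfrac{1}{2},1]$ bijectively onto the Petz range $[0,1]$. Because $\rho_{XE}$ is a c-q state and $\supp(\rho_E^x)\subseteq\supp(\rho_E)$ whenever $p_X(x)>0$ (as $\rho_E\geq p_X(x)\rho_E^x$), the two defining traces split over $x$:
\begin{align*}
	Q(\beta) &:= \Tr\big[\rho_{XE}^{\beta}(\mathds{1}_X\otimes\rho_E)^{1-\beta}\big] = \sum_{x} p_X(x)^{\beta}\,\Tr\big[(\rho_E^x)^{\beta}\rho_E^{1-\beta}\big],\\
	\widetilde Q(\beta) &:= \Tr\big[\rho_{XE}^{\beta}(\rho_X\otimes\rho_E)^{1-\beta}\big] = \sum_{x} p_X(x)\,\Tr\big[(\rho_E^x)^{\beta}\rho_E^{1-\beta}\big],
\end{align*}
so that $H^{\downarrow}_{2-\sfrac{1}{\alpha}}(X{\,|\,}E)_\rho = -\tfrac{1}{\beta-1}\log Q(\beta)$ and $I^{\downarrow}_{2-\sfrac{1}{\alpha}}(X{\,:\,}E)_\rho = \tfrac{1}{\beta-1}\log \widetilde Q(\beta)$ with $\beta = 2-\sfrac{1}{\alpha}$, and it suffices to treat these two functions of $\beta$ near $[0,1]$ and then compose with $\alpha\mapsto 2-\sfrac{1}{\alpha}$.

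First I would establish analyticity. By the spectral theorem $\beta\mapsto t^{\beta} = \mathrm{e}^{\beta\log t}$ is entire for each eigenvalue $t>0$ (and identically $0$ when $t = 0$), so $\beta\mapsto(\rho_E^x)^{\beta}$ and $\beta\mapsto\rho_E^{1-\beta}$ are entire matrix-valued functions; taking products, traces and the finite sum over $x$ shows $Q$ and $\widetilde Q$ are entire. On $[0,1]$ they are strictly positive, and $Q(1)=\widetilde Q(1)=\sum_x p_X(x)\Tr[\rho_E^x]=1$, so $\log Q$ and $\log\widetilde Q$ are holomorphic on an open neighbourhood of $[0,1]$; since they vanish at $\beta=1$ one can factor $\log Q(\beta) = (\beta-1)g(\beta)$ with $g$ holomorphic, whence $\tfrac{1}{\beta-1}\log Q(\beta) = g(\beta)$ extends holomorphically across $\beta=1$, and similarly for $\widetilde Q$. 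Composing with $\alpha\mapsto 2-\sfrac{1}{\alpha}$ gives the stated analyticity on a neighbourhood of $[\sfrac{1}{2},1]$.

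For the values at $\alpha=1$, the chain rule gives $\tfrac{\mathrm d\beta}{\mathrm d\alpha}\big|_{\alpha=1} = \tfrac{1}{\alpha^2}\big|_{\alpha=1} = 1$, so I would just differentiate in $\beta$ at $\beta=1$. Expanding $\log Q$ to second order at $\beta=1$, using $Q'(1) = D(\rho_{XE}\|\mathds{1}_X\otimes\rho_E)$ and $Q''(1) = \Tr[\rho_{XE}(\log\rho_{XE}-\log(\mathds{1}_X\otimes\rho_E))^2]$, recovers the standard expansion $D_\beta(\rho\|\sigma) = D(\rho\|\sigma) + \tfrac{\beta-1}{2}V(\rho\|\sigma) + O((\beta-1)^2)$; reading off the value and slope of $g(\beta) = -\tfrac{1}{\beta-1}\log Q(\beta)$ at $\beta=1$ then gives $g(1) = -H(X{\,|\,}E)_\rho$ (consistent with \eqref{eq:alpha1}) and $g'(1) = -\tfrac12 V(X{\,|\,}E)_\rho$, and the chain-rule factor $1$ yields the first identity. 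The mutual-information identity follows verbatim with $\mathds{1}_X\otimes\rho_E$ replaced by $\rho_X\otimes\rho_E$ and the overall sign reversed (since $I^{\downarrow} = +D_\beta$), giving $+\tfrac12 V(X{\,:\,}E)_\rho$; both values agree with Lemma~\ref{lemm:updiff}, as they must, the Petz and sandwiched divergences coinciding to second order at $\alpha=1$.

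The step I expect to be the main obstacle is the behaviour at the endpoint $\beta = 1$ (i.e.\ $\alpha=1$): one has to argue that the singularity of $\tfrac{1}{\beta-1}\log Q(\beta)$ there is genuinely \emph{removable}, which is exactly what entire-ness of $Q$ buys (a convergent power series at $\beta=1$ with vanishing constant term), rather than settling for mere continuity; one must also make sure throughout that the support inclusions $\supp(\rho_{XE})\subseteq\supp(\mathds{1}_X\otimes\rho_E)$ and $\supp(\rho_{XE})\subseteq\supp(\rho_X\otimes\rho_E)$ hold, so that the Petz divergences and all the matrix powers above are finite and the splitting formulas for $Q$ and $\widetilde Q$ are valid. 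Everything else is routine differentiation of traces of analytic matrix-valued functions.
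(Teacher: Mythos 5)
The paper never proves Lemma~\ref{lemm:downdiff}; it simply cites \cite{CHD+21} and \cite{CH17}, so your argument is a self-contained supplement rather than a rederivation, and the approach you take is the right one. The observation that drives it---for a c-q state, $Q(\beta)=\Tr[\rho_{XE}^\beta(\mathds{1}_X\otimes\rho_E)^{1-\beta}]$ and $\widetilde Q(\beta)=\Tr[\rho_{XE}^\beta(\rho_X\otimes\rho_E)^{1-\beta}]$ decompose into finite sums of terms $(\mu_j^x)^\beta\nu_k^{1-\beta}|\langle\phi_j^x|\psi_k\rangle|^2$ with $\mu_j^x,\nu_k>0$ once one uses $\supp(\rho_E^x)\subseteq\supp(\rho_E)$, hence are entire in $\beta$---is exactly what makes the ``removable singularity at $\beta=1$'' argument go through, and the chain-rule factor $\tfrac{\mathrm d\beta}{\mathrm d\alpha}\big|_{\alpha=1}=1$ reduces the statement to the routine second-order Taylor expansion of $\beta\mapsto\tfrac{1}{\beta-1}\log Q(\beta)$. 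I would accept this as a valid proof.

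Two small cleanups. First, there is a sign slip: with $g(\beta)=-\tfrac{1}{\beta-1}\log Q(\beta)$ one gets $g(1)=-Q'(1)=-D(\rho_{XE}\|\mathds{1}_X\otimes\rho_E)=+H(X{\,|\,}E)_\rho$, not $-H(X{\,|\,}E)_\rho$; the positive sign is precisely what agrees with \eqref{eq:alpha1}, as you yourself note, so this is just a typo in the writeup. Second, when you ``read off the slope'' from $Q'(1)$ and $Q''(1)$, the coefficient of $(\beta-1)$ in $\log Q(\beta)$ is $\tfrac12\bigl(Q''(1)-Q'(1)^2\bigr)$, i.e.\ the \emph{centered} second moment $\Tr[\rho(\log\rho-\log\sigma)^2]-D(\rho\|\sigma)^2$. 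That is the quantity that the cited references call $V$, and it is what your Taylor computation actually produces; it is worth flagging that the definition of $V(\rho\|\sigma)$ given in Section~\ref{sec:notation} of this paper omits the $-D(\rho\|\sigma)^2$ term and therefore does not literally match the derivative formula in the lemma as stated---an apparent typo in the paper's notation section rather than an error in your argument.
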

\begin{proof}[Proof of Theorem \ref{prop:moderate_PA}]
We start with the first claim. 
Let $R_n = H(X{\,|\,}E)_{\rho} - a_n$.
By \cite[Theorem 8]{Dup21}, one has, for every $n\in\mathds{N}$,
\begin{align}
\frac12\mathds{E}_{h} \left\|  \left(\mathcal{R}^{h^n} - \mathcal{U}^{n} \right) \left(\rho_{XE}^{\otimes n}\right) \right\|_1 \leq  \mathrm{e}^{-n \sup_{\alpha\in(1,2]}\frac{1-\alpha}{\alpha}  \left( R_n -  H_{\alpha}^{*}(X{\,|\,}E)_\rho\right) }.
\end{align}
Using Lemma \ref{lemm:updiff}, we can apply Taylor's series expansion of $\alpha\mapsto H^*_{\alpha}(X{\,|\,}E)_{\rho}$ at $\alpha=1$:
\begin{align}
H^*_{\alpha}(X{\,|\,}E)_{\rho} = H(X{\,|\,}E)_{\rho} - (\alpha-1)\frac{V(X{\,|\,}E)_{\rho}}{2} + \mathscr{R}(\alpha-1),
\end{align}
where $\mathscr{R}(\alpha-1)$ is a continuous function satisfying $\frac{\mathscr{R}(\alpha-1)}{\alpha-1}\to 0$ as $\alpha\to 1$.
Using the above expansion, the fact that $R_n = H(X{\,|\,}E)_{\rho} - a_n$, and letting  $\alpha_n =  1+\frac{a_n}{V(X{\,|\,}E)_\rho}$ for sufficient large $n\in\mathds{N}$ such that $\alpha_n \in (1,2]$, we have
\begin{align}
\sup_{\alpha\in (1,2]} \left\{\frac{1-\alpha}{\alpha}  \left( R_n -  H_{\alpha}^{*}(X{\,|\,}E)_\rho\right)\right\} &\geq \frac{1-\alpha_n}{\alpha_n}  \left( R_n -  H_{\alpha_n}^{*}(X{\,|\,}E)_\rho\right)\\
&= \frac{1}{1+\frac{a_n}{V(X{\,|\,}E)_{\rho}}}\left(\frac{a_n^2}{2V(X{\,|\,}E)_{\rho}} + \frac{a_n^2}{V(X{\,|\,}E)_{\rho}^2}\frac{\mathscr{R}(\alpha_n-1)}{\alpha_n-1}\right)\\
&= \frac{a_n^2}{2V(X{\,|\,}E)_{\rho}}\frac{1}{1+\frac{a_n}{V(X{\,|\,}E)_{\rho}}}\left(1+\frac{2}{V(X{\,|\,}E)_{\rho}}\frac{\mathscr{R}(\alpha_n-1)}{\alpha_n-1}\right).
\end{align}
In other words,
\begin{align}
- \frac{1}{n a_n^2} \log \left(  \frac12\mathds{E}_{h} \left\|  \left(\mathcal{R}^{h^n} - \mathcal{U}^{n} \right) \left(\rho_{XE}^{\otimes n}\right) \right\|_1 \right) \geq  \frac{1}{2V(X{\,|\,}E)_{\rho}}\frac{1}{1+\frac{a_n}{V(X{\,|\,}E)_{\rho}}}\left(1+\frac{2}{V(X{\,|\,}E)_{\rho}}\frac{\mathscr{R}(\alpha_n-1)}{\alpha_n-1}\right).
\end{align}
Recalling that $\displaystyle \lim_{n\to\infty} a_n = 0$ and $\displaystyle\frac{\mathscr{R}(\alpha_n-1)}{\alpha_n-1}\to0$,
we obtain the lower bound as desired, i.e.~
\begin{align}
\liminf_{n\to \infty} - \frac{1}{n a_n^2} \log \left(  \frac12\mathds{E}_{h} \left\|  \left(\mathcal{R}^{h^n} - \mathcal{U}^{n} \right) \left(\rho_{XE}^{\otimes n}\right) \right\|_1 \right) \geq \frac{1}{2 V(X{\,|\,}E)_\rho}.
\end{align}
This proves the first claim. For the second claim, we have by Theorem \ref{theo:sc} that for every $n\in\mathds{N}$,
\begin{align}
 1 - \frac12\mathds{E}_{h} \left\|  \left(\mathcal{R}^{h^n} - \mathcal{U}^{n} \right) \left(\rho_{XE}^{\otimes n}\right) \right\|_1 &\leq 4\, \mathrm{e}^{-n\sup_{\alpha\in(\sfrac12, 1)}
	\frac{1-\alpha}{\alpha} \big( R_n -  H_{2-\sfrac{1}{\alpha}}^{\downarrow}(X{\,|\,}E)_\rho\big)}.
\end{align}
Using Lemma \ref{lemm:downdiff}, we apply Taylor's series expansion again on $\alpha \mapsto H^{\downarrow}_{2-\sfrac{1}{\alpha}}(X{\,|\,}E)_{\rho}$ at $\alpha=1$:
\begin{align}
H^{\downarrow}_{2-\sfrac{1}{\alpha}}(X{\,|\,}E)_{\rho} = H(X{\,|\,}E)_{\rho} - (\alpha-1)\frac{V(X{\,|\,}E)_{\rho}}{2} + \frac{(\alpha-1)^2}{2} \left.\frac{\mathrm{d}^2}{\mathrm{d} \alpha^2}H^{\downarrow}_{2-\frac{1}{\alpha}}(X{\,|\,}E)_{\rho}\right|_{\alpha = \bar{\alpha}}
\end{align}
for some $\bar{\alpha}\in[\alpha,1]$. Using the above expansion, the fact that $R_n = H(X{\,|\,}E)_{\rho} + a_n$, and let $\alpha_n =  1-\frac{a_n}{V(X{\,|\,}E)_\rho}$, for all $\sfrac{1}{2}<\alpha<1$:
\begin{align}
&\sup_{\sfrac{1}{2}<\alpha<1}\left\{
	\frac{1-\alpha}{\alpha} \big( R_n -  H_{2-\sfrac{1}{\alpha}}^{\downarrow}(X{\,|\,}E)_\rho\big)\right\}  \\ \geq & \frac{1-\alpha_n}{\alpha_n} \big( R_n -  H_{2-\sfrac{1}{\alpha_n}}^{\downarrow}(X{\,|\,}E)_\rho\big)\\
=& \frac{1}{1-\frac{a_n}{V(X{\,|\,}E)_{\rho}}}\left(\frac{a_n^2}{2V(X{\,|\,}E)_{\rho}} - \frac{a_n^3}{2V(X{\,|\,}E)_{\rho}^3}\left.\frac{d^2}{d \alpha^2}H^{\downarrow}_{2-\frac{1}{\alpha}}(X{\,|\,}E)_{\rho}\right|_{\alpha = \bar{\alpha}_n}\right)\\
\geq & \frac{1}{1-\frac{a_n}{V(X{\,|\,}E)_{\rho}}}\left(\frac{a_n^2}{2V(X{\,|\,}E)_{\rho}} - \frac{a_n^3}{2V(X{\,|\,}E)_{\rho}^3}\Upsilon\right),
\end{align}
where 
\begin{align}
\Upsilon = \max\limits_{\sfrac{1}{2}\leq\alpha\leq 1}\left|\frac{d^2}{d \alpha^2}H^{\downarrow}_{2-\frac{1}{\alpha}}(X{\,|\,}E)_{\rho}\right|
\end{align}
is finite due to the extreme value theorem together with the closed set $[\sfrac{1}{2},1]$ and $\frac{d^2}{d \alpha^2}H^{\downarrow}_{2-\frac{1}{\alpha}}(X{\,|\,}E)_{\rho}$ is continuous for $\alpha\in [\sfrac{1}{2},1]$ as stated in Lemma \ref{lemm:downdiff}. Hence,
\begin{align}
- \frac{1}{n a_n^2} \log \left( 1 - \frac12\mathds{E}_{h} \left\|  \left(\mathcal{R}^{h^n} - \mathcal{U}^n \right) \left(\rho_{XE}^{\otimes n}\right) \right\|_1 \right) \geq -\frac{\log 4}{na_n^2} + \frac{1}{2V(X{\,|\,}E)_{\rho}}\frac{1}{1-\frac{a_n}{V(X{\,|\,}E)_{\rho}}}\left(1-\frac{\alpha_n\Upsilon}{V(X{\,|\,}E)^2_{\rho}}\right).
\end{align}
Letting $n\to \infty$ and using the definition of $a_n$,
\begin{align}
	\liminf_{n\to \infty} - \frac{1}{n a_n^2} \log \left( 1 - \frac12\mathds{E}_{h} \left\|  \left(\mathcal{R}^{h^n} - \mathcal{U}^n \right) \left(\rho_{XE}^{\otimes n}\right) \right\|_1 \right) \geq \frac{1}{2 V(X{\,|\,}E)_\rho}\,,
\end{align}
which is our second claim.
\end{proof}

In the following Theorem~\ref{prop:moderate_wiretap}, we establish moderate deviation analysis on the information leakage when communication through a classical-quantum wiretap channel studied in Section~\ref{sec:c-q}.
Namely, we establish the asymptotic behaviors of the security criterion $d_1$ when the rate of used randomness, i.e.~$\frac1n \log |L^n|$ approaches $I(X{\,:\,}E)_\sigma$ at the speed no faster than $O(\sfrac{1}{\sqrt{n}})$.

\begin{prop}[Moderate deviations for classical-quantum wiretap channel] \label{prop:moderate_wiretap}
	Consider an arbitrary classical-quantum wiretap channel $\mathcal{N}_{X\to BE}$, any prior distribution $p_X$ on $\mathcal{X}$ satisfying $V(X{\,:\,}E)_{\sigma}>0$. Using the protocol introduced in Section \ref{sec:c-q}, we have the following result for any moderate deviation sequence $(a_n)_{n\in\mathds{N}}$ defined in \eqref{eq:an}:
	\begin{align}
	\begin{dcases}
	\liminf_{n\to \infty} - \frac{1}{n a_n^2} \log \left(  \mathbb{E}_{\mathcal{C}^n,h^n}d_1(\mathcal{N}^{\otimes n}\mid\mathcal{E}^n)\right) \geq \frac{1}{2 V(X{\,:\,}E)_\sigma},
	& \text{ if } |L^n| = \e^{n(I(X{\,:\,}E)_\sigma + a_n)} \\
	\liminf_{n\to \infty} - \frac{1}{n a_n^2} \log \left( 1-\mathbb{E}_{\mathcal{C}^n,h^n}d_1(\mathcal{N}^{\otimes n}\mid\mathcal{E}^n) \right) \geq \frac{1}{2 V(X{\,:\,}E)_\sigma}, & \text{ if } |L^n| = \e^{n(I(X{\,:\,}E)_\sigma - a_n)}
	\end{dcases}.
	\end{align}
	Here,  $\sigma_{XBE} := \sum_{x\in\mathcal{X}} p_X(x) |x\rangle \langle x|\otimes \sigma_{BE}^x$ for each $\sigma_{BE}^{x}$ being the channel output of $\mathcal{N}_{X\to BE}$.
\end{prop}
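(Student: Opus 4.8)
The plan is to follow, \emph{mutatis mutandis}, the proof of Proposition~\ref{prop:moderate_PA}, now feeding the one-shot wiretap bounds of Section~\ref{sec:c-q} the $n$-fold channel $\mathcal{N}^{\otimes n}$, single-letterising the R\'enyi mutual informations, and Taylor-expanding around $\alpha=1$ via the first-derivative formulas of Lemmas~\ref{lemm:updiff} and~\ref{lemm:downdiff}. Write $R_n := \tfrac1n\log|L^n|$. The two cases here are the duals of the two cases of Proposition~\ref{prop:moderate_PA}, with ``above'' and ``below'' the threshold interchanged, because in the wiretap setting \emph{more} hashing randomness means \emph{less} leakage.

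For the first case $|L^n| = \e^{n(I(X{\,:\,}E)_\sigma + a_n)}$, I would start from the achievability bound of Theorem~\ref{theo:wireach} applied to $\mathcal{N}^{\otimes n}$ with prior $p_X^{\otimes n}$ and total randomness $L^n$. Since the optimised sandwiched R\'enyi mutual information is subadditive under tensor powers --- inserting the product ansatz $\sigma_{E^n}=\sigma_E^{\otimes n}$ into its variational definition together with multiplicativity of $D^*_\alpha$ gives $I^*_\alpha(X^n{\,:\,}E^n)_{\sigma^{\otimes n}}\le n\,I^*_\alpha(X{\,:\,}E)_\sigma$ --- this yields, for every $\alpha\in(1,2]$,
\begin{align*}
\mathbb{E}_{\mathcal{C}^n,h^n}\left[d_1(\mathcal{N}^{\otimes n}\mid\mathcal{E}^n)\right] \;\leq\; 2\,\e^{-n\frac{\alpha-1}{\alpha}\left(R_n - I^*_\alpha(X{\,:\,}E)_\sigma\right)},
\end{align*}
the contribution of an unbalanced hash being already absorbed into Theorem~\ref{theo:wireach}. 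Lemma~\ref{lemm:updiff} gives $I^*_\alpha(X{\,:\,}E)_\sigma = I(X{\,:\,}E)_\sigma + (\alpha-1)\tfrac{V(X{\,:\,}E)_\sigma}{2} + \mathscr{R}(\alpha-1)$ with $\mathscr{R}(\alpha-1)/(\alpha-1)\to 0$; substituting $R_n = I(X{\,:\,}E)_\sigma+a_n$ and choosing $\alpha_n = 1+\tfrac{a_n}{V(X{\,:\,}E)_\sigma}\in(1,2]$ for large $n$ reproduces the computation in Proposition~\ref{prop:moderate_PA}, namely $\tfrac{\alpha_n-1}{\alpha_n}\big(R_n - I^*_{\alpha_n}(X{\,:\,}E)_\sigma\big) = \tfrac{a_n^2}{2V(X{\,:\,}E)_\sigma}\,(1+o(1))$, whence $-\tfrac{1}{na_n^2}\log\mathbb{E}[d_1] \geq \tfrac{1}{2V(X{\,:\,}E)_\sigma}(1+o(1)) - \tfrac{\log 2}{na_n^2} \to \tfrac{1}{2V(X{\,:\,}E)_\sigma}$.

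For the second case $|L^n| = \e^{n(I(X{\,:\,}E)_\sigma - a_n)}$, I would instead invoke the strong-converse bound of Theorem~\ref{theo:wiretap} for $\mathcal{N}^{\otimes n}$; here the Petz--R\'enyi mutual information is \emph{exactly} additive under tensor powers, $I^{\downarrow}_{2-\sfrac{1}{\alpha}}(X^n{\,:\,}E^n)_{\sigma^{\otimes n}} = n\,I^{\downarrow}_{2-\sfrac{1}{\alpha}}(X{\,:\,}E)_\sigma$ by multiplicativity of $\tr[\rho^\beta\sigma^{1-\beta}]$, so that for every $\alpha\in(\sfrac12,1)$,
\begin{align*}
1 - \mathbb{E}_{\mathcal{C}^n,h^n}\left[d_1(\mathcal{N}^{\otimes n}\mid\mathcal{E}^n)\right] \;\leq\; 5\,\e^{-n\frac{1-\alpha}{\alpha}\left(I^{\downarrow}_{2-\sfrac{1}{\alpha}}(X{\,:\,}E)_\sigma - R_n\right)}.
\end{align*}
By Lemma~\ref{lemm:downdiff}, $\alpha\mapsto I^{\downarrow}_{2-\sfrac{1}{\alpha}}(X{\,:\,}E)_\sigma$ is $C^2$ on $[\sfrac12,1]$ with value $I(X{\,:\,}E)_\sigma$ and derivative $\tfrac{V(X{\,:\,}E)_\sigma}{2}$ at $\alpha=1$, so Taylor's theorem with Lagrange remainder gives $I^{\downarrow}_{2-\sfrac{1}{\alpha}}(X{\,:\,}E)_\sigma = I(X{\,:\,}E)_\sigma + (\alpha-1)\tfrac{V(X{\,:\,}E)_\sigma}{2} + \tfrac{(\alpha-1)^2}{2}\left.\tfrac{\mathrm{d}^2}{\mathrm{d}\alpha^2}I^{\downarrow}_{2-\sfrac{1}{\alpha}}(X{\,:\,}E)_\sigma\right|_{\alpha=\bar\alpha}$ for some $\bar\alpha\in[\alpha,1]$, the last term bounded in absolute value by $\tfrac{(\alpha-1)^2}{2}\,\Upsilon$ with $\Upsilon := \max_{\sfrac12\le\alpha\le1}\big|\tfrac{\mathrm{d}^2}{\mathrm{d}\alpha^2}I^{\downarrow}_{2-\sfrac{1}{\alpha}}(X{\,:\,}E)_\sigma\big|<\infty$ (finite by the extreme value theorem). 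Substituting $R_n = I(X{\,:\,}E)_\sigma - a_n$ and choosing $\alpha_n = 1-\tfrac{a_n}{V(X{\,:\,}E)_\sigma}\in(\sfrac12,1)$ for large $n$ reproduces verbatim the estimate in the second half of Proposition~\ref{prop:moderate_PA}, giving $-\tfrac{1}{na_n^2}\log\big(1-\mathbb{E}[d_1]\big) \to \tfrac{1}{2V(X{\,:\,}E)_\sigma}$.

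I do not expect any of these steps to be deep once Theorems~\ref{theo:wireach} and~\ref{theo:wiretap} and the derivative Lemmas~\ref{lemm:updiff} and~\ref{lemm:downdiff} are in hand; the two spots that call for care --- and the likely source of any hidden difficulty --- are the single-letterisation and the accounting of lower-order terms. For the achievability direction only the elementary inequality $I^*_\alpha(X^n{\,:\,}E^n)_{\sigma^{\otimes n}}\le n\,I^*_\alpha(X{\,:\,}E)_\sigma$ is needed, whereas the converse direction relies on the exact additivity of the Petz quantity; and one must verify, exactly as in Proposition~\ref{prop:moderate_PA}, that the Taylor remainders, the multiplicative constants $2$ and $5$, and the super-exponentially small unbalanced-hash contribution all disappear after dividing by $na_n^2$ and letting $n\to\infty$ (using $na_n^2\to\infty$).
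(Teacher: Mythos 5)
Your proof is correct and takes essentially the same route as the paper: apply the one-shot wiretap bounds of Theorems~\ref{theo:wireach} and~\ref{theo:wiretap} to $\mathcal{N}^{\otimes n}$, Taylor-expand the R\'enyi mutual informations around $\alpha=1$ via Lemmas~\ref{lemm:updiff} and~\ref{lemm:downdiff}, and pick $\alpha_n = 1 \pm a_n/V(X{\,:\,}E)_\sigma$. The only difference is that you spell out the single-letterisation step (subadditivity of $I^*_\alpha$ for achievability, exact additivity of $I^\downarrow_{2-\sfrac{1}{\alpha}}$ for the converse), which the paper uses implicitly without comment.
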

This means that the information leakage decays to $0$ asymptotically when the number of bits of randomness in hashing approaches $I(X{\,:\,}E)_{\sigma}$ from above at speed no faster than $O(\sfrac{1}{\sqrt{n}})$, and it converges to $1$ asymptotically when the number of bits approaches $I(X{\,:\,}E)_{\sigma}$ from below.
\begin{proof}
For the first claim, by Theorem \ref{theo:wireach},
\begin{align}
\mathbb{E}_{\mathcal{C}^n,h^n}d_1(\mathcal{N}^{\otimes n}\mid\mathcal{E}n) &\leq
		2\e^{-n\sup\limits_{1<\alpha\leq2}\frac{1-\alpha}{\alpha}\left(I^{*}_{\alpha}(X{\,:\,}E)_{\sigma} - R_n\right)}.
\end{align}
where $R_n=\frac{1}{n}\log |L^n|$ is the rate of the used randomness in hashing.
Using Lemma \ref{lemm:updiff}, we can apply Taylor theorem of $I^*_{\alpha}(X{\,:\,}E)_{\rho}$ at $\alpha=1$. 
\begin{align}
I^*_{\alpha}(X{\,:\,}E)_{\sigma} = I(X{\,:\,}E)_{\sigma} + \frac{\alpha-1}{2}V(X{\,:\,}E)_{\sigma} + \mathscr{R}(\alpha -1),
\end{align}
where $\mathscr{R}(\alpha-1)$ is a continuous function satisfying $\frac{\mathscr{R}(\alpha-1)}{\alpha-1}\to 0$ as $\alpha\to 1$. Let $\alpha_n =  1+\frac{a_n}{V(X{\,:\,}E)_{\sigma}}$. Using the above expansion and $R_n = I(X{\,:\,}E)_{\sigma} + a_n$, we have $1<\alpha_n\leq2$ for all sufficiently large $n\in\mathds{N}$, and
\begin{align}
\max\limits_{1<\alpha\leq2}\left\{\frac{1-\alpha}{\alpha}(I^{*}_{\alpha}(X{\,:\,}E)_{\sigma} - R_n)\right\}
&\geq \frac{1-\alpha_n}{\alpha_n}(I^{*}_{\alpha_n}(X{\,:\,}E)_{\sigma} - R_n)\\
&= \frac{1}{1+\frac{a_n}{V(X{\,:\,}E)_{\sigma}}}\left(\frac{a_n^2}{2V(X{\,:\,}E)_{\sigma}}-\frac{a_n^2}{V(X{\,:\,}E)_{\sigma}^2}\frac{\mathscr{R}(\alpha_n-1)}{\alpha_n-1}\right)\\
& =\frac{a_n^2}{2V(X{\,:\,}E)_{\sigma}}\frac{1}{1+\frac{a_n}{V(X{\,:\,}E)_{\sigma}}}\left(1-\frac{2}{V(X{\,:\,}E)_{\sigma}}\frac{\mathscr{R}(\alpha_n-1)}{\alpha_n-1}\right).
\end{align}
Hence,
\begin{align}
- \frac{1}{n a_n^2} \log \left(\mathbb{E}_{\mathcal{C}^n,h^n}d_1(\mathcal{N}^{\otimes n}\mid\mathcal{E}^n) \right) \geq -\frac{\log 2}{na_n^2}+\frac{1}{2V(X{\,:\,}E)_{\sigma}}\frac{1}{1+\frac{a_n}{V(X{\,:\,}E)_{\sigma}}}\left(1-\frac{2}{V(X{\,:\,}E)_{\sigma}}\frac{\mathscr{R}(\alpha_n-1)}{\alpha_n-1}\right).
\end{align}
Taking $n\to \infty$ and using the definition of $a_n$,
\begin{align}
	\liminf_{n\to \infty} - \frac{1}{n a_n^2} \log \left(  \mathds{E}_{\mathcal{C}^n,h^n} d_1(\mathcal{N}^{\otimes n}\mid\mathcal{E}^n)\right) \geq \frac{1}{2 V(X{\,:\,}E)_\sigma},
\end{align}
which proves our first claim. 

For the second claim, note that by Theorem \ref{theo:wiretap}
\begin{align}
1-\mathbb{E}_{\mathcal{C}^n,h^n}d_1(\mathcal{N}^{\otimes n}\mid\mathcal{E}^n) &\leq 5\e^{-n\sup_{\sfrac{1}{2}<\alpha<1}\frac{1-\alpha}{\alpha}\left(I^{\downarrow}_{2-\sfrac{1}{\alpha}}(X{\,:\,}E)_\sigma - R_n\right)} .
\end{align}
Using Lemma \ref{lemm:downdiff}, we apply Taylor's series expansion of $\alpha \mapsto I^{\downarrow}_{2-\sfrac{1}{\alpha}}(X{\,:\,}E)_\sigma$ at $\alpha=1$:
\begin{align}
I^{\downarrow}_{2-\sfrac{1}{\alpha}}(X{\,:\,}E)_\sigma = I(X{\,:\,}E)_{\sigma} + \frac{(\alpha-1)}{2}V(X{\,:\,}E)_{\sigma} +\frac{(\alpha-1)^2}{2}\left.\frac{\mathrm{d}^2}{\mathrm{d}s^2}\left(I^{\downarrow}_{2-\frac{1}{\alpha}}(X{\,:\,}E)_\sigma\right)\right|_{\alpha = \bar{\alpha}} 
\end{align}
for some $\bar{\alpha}\in [\alpha,1]$. Let $\alpha_n = 1-\frac{a_n}{V(X{\,:\,}E)_{\sigma}}$. Using the above expansion and the fact that $R_n = I(X{\,:\,}E)_{\sigma} - a_n$, we have for all $ \sfrac{1}{2}<\alpha_n<1$,
\begin{align}
&\sup_{\sfrac{1}{2}<\alpha<1}\left\{\sfrac{1-\alpha}{\alpha}\left(I^{\downarrow}_{2-\frac{1}{\alpha}}(X{\,:\,}E)_\sigma - R_n\right)\right\} \\ \geq & \frac{1-\alpha_n}{\alpha_n}(I^{\downarrow}_{2-\sfrac{1}{\alpha_n}}\left(X{\,:\,}E\right)_\sigma - R_n)\\
=& \frac{1}{1-\frac{a_n}{V(X{\,:\,}E)_{\sigma}}}\left(\frac{a_n^2}{2V(X{\,:\,}E)_{\sigma}} + \frac{a_n^3}{2V(X{\,:\,}E)_{\sigma}^3}\left.\frac{\mathrm{d}^2}{\mathrm{d}\alpha^2}I^{\downarrow}_{2-\frac{1}{\alpha}}(X{\,:\,}E)_\sigma\right|_{a = \bar{a}_n}\right)\\
\geq& \frac{1}{1-\frac{a_n}{V(X{\,:\,}E)_{\sigma}}}\left(\frac{a_n^2}{2V(X{\,:\,}E)_{\sigma}} - \frac{a_n^3}{2V(X{\,:\,}E)_{\sigma}^3}\Upsilon\right),
\end{align}
where $\Upsilon = \max\limits_{\alpha\in[\sfrac{1}{2},1]}\left|\frac{\mathrm{d}^2}{\mathrm{d}\alpha^2}I^{\downarrow}_{2-\frac{1}{\alpha}}(X{\,:\,}E)_{\sigma}\right|$; and this quantity is finite due to $[\sfrac{1}{2},1]$ being closed,  $\frac{\mathrm{d}^2}{\mathrm{d}\alpha^2}I^{\downarrow}_{2-\frac{1}{\alpha}}(X{\,:\,}E)_{\sigma}$ is continuous for $\alpha\in [\sfrac{1}{2},1]$, as stated in Lemma \ref{lemm:downdiff}, and the extreme value theorem. Hence, 
\begin{align}
- \frac{1}{n a_n^2} \log \left(1-\mathbb{E}_{\mathcal{C}^n,h^n}d_1(\mathcal{N}^{\otimes n}\mid\mathcal{E}^n) \right) \geq -\frac{\log 5}{na_n^2}+\frac{1}{2V(X{\,:\,}E)_{\sigma}}\frac{1}{1-\frac{a_n}{V(X{\,:\,}E)_{\sigma}}}\left(1-\Upsilon\frac{a_n}{V(X{\,:\,}E)_{\sigma}^2}\right).
\end{align}
Taking $n\to \infty$ and using the definition of $a_n$, we obtain
\begin{align}
\liminf_{n\to \infty} - \frac{1}{n a_n^2} \log \left( 1-\mathbb{E}_{\mathcal{C}^n,h^n}d_1(\mathcal{N}^{\otimes n}\mid\mathcal{E}^n) \right) \geq \frac{1}{2 V(X{\,:\,}E)_\sigma}\ ,
\end{align}
which proves our second claim.
\end{proof}

Finally, the moderate deviation can also be applied in entropy accumulation. We omit the proof since it is similar to previous two moderate deviation analysis.
\begin{prop}[Moderate deviations for entropy accumulation] \label{prop:moderate_EA}
	Consider any strongly 2-universal hash function: $h^n:\mathcal{A}^n\to \mathcal{Z}^n$ that produce $\log|\mathcal{Z}^n| = nR$ at the output. In the protocol of entropy accumulation, the following holds:
	\begin{align}
	\begin{dcases}
	\liminf_{n\to \infty} - \frac{1}{n a_n^2} \log \left(  \frac{1}{2}\mathbb{E}_h\left\|\mathcal{R}^h(\rho_{A^n_1X^n_1E|\wt(T^n_1)=w})-\frac{\mathds{1}}{2^{nR}}\otimes\rho_{X^n_1E|\wt(T^n_1)=w}\right\|_1\right) \geq \frac{1}{2V^2},
	&\!\!\!\! \text{ if } R = f(w)-a_n \\
	\liminf_{n\to \infty} - \frac{1}{n a_n^2} \log \left( 1-\frac{1}{2}\mathbb{E}_h\left\|\mathcal{R}^h(\rho_{A^n_1X^n_1E|\wt(T^n_1)=w})-\frac{\mathds{1}}{2^{nR}}\otimes\rho_{X^n_1E|\wt(T^n_1)=w}\right\|_1 \right) \geq \frac{1}{2V^2}, &\!\!\!\! \text{ if } R = f(w)+a_n
	\end{dcases}.
	\end{align}
	where $f(\cdot)$ for the first equation is the min-tradeoff function defined in \cite[Definition 4.1]{DFR20}, and $f(\cdot)$ for the second equation is the max-tradeoff function defined in \cite[Definition 4.1]{DFR20}. 
\end{prop}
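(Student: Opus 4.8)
The plan is to follow the same two--step template as in the proofs of Propositions~\ref{prop:moderate_PA} and \ref{prop:moderate_wiretap}, using Dupuis' achievability bound \cite[Theorem~9]{Dup21} for the first claim and our strong converse Theorem~\ref{theo:QKD} for the second. The situation is in fact simpler than for privacy amplification and the wiretap channel: both of these $n$--shot estimates already come packaged in a Gaussian form that is quadratic in $R-f(w)$ (the optimization over the R\'enyi parameter having been carried out already inside \cite[Theorem~9]{Dup21} and inside the proof of Theorem~\ref{theo:QKD}), so one does not need to Taylor--expand a R\'enyi entropy around $\alpha=1$; it suffices to insert the moderate rate and normalize by $na_n^2$.

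For the first claim, take $f$ to be the min--tradeoff function and set $R=f(w)-a_n$. Since $a_n\downarrow 0$, for all sufficiently large $n$ we have $0< f(w)-R=a_n\le \sfrac{V^2}{2}$, so the hypothesis of \cite[Theorem~9]{Dup21} holds and
\begin{align}
\frac{1}{2}\mathbb{E}_h\left\|\mathcal{R}^h(\rho_{A^n_1X^n_1E\mid\wt(T^n_1)=w})-\frac{\mathds{1}}{2^{nR}}\otimes\rho_{X^n_1E\mid\wt(T^n_1)=w}\right\|_1\leq\frac{1}{\Pr[\wt(T_1^n)=w]}\,\e^{-\frac{na_n^2}{2V^2}}.
\end{align}
Applying $-\frac{1}{na_n^2}\log(\cdot)$ gives the lower bound $\frac{1}{2V^2}-\frac{1}{na_n^2}\log\frac{1}{\Pr[\wt(T_1^n)=w]}$, and letting $n\to\infty$ the correction term drops out (see the last paragraph), which yields the claimed $\liminf\ge \frac{1}{2V^2}$.

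For the second claim, take $f$ to be the max--tradeoff function and set $R=f(w)+a_n$. For $n$ large enough we have $0<R-f(w)=a_n<V$, so Theorem~\ref{theo:QKD} applies and gives
\begin{align}
1-\frac{1}{2}\mathbb{E}_h\left\|\mathcal{R}^h(\rho_{A^n_1X^n_1E\mid\wt(T^n_1)=w})-\frac{\mathds{1}}{2^{nR}}\otimes\rho_{X^n_1E\mid\wt(T^n_1)=w}\right\|_1\leq\frac{4}{\Pr[\wt(T_1^n)=w]}\,\e^{-\frac{na_n^2}{2V^2}}.
\end{align}
The same normalization and limit produce $\liminf_{n\to\infty}-\frac{1}{na_n^2}\log\big(1-\tfrac12\mathbb{E}_h\|\cdot\|_1\big)\ge\frac{1}{2V^2}$.

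The only point requiring care --- the analogue of the ``main obstacle'' in the two earlier moderate--deviation proofs --- is the $n$--dependent prefactor $\tfrac{1}{\Pr[\wt(T_1^n)=w]}$: one must check that $\frac{1}{na_n^2}\log\tfrac{1}{\Pr[\wt(T_1^n)=w]}\to 0$. This holds because the conditioning event $\{\wt(T_1^n)=w\}$ is not exponentially suppressed for the relevant (typical) frequency $w$; indeed, in line with the interpretation of $c_w,c_w'$ as $n$--independent constants in the introduction, $\Pr[\wt(T_1^n)=w]$ is bounded below by a positive constant (or at worst by an inverse polynomial), so its logarithm is $o(na_n^2)$ since $na_n^2\to\infty$. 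All remaining steps are verbatim substitutions into results already established, which is why we omit the detailed write--up.
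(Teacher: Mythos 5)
The paper does not write out a proof of Proposition~\ref{prop:moderate_EA}; it merely declares it "similar to previous two moderate deviation analysis." Your proof fills this gap correctly, and you make the sensible observation (which the paper's statement arguably obscures) that no Taylor expansion of a R\'enyi entropy around $\alpha=1$ is needed here: Dupuis' \cite[Theorem~9]{Dup21} and Theorem~\ref{theo:QKD} already present the bounds in a quadratic-in-$(R-f(w))$ Gaussian form, so the only work is to check the applicability hypotheses for large $n$ (which is trivial since $a_n\downarrow 0$) and control the prefactor. You also correctly pair the min-tradeoff function with the achievability side and the max-tradeoff function with the converse side.

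The one place your write-up is not airtight is the treatment of the $n$-dependent prefactor $1/\Pr[\wt(T_1^n)=w]$. Your parenthetical "(or at worst by an inverse polynomial)" followed by the inference "so its logarithm is $o(na_n^2)$ since $na_n^2\to\infty$" is not valid in general: $na_n^2\to\infty$ does \emph{not} imply $\log n = o(na_n^2)$. For example, $a_n=\sqrt{\log\log n / n}$ is a legitimate moderate sequence with $na_n^2=\log\log n\to\infty$, yet $\log n / \log\log n \to\infty$. So if $\Pr[\wt(T_1^n)=w]$ were, say, $\Theta(n^{-1/2})$ (as a local CLT would typically give when $w$ tracks the empirical mean of $T_1^n$), the correction term would \emph{not} vanish for all moderate sequences and the claimed $\liminf \ge 1/(2V^2)$ would fail. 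The argument only goes through under the assumption that $\Pr[\wt(T_1^n)=w]$ is bounded below by a positive ($n$-independent) constant, which is exactly what the paper's introduction implicitly assumes by calling $c_w,c'_w$ constants; I'd recommend you drop the "inverse polynomial" fallback and state that constant lower bound as an explicit hypothesis, since without it the proposition as written is not correct for every moderate sequence.
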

The above result shows that by Dupuis' result \cite[Theorem 9]{Dup21} the information we get at the output becomes almost uncertain when the number of bits at the output approaches $f(w)$ from below at speed no faster than $O(\sfrac{1}{\sqrt{n}})$, and it becomes almost certain when the number of bits approaches $f(w)$ from above.

\section{Conclusions} \label{sec:conclusion}
We establish a one-shot strong converse bound for privacy amplification against quantum side information, which enjoys various advantages as the recent achievability bound by Dupuis~\cite[Theorem 8]{Dup21}. Moreover, our result extends to the large deviation regime \cite{Hay07, CH16, CHT19, CHDH2-2018, Hao-Chung, CHD+21}---an exponential convergence to $1$ for every blocklength, and the moderate deviation regime \cite{CH17, CTT2017}---an asymptotic behavior of trace distance when the rate of the extracted randomness approaches the quantum conditional entropy.
In a way, our result in the strong converse regime may be viewed as complementing Dupuis' result \cite[Theorem 8]{Dup21} in the error exponent regime.
As an application, we provide both secrecy exponent bound and exponential strong converse bound for the information leakage through a classical-quantum wiretap channel as well as for a quantum channel. Our result also applies to estimate the information loss in entropy accumulation protocol \cite{DFR20, Dup21}, and those two applications can be also extended to the large deviation regime and moderate deviation regime characterizations.

We remark that several entropic quantities such as $H_\alpha^*$ and $I_\alpha^*$ do not have closed-form expressions for $\alpha\neq 1$ in general. There is a recent optimization algorithm with asymptotic convergence guarantee that can be applied to compute them \cite{YCL21}.
It is intriguing to note that some entropic exponent functions obtained in this paper such as $H_\alpha^{\downarrow}$ and $I_\alpha^{\downarrow}$ have the same form as classical-quantum channel coding \cite{Hay07, CHDH2-2018} and classical data compression with quantum side information \cite{CHD+21}.

\section*{Acknowledgement}
H.-C.~Cheng would like to thank Kai-Min Chung for his insightful discussions, and also thank Masahito Hayashi for his comments on some of our early results.
Y.-C.~Shen and H.-C.~Cheng are supported by the Young Scholar Fellowship (Einstein Program) of the Ministry of Science and Technology in Taiwan (R.O.C.) under Grant MOST 110-2636-E-002-009, and are supported by the Yushan Young Scholar Program of the Ministry of Education in Taiwan (R.O.C.) under Grant NTU-110V0904,  Grant NTU-CC-111L894605, and Grand NTU-111L3401.

\appendix
\section{Auxiliary Proofs} \label{sec:proofs}

\subsection{Proof of a trace inequality} \label{sec:proof_trace}
\begin{proof}[Proof of Lemma~\ref{lemm:trace}]
    We first claim the following, for all positive semi-definite operators $K,L$,
	\begin{align} \label{eq:BK}
		\Tr\left[ K (K+L)^{-\frac12} L (K+L)^{-\frac12} \right] 
		\leq \Tr\left[ \frac{K+L - |K-L|}{2} \right].
	\end{align}
	Then, combining it with Audenaert \textit{et al.}'s inequality \cite[Theorem 2]{ANS+08}: for all $K,L \geq 0$,
	\begin{align}
		\Tr\left[ \frac{K+L-|K-L|}{2} \right] \leq \Tr\left[ K^{1-s} L^s \right], \quad \forall s\in (0,1),
	\end{align}
	we  prove Lemma \ref{lemm:trace}. To prove \eqref{eq:BK}, we adapt Barnum and Knill's proof technique in \cite{BK02} and \cite[Theorem 3.10]{Wat18}. Let $\rho_0 = K\geq 0$, $\rho_1 = L \geq 0$, $M:=\rho_0+\rho_1$, and let $\Pi_0=\{\rho_0\ge \rho_1\}, \Pi_1=\{\rho_0<\rho_1\}=1-\Pi_0$ be the corresponding optimal measurement \cite{Hel67, Hol72, AM14}. Then
	\begin{align}
		\Tr\left[ \frac{M + |\rho_0 -\rho_1|}{2} \right]
		=\Tr[\rho_0\Pi_0] + \Tr[\rho_1\Pi_1]
		\label{eq:max}
	\end{align}
	Using the Cauchy--Schwarz inequality for trace, we have for each $i=1,2$,
	\begin{align}
		\Tr[\rho_i\Pi_i] &\le  \left\|M^{-\frac14}\rho_i M^{-\frac14}\right\|_2 \left\| M^{\frac14} \Pi_i M^{\frac14} \right\|_2. 
	\end{align}
	Applying the Cauchy--Schwarz inequality for scalars, we have
	\begin{align}
		\sum_{i\in\{0,1\}} \Tr[\rho_i\Pi_i]
		\leq \sqrt{\sum_{i\in\{0,1\}}\left\|M^{-\frac14}\rho_i M^{-\frac14}\right\|_2^2} \sqrt{\sum_{i\in\{0,1\}} \left\|  M^{\frac14} \Pi_i M^{\frac14} \right\|_2^2}. \label{eq:CauchyShwarz}
	\end{align}
	The second factor on the right-hand side is bounded as follows:
	\begin{align}
		&\sum_{i\in\{0,1\}} \left\|  M^{\frac14} \Pi_i M^{\frac14} \right\|_2^2
		= \sum_{i\in\{0,1\}} \Tr\left[  \Pi_i \sqrt{M}\Pi_i \sqrt{M}\right] 
		\leq \sum_{i\in\{0,1\}} \Tr\left[ \sqrt{M}\Pi_i \sqrt{M}\right] 
		= \Tr[M]. \label{eq:second}
	\end{align}
	On the other hand, the first factor is
	\begin{align}
		 \sum_{i\in\{0,1\}}\left\|M^{-\frac14}\rho_i M^{-\frac14}\right\|_2^2
		= \sum_{i\in\{0,1\}} \Tr\left[ \rho_i M^{-\frac12} \rho_i M^{-\frac12} \right] 
		= \Tr[M] - 2\Tr\left[ \rho_0 M^{-\frac12} \rho_1 M^{-\frac12} \right]. \label{eq:first}
	\end{align}
	Combining \eqref{eq:max}, \eqref{eq:CauchyShwarz}, \eqref{eq:second}, and \eqref{eq:first} together, we obtain
	\begin{align}
		2 \frac{\Tr\left[ \rho_0 M^{-\frac{1}{2}} \rho_1 M^{-\frac{1}{2}} \right]}{\Tr[M]}
		&\leq 1 - \left( 1 - \frac{ \Tr\left[ \frac{M - |\rho_0 - \rho_1|}{2} \right] }{\Tr[M]} \right)^2 \\
		&= \frac{ \Tr\left[ \frac{M - |\rho_0 - \rho_1|}{2} \right] }{\Tr[M]} \left( 2 - \frac{ \Tr\left[ \frac{M - |\rho_0 - \rho_1|}{2} \right] }{\Tr[M]} \right) \\
		&\leq 2\frac{ \Tr\left[ \frac{M - |\rho_0 - \rho_1|}{2} \right] }{\Tr[M]}, \label{eq:end}
	\end{align}
	where in the last inequality we have used 
	\begin{align}
		\Tr\left[ \frac{M  -  |\rho_0 - \rho_1|}{2} \right]
		= \Tr\left[ \rho_0 - \left( \rho_0 - \rho_1\right)_+ \right] \geq 0.
	\end{align}
	Then, \eqref{eq:end} is exactly our claim \eqref{eq:BK}, and hence we complete the proof. 
\end{proof}

\subsection{Proof of secrecy exponent for wiretap channel coding} \label{sec:proof_secrecy}

\begin{proof}[Proof of Theorem~\ref{theo:wireach}]
The expected value (over the random codebook) of the average error probability of the protocol was stated in \cite[Equation (63)]{Hay132}:
\begin{align}
	\mathbb{E}_\mathcal{C}\left[\epsilon\left(\mathcal{N} \mid \mathcal{E},\Pi \right)\right] \leq \min_{\frac12 \leq \alpha\leq 1} 4 (ML)^{\frac{1-\alpha}{\alpha}} \e^{\frac{\alpha-1}{\alpha} I^{\downarrow}_{2-\sfrac{1}{\alpha}}(X;B)_{\sigma}},
	\label{error}
\end{align}
We remain to prove the upper bound on $d_1$.
It is sufficient to consider the event that the output of hash function is balanced. In this situation, $\mathcal{E}$ and $\mathcal{E}^{-1}$ can be expressed as:
\begin{align}
	&\mathcal{E}: \ket{m}\bra{m} \mapsto \frac{1}{L}\sum_{k: \mathcal{R}^h(k) = m}\ket{k}\bra{k}\otimes \ket{x_{k}}\bra{x_{k}}\,;\\
	&\mathcal{E}^{-1}: \sigma \mapsto \sum_{k\in[ML]} (\bra{k}\otimes\bra{x_{k}})\sigma (\ket{k}\otimes\ket{x_{k}}) \otimes \ket{h(x)}\bra{h(x)}.
\end{align}

The $d_1$ norm of the channel when the hash function is balanced can be expressed as
\begin{align}
	d_1(\mathcal{N}\mid\mathcal{E}) 
	&= 	\frac{1}{2}\frac{1}{M}\sum_{m\in[M]}\left\|\frac{1}{L}\sum_{k: \mathcal{R}^h(k) = m}\sigma_E^{x_{k}} - \frac{1}{ML}\sum_{k'\in[ML]} \sigma_E^{x_{k'}}\right\|_1\\
		&= \frac{1}{2}\left\| \frac{1}{M} \sum_{m\in[M]} \ket{m}\bra{m}\otimes\left(\frac{1}{L}\sum_{k\in h^{-1}(m)}\sigma_E^{x_{k}}\right)
		-\frac{1}{M} \sum_{m\in[M]} \ket{m}\bra{m}\otimes \left(\frac{1}{ML}\sum_{k' \in [ML]}\sigma_E^{x_{k'}}\right)\right\|_1 \\
		&= \frac{1}{2}\left\| \left(\mathcal{E}^{-1} - \mathcal{U}^{\mathcal{C}}\right)\left(\sigma^{\mathcal{C}}_{MLXE}\right) \right\|_1\\
	&= \frac{1}{2}\left\| \left(\mathcal{E}^{-1} - \mathcal{U}^{\mathcal{C}}\right)\left(\sigma^{\mathcal{C}}_{AE}\right) \right\|_1 \, ,
\end{align}
where $\mathcal{U}^{\mathcal{C}}$ is the perfectly randomizing channel from $\ket{k}\bra{k}\otimes\ket{x_{k}}\bra{x_{k}}$ to $\ket{m}\bra{m}$ for $k \in [ML]$ and $m\in [M]$, and $A\equiv MLX$ is a classical system. On the other hand, the probability of $h$ being not balanced is at most $\frac{1}{ML}$, and the $d_1$ norm in that case is at most 1. 

Recall that any strongly $2$-universal family of hash functions is $1$-randomizing (e.g.~\cite[Lemma 6]{Dup21}). Hence, given a codebook $\mathcal{C}: [ML] \rightarrow \mathcal{X}$, we now apply \cite[Theorem 8]{Dup21} or namely \eqref{dupuis} to obtain an upper bound to the expected value of Eve's distinguishability with respect to the family of hash functions,
\begin{align}
	\mathbb{E}_{h\mid\mathcal{C}} \left[ d_1(\mathcal{N}_{X\to E}\mid \mathcal{E}) \right]
	&\leq\frac{1}{2}\mathbb{E}_{h\mid\mathcal{C}} \left\| (\mathcal{E}^{-1} - \mathcal{U}^{\mathcal{C}})(\sigma^\mathcal{C}_{AE}) \right\|_1 + \frac{1}{ML}\\
	&\leq \e^{\frac{\alpha-1}{\alpha}\left(\log M - H_{\alpha}^*(A{\,|\,}E)_{\sigma^{\mathcal{C}}}\right)} + \frac{1}{ML}, \quad \forall \alpha \in [1,2].
\end{align}
Since in balanced condition, Alice sends every $k\in [ML]$ to the codebook $\mathcal{C}$ with equal probability, we calculate that
\begin{align}
	I_{\alpha}^{*}(A;E)_{\sigma^{\mathcal{C}}_{AE}}
	&=\inf_{ \tau_E \in \mathcal{S}(E)} D_{\alpha}^{*}\left(\sigma^{\mathcal{C}}_{AE}\|\sigma^{\mathcal{C}}_A\otimes \tau_E\right)\\
	&= \inf_{ \tau_E \in \mathcal{S}(E)} D_{\alpha}^{*}\left(\bigoplus\limits_{k}\frac{1}{ML} \sigma_E^{x_{k}} \left\| \frac{\mathds{1}_A}{ML}\otimes\tau_E\right.\right)\\
	&= \inf_{ \tau_E \in \mathcal{S}(E)} D_{\alpha}^{*}\left(\left.\bigoplus\limits_{k}\frac{1}{ML} \sigma_E^{x_{k}} \right\| \mathds{1}_A\otimes\tau_E \right) + \log(ML)\\
	&= -H_{\alpha}^*(A{\,|\,}E)_{\sigma^{\mathcal{C}}_{AE}} + \log(ML). 
\end{align}
Hence, expectation of Eve's distinguishability using random codebook $\mathcal{C}$ can further be written as
\begin{align}
	\mathbb{E}_{h\mid\mathcal{C}} \left[ d_1(\mathcal{N}_{X\to E}|\mathcal{E}) \right]
	&\leq  \e^{\frac{\alpha-1}{\alpha}\left(\log M - H_\alpha(A{\,|\,}E)_{\sigma^{\mathcal{C}}_{AE}}\right)} + \frac{1}{ML}\\
	&= \e^{\frac{\alpha-1}{\alpha}\left(\log M + I_{\alpha}^{*}(A;E)_{\sigma^{\mathcal{C}}_{AE}} - \log ML\right)} + \frac{1}{ML}\\
	&= \e^{\frac{\alpha-1}{\alpha}\left(I_{\alpha}^{*}(A;E)_{\sigma^{\mathcal{C}}_{AE}} - \log L\right)} + \frac{1}{ML}\\
	&\leq 2\cdot\e^{\frac{\alpha-1}{\alpha}\left(I_{\alpha}^{*}(A;E)_{\sigma^{\mathcal{C}}_{AE}} - \log L\right)}.
\end{align}
The last inequality comes from the fact that the term $\sfrac{1}{ML}$ decays faster than the first term, i.e.~$\log ML \geq \frac{\alpha-1}{\alpha}(\log L -I_{\alpha}^{*}(A;E)_{\sigma^{\mathcal{C}}_{AE}}) $ for every $\alpha\in(1,2]$ and noting that
$I_{\alpha}^{*}(A;E)_{\sigma^{\mathcal{C}}_{AE}}\geq 0$.
Next, invoking the concavity of the map $\sigma^{\mathcal{C}}_{A} \mapsto 2\cdot\e^{\frac{\alpha-1}{\alpha}\left(I_{\alpha}^{*}(A;E)_{\sigma^{\mathcal{C}}_{AE}} - \log L\right)}$ which is proved in Lemma \ref{lemm:concavity}, expectation can be taken over the random codebook $\mathcal{C}$ (under distribution $p_X$) by applying Jensen's inequality:
\begin{align}
	\mathbb{E}_{\mathcal{C},h} \left[ d_1(\mathcal{N} \mid \mathcal{E}) \right]
	&=\mathbb{E}_{\mathcal{C}}\mathbb{E}_{h\mid\mathcal{C}} \left[ d_1(\mathcal{N}|\mathcal{E})\right] \\
	&\leq 2\, \mathbb{E}_\mathcal{C} \left[ \e^{\frac{\alpha-1}{\alpha}\left(I_{\alpha}^{*}(A;E)_{\sigma^{\mathcal{C}}_{AE}} - \log L\right)} \right]\\
	&\leq 2\e^{\frac{\alpha-1}{\alpha}\left(I_{\alpha}^{*}(A;E)_{\mathbb{E}_\mathcal{C}\left[\sigma^{\mathcal{C}}_{AE}\right]} - \log L\right)}.
\end{align}
Moreover, we have
\begin{align}
	\mathbb{E}_\mathcal{C}\left[\sigma^\mathcal{C}_{AE}\right]
	&= \sum_{k\in[ML]}\frac{1}{ML}\ket{k}\bra{k}\otimes\left(\sum_{x\in\mathcal{X}}p_X(x)\ket{x}\bra{x}\otimes \sigma_E^{x}\right).
\end{align}
By simple calculation, one has
\begin{align}
	I_{\alpha}^{*}(A{\,:\,}E)_{\mathbb{E}_\mathcal{C}\left[\sigma^{\mathcal{C}}_{AE}\right]} = I_{\alpha}^{*}(X{\,:\,}E)_{\sigma},
\end{align}
Therefore,
\begin{align}
	\mathbb{E}_{\mathcal{C},h} \left[ d_1(\mathcal{N}\mid\mathcal{E}) \right]
	\leq
	2\e^{\frac{\alpha-1}{\alpha}(I_{\alpha}^{*}(X{\,:\,}E)_{\sigma} - \log L)},\label{d1}
\end{align}
which is our statement.

Finally, the positivity of the secrecy exponent follows from the monotone increasing of the map $\alpha\mapsto I^{*}_{\alpha}(X{\,:\,}E)_{\sigma}$ and \eqref{eq:alpha1}.

\begin{lemm}[A concavity property]\label{lemm:concavity}
	For every $\alpha > 1$, the map
	\begin{align}
		\sigma_X \mapsto \mathrm{e}^{ \frac{\alpha-1}{\alpha}I_{\alpha}^{*}(X{\,:\,}E)_{\sigma}}
	\end{align}
	is concave on all probability distributions on $\mathcal{X}$.
\end{lemm}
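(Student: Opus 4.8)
The plan is to reduce the statement to an explicit formula for $I^*_\alpha(X{\,:\,}E)_\sigma$ in the classical--quantum case, after which concavity follows from a one-line composition argument. First I would expand the sandwiched R\'enyi divergence $D^*_\alpha(\sigma_{XE}\|\sigma_X\otimes\tau_E)$ using the block structure: since $\sigma_{XE}=\sum_{x} p_X(x)|x\rangle\langle x|\otimes\sigma_E^x$ and $\sigma_X\otimes\tau_E$ are both block diagonal in the $X$-register, the operator $(\sigma_X\otimes\tau_E)^{\frac{1-\alpha}{2\alpha}}\sigma_{XE}(\sigma_X\otimes\tau_E)^{\frac{1-\alpha}{2\alpha}}$ is block diagonal with $x$-block $p_X(x)^{1/\alpha}\,\tau_E^{\frac{1-\alpha}{2\alpha}}\sigma_E^x\tau_E^{\frac{1-\alpha}{2\alpha}}$; raising to the $\alpha$-th power and tracing gives
\begin{align}
  \mathrm{e}^{(\alpha-1)D^*_\alpha(\sigma_{XE}\|\sigma_X\otimes\tau_E)} = \sum_{x\in\mathcal{X}} p_X(x)\, c_x(\tau_E),\qquad c_x(\tau_E):=\Tr\!\left[\left(\tau_E^{\frac{1-\alpha}{2\alpha}}\sigma_E^x\tau_E^{\frac{1-\alpha}{2\alpha}}\right)^{\!\alpha}\right]\ge 0 .
\end{align}
Because $\alpha>1$, both $\log$ and $t\mapsto t^{1/\alpha}$ are increasing, so the infimum over $\tau_E$ defining $I^*_\alpha$ commutes with them and
\begin{align}
  \mathrm{e}^{\frac{\alpha-1}{\alpha}I^*_\alpha(X{\,:\,}E)_\sigma} = \inf_{\tau_E}\Big(\,\sum_{x\in\mathcal{X}} p_X(x)\, c_x(\tau_E)\Big)^{1/\alpha},
\end{align}
where it suffices to restrict the infimum to full-rank $\tau_E$ (for any other $\tau_E$ some $c_x(\tau_E)=+\infty$, so it never attains the infimum).

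The second and final step is then immediate. The channel states $\{\sigma_E^x\}_{x}$ are fixed, so for each fixed full-rank $\tau_E$ the numbers $c_x(\tau_E)$ are fixed and nonnegative, hence $p_X\mapsto\sum_x p_X(x)\,c_x(\tau_E)$ is an affine map of $p_X$ taking values in $[0,\infty)$ on the probability simplex. Composing with the function $u\mapsto u^{1/\alpha}$, which is concave on $[0,\infty)$ since $1/\alpha\in(0,1)$, shows that $p_X\mapsto\big(\sum_x p_X(x)\,c_x(\tau_E)\big)^{1/\alpha}$ is concave on the simplex for every $\tau_E$. A pointwise infimum of a family of concave functions is concave, so $p_X\mapsto\mathrm{e}^{\frac{\alpha-1}{\alpha}I^*_\alpha(X{\,:\,}E)_\sigma}$ is concave, as claimed.

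Essentially all of the content sits in the first step, which is routine block-diagonal algebra; the only genuine idea is that pulling the factor $\tfrac{1}{\alpha}$ outside the logarithm turns the optimized exponential into an infimum of affine functions composed with a concave power, which is automatically concave. I do not expect a real obstacle here: the single point requiring care is the bookkeeping of support conditions, i.e.\ checking that the infimum may be taken over the set of $\tau_E$ on which every $c_x(\tau_E)$ is finite without changing its value, which is standard for $\alpha>1$.
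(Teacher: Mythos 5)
Your proof is correct and takes essentially the same route as the paper: both reduce to the fact that $\mathrm{e}^{\frac{\alpha-1}{\alpha}D_\alpha^*(\sigma_{XE}\|\sigma_X\otimes\tau_E)}$ equals $\bigl(\sum_x p_X(x)\,c_x(\tau_E)\bigr)^{1/\alpha}$, observe this is an affine function of $p_X$ composed with the concave power $u\mapsto u^{1/\alpha}$, and invoke that a pointwise infimum of concave functions is concave. The only difference is that you derive the key block-diagonal identity explicitly, whereas the paper cites it from an earlier reference; your additional care about the support of $\tau_E$ is a sound but minor refinement.
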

\noindent The proof of Lemma~\ref{lemm:concavity} is deferred to Appendix~\ref{sec:proof_concavity}.
\end{proof}

\subsection{Proof of exponential strong converse for wiretap channel coding} \label{sec:proof_esc}

\begin{proof}[Proof of Theorem~\ref{theo:wiretap}]
Same as the beginning of the proof in Theorem \ref{theo:wireach},
the $d_1$ norm of the channel when the hash function is balanced can be expressed as
\begin{align}
	d_1(\mathcal{N}\mid\mathcal{E}) 
	\overset{(a)}= \frac{1}{2}\left\| \left(\mathcal{E}^{-1} - \mathcal{U}^{\mathcal{C}}\right)\left(\sigma^{\mathcal{C}}_{AE}\right) \right\|_1, 
\end{align}
where $\mathcal{U}^{\mathcal{C}}$ is the perfectly randomizing channel from $\ket{k}\bra{k}\otimes\ket{x_{k}}\bra{x_{k}}$ to $\ket{m}\bra{m}$ for $k \in [ML]$ and $m\in [M]$, and $A\equiv MLX$ is a classical system. Since the probability of $h$ being unbalanced is at most $\frac{1}{ML}$ and the value of $d_1$ norm is at most 1, we have by applying Theorem \ref{theo:sc},
\begin{align}
\mathbb{E}_{h\mid\mathcal{C}}\left[ d_1(\mathcal{N}\mid\mathcal{E}) \right]
&\geq\mathbb{E}_{h|\mathcal{C}} \frac{1}{2}\left\| \left(\mathcal{E}^{-1} - \mathcal{U}^{\mathcal{C}}\right)\left(\sigma^{\mathcal{C}}_{AE}\right) \right\|_1 - \frac{1}{ML}\\
&\geq 1-4\cdot\e^{-\frac{1-\alpha}{\alpha}\left(\log M-H^{\downarrow}_{2-\sfrac{1}{\alpha}}(A{\,|\,}E)_{\sigma^{\mathcal{C}}_{AE}}\right)} - \frac{1}{ML}, \quad \forall \alpha \in \left(\sfrac12, 1\right),
\end{align}
Since in balanced condition, Alice sends every $k\in [ML]$ to the codebook $\mathcal{C}$ with equal probability, we calculate that
\begin{align}
	I^{\downarrow}_{2-\frac{1}{\alpha}}(A;E)_{\sigma^{\mathcal{C}}_{AE}}\
	&= D_{2-\frac{1}{\alpha}}\left(\sigma^{\mathcal{C}}_{AE}\|\sigma^{\mathcal{C}}_A\otimes \sigma^{\mathcal{C}}_E\right)\\
	&=  D_{2-\frac{1}{\alpha}}\left(\bigoplus\limits_{k}\frac{1}{ML} \sigma_E^{x_{k}} \left\| \frac{\mathds{1}_A}{ML}\otimes\sigma^{\mathcal{C}}_E\right.\right)\\
	&=  D_{2-\frac{1}{\alpha}}\left(\left.\bigoplus\limits_{k}\frac{1}{ML} \sigma_E^{x_{k}} \right\| \mathds{1}_A\otimes\sigma^{\mathcal{C}}_E \right) + \log(ML)\\
	&= -H^{\downarrow}_{2-\frac{1}{\alpha}}(A{\,|\,}E)_{\sigma^{\mathcal{C}}_{AE}} + \log(ML).
\end{align}
Using above relation and taking expectation of the codeword, we obtain
\begin{align}
\mathbb{E}_{h,\mathcal{C}} \left[ d_1(\mathcal{N}\mid\mathcal{E}) \right]
&=\mathbb{E}_{\mathcal{C}}\mathbb{E}_{h|\mathcal{C}} \left[ d_1(\mathcal{N}\mid\mathcal{E}) \right]\\
&\geq \mathbb{E}_{\mathcal{C}}\left[1-4\cdot\e^{-\frac{1-\alpha}{\alpha}\left(I^{\downarrow}_{2-\frac{1}{\alpha}}(A{\,:\,}E)_{\sigma^{\mathcal{C}}_{AE}}-\log L\right)}\right] -\frac{1}{ML}\\
&\overset{(a)}{\geq} \mathbb{E}_{\mathcal{C}}\left[1-5\cdot\e^{-\frac{1-\alpha}{\alpha}\left(I^{\downarrow}_{2-\sfrac{1}{\alpha}}(A{\,:\,}E)_{\mathbb{E}_{\mathcal{C}}\left(\sigma^{\mathcal{C}}_{AE}\right)}-\log L\right)}\right]  \\
&\overset{(b)}{\geq} 1-5\cdot\e^{-\frac{1-\alpha}{\alpha}\left(I^{\downarrow}_{2-\sfrac{1}{\alpha}}(A{\,:\,}E)_{\mathbb{E}_{\mathcal{C}}\left[\sigma^{\mathcal{C}}_{AE}\right]}-\log L\right)} \, ,
\end{align}
where (a) is because the term $\sfrac{1}{ML}$ decays faster than the second term, i.e.~for every $\alpha \in (\sfrac12,1)$,
\begin{align}
    I^{\downarrow}_{2-\frac{1}{\alpha}}(A{\,:\,}E)_{\sigma^{\mathcal{C}}_{AE}} &\leq I(A{\,:\,}E)_{\sigma^{\mathcal{C}}_{AE}} \leq \log(ML);
\end{align}
 and the last inequality (b) utilizes the fact that the map
\begin{align} 
\sigma^{\mathcal{C}}_{A} \mapsto \e^{-\frac{1-\alpha}{\alpha}\left(I^{\downarrow}_{2-\frac{1}{\alpha}}(A{\,:\,}E)_{\sigma^{\mathcal{C}}_{AE}}-\log L\right)}
\end{align} is linear by directly inspecting the definition given in \eqref{eq:Petz_Renyi_down_in}.

Now, note that
\begin{align}
	\mathbb{E}_\mathcal{C}\left[\sigma^\mathcal{C}_{AE}\right]
	&= \sum_{k\in[ML]}\frac{1}{ML}\ket{k}\bra{k}
	\otimes\left(\sum_{x\in\mathcal{X}}p_X(x)\ket{x}\bra{x}\otimes \sigma_E^{x}\right).
\end{align}
By simple calculation, we have
\begin{align}
	I_{\alpha}^{\downarrow}(A;E)_{\mathbb{E}_\mathcal{C}\left[\sigma^{\mathcal{C}}_{AE}\right]} = I_{\alpha}^{\downarrow}(X{\,:\,}E)_{\sigma},
\end{align}
where $\sigma_{XE}$ is given in Theorem \ref{theo:wiretap}.
Putting them all together and choosing $\alpha$ to maximize the exponent yields our result:
\begin{align}
\mathbb{E}_{\mathcal{C}, h} \left[d_1(\mathcal{N}\mid\mathcal{E})\right] &\geq 1-5\e^{-\sup_{\sfrac{1}{2}<\alpha<1}\frac{1-\alpha}{\alpha}\left(I^{\downarrow}_{2-\sfrac{1}{\alpha}}(X{\,:\,}E)_\sigma - \log L\right)}\ .
\end{align}

The positivity of the  exponent follows from the monotone increasing of the map $\alpha\mapsto I^{\downarrow}_{1 - \sfrac{1}{\alpha} }(X{\,:\,}E)_{\sigma}$ and \eqref{eq:alpha1}.
\end{proof}

\subsection{Proof of a concavity property} \label{sec:proof_concavity}
\begin{proof}[Proof of Lemma~\ref{lemm:concavity}]
	As stated in the proof of Proposition 4-(b) in \cite{CGH18}, for any classical-quantum state $\sigma_{XE} = \sum_{x\in\mathcal{X}} p_X(x)|x\rangle \langle x|\otimes \sigma_E^x$ and an arbitrary state $\tau_E \in \mathcal{D}(E)$, we have 
	\begin{align}
		D_{\alpha}^*(\sigma_{XE}\|\sigma_X\otimes \tau_E) = \frac{1}{\alpha-1}\log \sum\limits_{x}p_X(x) \mathrm{e}^{(\alpha-1)D^*_{\alpha}(\rho_E^x\|\tau_E)}.
	\end{align}
	Hence, 
	\begin{align}
		\mathrm{e}^{\frac{\alpha-1}{\alpha}D_{\alpha}^*(\sigma_{XE}\|\sigma_X\otimes \tau_E)} = \left(\sum\limits_{x}p_X(x)\mathrm{e}^{(\alpha-1)D^*_{\alpha}(\rho_E^x\|\tau_E)}\right)^{\frac{1}{\alpha}}.
	\end{align}
	Since the power function $(\,\cdot\,)^{\sfrac{1}{\alpha}}$ is concave for $\alpha>1$, the map $\sigma_X\mapsto \e^{\frac{\alpha-1}{\alpha}D_{\alpha}^*(\sigma_{XE}\|\sigma_X\otimes \sigma'_E)}$ is thus concave. The proof then follows from the definition of $I_{\alpha}^{*}(X;E)_{\sigma}$ given in \eqref{eq:Petz_Renyi_down_con} and the fact that pointwise infimum of concave functions is concave.   
\end{proof}

\appendix



\bibliographystyle{myIEEEtran}
\bibliography{reference.bib}

\begin{thebibliography}{10}
\providecommand{\url}[1]{#1}
\csname url@samestyle\endcsname
\providecommand{\newblock}{\relax}
\providecommand{\bibinfo}[2]{#2}
\providecommand{\BIBentrySTDinterwordspacing}{\spaceskip=0pt\relax}
\providecommand{\BIBentryALTinterwordstretchfactor}{4}
\providecommand{\BIBentryALTinterwordspacing}{\spaceskip=\fontdimen2\font plus
\BIBentryALTinterwordstretchfactor\fontdimen3\font minus
  \fontdimen4\font\relax}
\providecommand{\BIBforeignlanguage}[2]{{%
\expandafter\ifx\csname l@#1\endcsname\relax
\typeout{** WARNING: IEEEtran.bst: No hyphenation pattern has been}%
\typeout{** loaded for the language `#1'. Using the pattern for}%
\typeout{** the default language instead.}%
\else
\language=\csname l@#1\endcsname
\fi
#2}}
\providecommand{\BIBdecl}{\relax}
\BIBdecl

\bibitem{SIAM17}
C.~H. Bennett, G.~Brassard, and J.-M. Robert, ``Privacy amplification by public
  discussion,'' \emph{SIAM Journal on Computing, 17(2):210-229}, 1988.

\bibitem{Ren05}
R.~Renner, ``Security of quantum key distribution,''
  \href{http://dx.doi.org/10.3929/ethz-a-005115027}{2005},
  arXiv:quant-ph/0512258.

\bibitem{Tom16}
M.~Tomamichel, \emph{Quantum Information Processing with Finite
  Resources}.\hskip 1em plus 0.5em minus 0.4em\relax Springer International
  Publishing, 2016.

\bibitem{Hay13}
M.~Hayashi, ``Tight exponential analysis of universally composable privacy
  amplification and its applications,''
  \href{http://dx.doi.org/10.1109/tit.2013.2278971}{\emph{{IEEE} Transactions
  on Information Theory}},
  \href{http://dx.doi.org/10.1109/tit.2013.2278971}{vol.~59},
  \href{http://dx.doi.org/10.1109/tit.2013.2278971}{no.~11},
  \href{http://dx.doi.org/10.1109/tit.2013.2278971}{pp. 7728--7746},
  \href{http://dx.doi.org/10.1109/tit.2013.2278971}{nov 2013}.

\bibitem{Dup21}
F.~Dupuis, ``Privacy amplification and decoupling without smoothing,''
  \emph{arXiv:2105.05342 [quant-ph]}, 2021.

\bibitem{IEEE64}
J.~M. Renes, ``On privacy amplification, lossy compression, and their duality
  to channel coding,'' \emph{IEEE Transactions on Information Theory 64, 7792},
  2018.

\bibitem{IEEE57}
J.~M. Renes and R.~Renner, ``Noisy channel coding via privacy amplification and
  information reconciliation,'' \emph{IEEE Transactions on Information Theory
  57, 7377}, 2011.

\bibitem{Hay11}
M.~Hayashi, ``Exponential decreasing rate of leaked information in universal
  random privacy amplification,''
  \href{http://dx.doi.org/10.1109/tit.2011.2110950}{\emph{{IEEE} Transactions
  on Information Theory}},
  \href{http://dx.doi.org/10.1109/tit.2011.2110950}{vol.~57},
  \href{http://dx.doi.org/10.1109/tit.2011.2110950}{no.~6},
  \href{http://dx.doi.org/10.1109/tit.2011.2110950}{pp. 3989--4001},
  \href{http://dx.doi.org/10.1109/tit.2011.2110950}{2011}.

\bibitem{Hay2112}
J.~Wu, G.-L. Long, and M.~Hayashi, ``Quantum secure direct communication with
  private dense coding using general preshared quantum state,''
  \emph{arXiv:2112.15113 [quant-ph]}, 2021.

\bibitem{Proc465}
A.~Abeyesinghe, I.~Devetak, P.~Hayden, and A.~Winter, ``The mother of all
  protocols: restructuring quantum informations family tree,'' \emph{Proc. R.
  Soc. A, \textbf{465}(2108), 2537-2563}, 2009.

\bibitem{Proc467}
J.~M. Renes, ``Duality of privacy amplification against quantum adversaries and
  data compression with quantum side information,'' \emph{Proc. Roy. Soc. A
  vol. 467 no. 2130, pp. 1604-1623}, 2011.

\bibitem{Tsurumaru2021EquivalenceOT}
T.~Tsurumaru, ``Equivalence of three classical algorithms with quantum side
  information: Privacy amplification, error correction, and data compression,''
  \emph{IEEE Transactions on Information Theory, Volume 68, Issue 2, 1016 -
  1031}, 2022.

\bibitem{MH333}
M.~Hayashi, ``Precise evaluation of leaked information with universal2 privacy
  amplification in the presence of quantum attacker,'' \emph{Communications in
  Mathematical Physics, Volume 333, Issue 1, pp 335-350}, 2015.

\bibitem{TH13}
M.~Tomamichel and M.~Hayashi, ``A {Hierarchy} of {Information} {Quantities} for
  {Finite} {Block} {Length} {Analysis} of {Quantum} {Tasks},''
  \href{http://dx.doi.org/10.1109/TIT.2013.2276628}{\emph{IEEE Transactions on
  Information Theory}},
  \href{http://dx.doi.org/10.1109/TIT.2013.2276628}{vol.~59},
  \href{http://dx.doi.org/10.1109/TIT.2013.2276628}{no.~11},
  \href{http://dx.doi.org/10.1109/TIT.2013.2276628}{pp. 7693--7710},
  \href{http://dx.doi.org/10.1109/TIT.2013.2276628}{Nov. 2013}, 00112 arXiv:
  1208.1478.

\bibitem{KL21}
K.~Li, Y.~Yao, and M.~Hayashi, ``Tight exponential analysis for smoothing the
  max-relative entropy and for quantum privacy amplification,''
  \emph{arXiv:2111.01075 [quant-ph]}, 2022.

\bibitem{YW66}
Y.~Watanabe, ``Randomness extraction via a quantum generalization of the
  conditional collision entropy,'' \emph{IEEE Transactions on Information
  Theory, Volume 66, Issue 2}, 2020.

\bibitem{WWY14}
M.~M. Wilde, A.~Winter, and D.~Yang, ``Strong converse for the classical
  capacity of entanglement-breaking and {Hadamard} channels via a sandwiched
  {R{\'{e}}nyi} relative entropy,''
  \href{http://dx.doi.org/10.1007/s00220-014-2122-x}{\emph{Communications in
  Mathematical Physics}},
  \href{http://dx.doi.org/10.1007/s00220-014-2122-x}{vol. 331},
  \href{http://dx.doi.org/10.1007/s00220-014-2122-x}{no.~2},
  \href{http://dx.doi.org/10.1007/s00220-014-2122-x}{pp. 593--622},
  \href{http://dx.doi.org/10.1007/s00220-014-2122-x}{Jul 2014}.

\bibitem{MO17}
M.~Mosonyi and T.~Ogawa, ``Strong converse exponent for classical-quantum
  channel coding,''
  \href{http://dx.doi.org/10.1007/s00220-017-2928-4}{\emph{Communications in
  Mathematical Physics}},
  \href{http://dx.doi.org/10.1007/s00220-017-2928-4}{vol. 355},
  \href{http://dx.doi.org/10.1007/s00220-017-2928-4}{no.~1},
  \href{http://dx.doi.org/10.1007/s00220-017-2928-4}{pp. 373--426},
  \href{http://dx.doi.org/10.1007/s00220-017-2928-4}{Oct 2017}.

\bibitem{Cheng2021b}
H.-C. Cheng, N.~Dattaand, and C.~Rou\'ze, ``Strong converse bounds in quantum
  network information theory,''
  \href{http://dx.doi.org/TIT.2021.3058166}{\emph{IEEE Transactions on
  Information Theory}}, \href{http://dx.doi.org/TIT.2021.3058166}{vol.~67},
  \href{http://dx.doi.org/TIT.2021.3058166}{no.~4},
  \href{http://dx.doi.org/TIT.2021.3058166}{April 2021}.

\bibitem{DFR20}
F.~Dupuis, O.~Fawzi, and R.~Renner, ``Entropy accumulation,''
  \emph{Communications in Mathematical Physics 379, 867–913}, 2020.

\bibitem{CH17}
H.-C. Cheng and M.-H. Hsieh, ``Moderate deviation analysis for
  classical-quantum channels and quantum hypothesis testing,''
  \href{http://dx.doi.org/10.1109/TIT.2017.2781254}{\emph{{IEEE} Transactions
  on Information Theory}},
  \href{http://dx.doi.org/10.1109/TIT.2017.2781254}{vol.~64},
  \href{http://dx.doi.org/10.1109/TIT.2017.2781254}{no.~2},
  \href{http://dx.doi.org/10.1109/TIT.2017.2781254}{pp. 1385--1403},
  \href{http://dx.doi.org/10.1109/TIT.2017.2781254}{feb 2018}.

\bibitem{CTT2017}
C.~T. Chubb, V.~Y.~F. Tan, and M.~Tomamichel, ``Moderate deviation analysis for
  classical communication over quantum channels,''
  \href{http://dx.doi.org/10.1007/s00220-017-2971-1}{\emph{Communications in
  Mathematical Physics}},
  \href{http://dx.doi.org/10.1007/s00220-017-2971-1}{vol. 355},
  \href{http://dx.doi.org/10.1007/s00220-017-2971-1}{no.~3},
  \href{http://dx.doi.org/10.1007/s00220-017-2971-1}{pp. 1283--1315},
  \href{http://dx.doi.org/10.1007/s00220-017-2971-1}{Nov 2017}.

\bibitem{KMR05}
R.~K{\"o}nig, U.~Maurer, and R.~Renner, ``On the power of quantum memory,''
  \href{http://dx.doi.org/10.1109/tit.2005.850087}{\emph{{IEEE} Transactions on
  Information Theory}},
  \href{http://dx.doi.org/10.1109/tit.2005.850087}{vol.~51},
  \href{http://dx.doi.org/10.1109/tit.2005.850087}{no.~7},
  \href{http://dx.doi.org/10.1109/tit.2005.850087}{pp. 2391--2401},
  \href{http://dx.doi.org/10.1109/tit.2005.850087}{jul 2005}.

\bibitem{TSS+11}
M.~Tomamichel, C.~Schaffner, A.~Smith, and R.~Renner, ``Leftover hashing
  against quantum side information,''
  \href{http://dx.doi.org/10.1109/tit.2011.2158473}{\emph{{IEEE} Transactions
  on Information Theory}},
  \href{http://dx.doi.org/10.1109/tit.2011.2158473}{vol.~57},
  \href{http://dx.doi.org/10.1109/tit.2011.2158473}{no.~8},
  \href{http://dx.doi.org/10.1109/tit.2011.2158473}{pp. 5524--5535},
  \href{http://dx.doi.org/10.1109/tit.2011.2158473}{aug 2011}.

\bibitem{TCR10}
M.~Tomamichel, R.~Colbeck, and R.~Renner, ``Duality between smooth min- and
  max-entropies,''
  \href{http://dx.doi.org/10.1109/tit.2010.2054130}{\emph{{IEEE} Transactions
  on Information Theory}},
  \href{http://dx.doi.org/10.1109/tit.2010.2054130}{vol.~56},
  \href{http://dx.doi.org/10.1109/tit.2010.2054130}{no.~9},
  \href{http://dx.doi.org/10.1109/tit.2010.2054130}{pp. 4674--4681},
  \href{http://dx.doi.org/10.1109/tit.2010.2054130}{sep 2010}.

\bibitem{FG99}
C.~Fuchs and J.~van~de Graaf, ``Cryptographic distinguishability measures for
  quantum-mechanical states,''
  \href{http://dx.doi.org/10.1109/18.761271}{\emph{{IEEE} Transactions on
  Information Theory}}, \href{http://dx.doi.org/10.1109/18.761271}{vol.~45},
  \href{http://dx.doi.org/10.1109/18.761271}{no.~4},
  \href{http://dx.doi.org/10.1109/18.761271}{pp. 1216--1227},
  \href{http://dx.doi.org/10.1109/18.761271}{may 1999}.

\bibitem{Pet86}
D.~Petz, ``Quasi-entropies for finite quantum systems,''
  \href{http://dx.doi.org/10.1016/0034-4877(86)90067-4}{\emph{Reports on
  Mathematical Physics}},
  \href{http://dx.doi.org/10.1016/0034-4877(86)90067-4}{vol.~23},
  \href{http://dx.doi.org/10.1016/0034-4877(86)90067-4}{no.~1},
  \href{http://dx.doi.org/10.1016/0034-4877(86)90067-4}{pp. 57--65},
  \href{http://dx.doi.org/10.1016/0034-4877(86)90067-4}{Feb 1986}.

\bibitem{MDS+13}
M.~M{\"u}ller-Lennert, F.~Dupuis, O.~Szehr, S.~Fehr, and M.~Tomamichel, ``On
  quantum {R{\'e}nyi} entropies: A new generalization and some properties,''
  \href{http://dx.doi.org/10.1063/1.4838856}{\emph{Journal of Mathematical
  Physics}}, \href{http://dx.doi.org/10.1063/1.4838856}{vol.~54},
  \href{http://dx.doi.org/10.1063/1.4838856}{no.~12},
  \href{http://dx.doi.org/10.1063/1.4838856}{p. 122203},
  \href{http://dx.doi.org/10.1063/1.4838856}{2013}.

\bibitem{Ume62}
H.~Umegaki, ``Conditional expectation in an operator algebra. {IV}. entropy and
  information,'' \href{http://dx.doi.org/10.2996/kmj/1138844604}{\emph{Kodai
  Mathematical Seminar Reports}},
  \href{http://dx.doi.org/10.2996/kmj/1138844604}{vol.~14},
  \href{http://dx.doi.org/10.2996/kmj/1138844604}{no.~2},
  \href{http://dx.doi.org/10.2996/kmj/1138844604}{pp. 59--85},
  \href{http://dx.doi.org/10.2996/kmj/1138844604}{1962}.

\bibitem{NC09}
M.~A. Nielsen and I.~L. Chuang, \emph{Quantum Computation and Quantum
  Information}.\hskip 1em plus 0.5em minus 0.4em\relax Cambridge University
  Press, 2009.

\bibitem{Tomhay16}
M.~Hayashi and M.~Tomamichel, ``Correlation detection and an operational
  interpretation of the renyi mutual information,'' \emph{Journal of
  Mathematical Physics 57, 102201}, 2016.

\bibitem{CHD+21}
H.-C. Cheng, E.~P. Hanson, N.~Datta, and M.-H. Hsieh, ``Non-asymptotic
  classical data compression with quantum side information,''
  \href{http://dx.doi.org/10.1109/tit.2020.3038517}{\emph{{IEEE} Transactions
  on Information Theory}},
  \href{http://dx.doi.org/10.1109/tit.2020.3038517}{vol.~67},
  \href{http://dx.doi.org/10.1109/tit.2020.3038517}{no.~2},
  \href{http://dx.doi.org/10.1109/tit.2020.3038517}{pp. 902--930},
  \href{http://dx.doi.org/10.1109/tit.2020.3038517}{feb 2021}.

\bibitem{Hay07}
M.~Hayashi, ``Error exponent in asymmetric quantum hypothesis testing and its
  application to classical-quantum channel coding,''
  \href{http://dx.doi.org/10.1103/physreva.76.062301}{\emph{Physical Review
  A}}, \href{http://dx.doi.org/10.1103/physreva.76.062301}{vol.~76},
  \href{http://dx.doi.org/10.1103/physreva.76.062301}{no.~6},
  \href{http://dx.doi.org/10.1103/physreva.76.062301}{Dec 2007}.

\bibitem{CH16}
H.-C. Cheng and M.-H. Hsieh, ``Concavity of the auxiliary function for
  classical-quantum channels,''
  \href{http://dx.doi.org/10.1109/TIT.2016.2598835}{\emph{{IEEE} Transactions
  on Information Theory}},
  \href{http://dx.doi.org/10.1109/TIT.2016.2598835}{vol.~62},
  \href{http://dx.doi.org/10.1109/TIT.2016.2598835}{no.~10},
  \href{http://dx.doi.org/10.1109/TIT.2016.2598835}{pp. 5960 -- 5965},
  \href{http://dx.doi.org/10.1109/TIT.2016.2598835}{2016}.

\bibitem{CHT19}
H.-C. Cheng, M.-H. Hsieh, and M.~Tomamichel, ``Quantum sphere-packing bounds
  with polynomial prefactors,''
  \href{http://dx.doi.org/10.1109/tit.2019.2891347}{\emph{{IEEE} Transactions
  on Information Theory}},
  \href{http://dx.doi.org/10.1109/tit.2019.2891347}{vol.~65},
  \href{http://dx.doi.org/10.1109/tit.2019.2891347}{no.~5},
  \href{http://dx.doi.org/10.1109/tit.2019.2891347}{pp. 2872--2898},
  \href{http://dx.doi.org/10.1109/tit.2019.2891347}{May 2019}.

\bibitem{CHDH2-2018}
H.-C. Cheng, E.~P. Hanson, N.~Datta, and M.-H. Hsieh, ``Duality between source
  coding with quantum side information and c-q channel coding,'' 2018,
  arXiv:1809.11143 [quant-ph].

\bibitem{Hao-Chung}
H.-C. Cheng, ``Error exponent analysis in quantum information theory,''
  \emph{PhD Thesis (University of Technology Sydney)}, 2018.

\bibitem{YCL21}
J.-K. You, H.-C. Cheng, and Y.-H. Li, ``Minimizing quantum {R{\'e}nyi}
  divergences via mirror descent with {Polyak} step size,''
  \emph{arXiv:2109.06054 [cs.IT]}, 2021.

\bibitem{ANS+08}
K.~M.~R. Audenaert, M.~Nussbaum, A.~Szko{\l}a, and F.~Verstraete, ``Asymptotic
  error rates in quantum hypothesis testing,''
  \href{http://dx.doi.org/10.1007/s00220-008-0417-5}{\emph{Communications in
  Mathematical Physics}},
  \href{http://dx.doi.org/10.1007/s00220-008-0417-5}{vol. 279},
  \href{http://dx.doi.org/10.1007/s00220-008-0417-5}{no.~1},
  \href{http://dx.doi.org/10.1007/s00220-008-0417-5}{pp. 251--283},
  \href{http://dx.doi.org/10.1007/s00220-008-0417-5}{Feb 2008}.

\bibitem{BK02}
H.~Barnum and E.~Knill, ``Reversing quantum dynamics with near-optimal quantum
  and classical fidelity,''
  \href{http://dx.doi.org/10.1063/1.1459754}{\emph{Journal of Mathematical
  Physics}}, \href{http://dx.doi.org/10.1063/1.1459754}{vol.~43},
  \href{http://dx.doi.org/10.1063/1.1459754}{no.~5},
  \href{http://dx.doi.org/10.1063/1.1459754}{p. 2097},
  \href{http://dx.doi.org/10.1063/1.1459754}{2002}.

\bibitem{Wat18}
J.~Watrous, \emph{The Theory of Quantum Information}.\hskip 1em plus 0.5em
  minus 0.4em\relax Cambridge University Press, apr 2018.

\bibitem{Hel67}
C.~W. Helstrom, ``Detection theory and quantum mechanics,''
  \href{http://dx.doi.org/10.1016/s0019-9958(67)90302-6}{\emph{Information and
  Control}}, \href{http://dx.doi.org/10.1016/s0019-9958(67)90302-6}{vol.~10},
  \href{http://dx.doi.org/10.1016/s0019-9958(67)90302-6}{no.~3},
  \href{http://dx.doi.org/10.1016/s0019-9958(67)90302-6}{pp. 254--291},
  \href{http://dx.doi.org/10.1016/s0019-9958(67)90302-6}{mar 1967}.

\bibitem{Hol72}
A.~Holevo, ``The analogue of statistical decision theory in the noncommutative
  probability theory,'' \emph{Proc. Moscow Math. Soc.}, vol.~26, pp. 133--149,
  1972.

\bibitem{AM14}
K.~M.~R. Audenaert and M.~Mosonyi, ``Upper bounds on the error probabilities
  and asymptotic error exponents in quantum multiple state discrimination,''
  \href{http://dx.doi.org/10.1063/1.4898559}{\emph{Journal of Mathematical
  Physics}}, \href{http://dx.doi.org/10.1063/1.4898559}{vol.~55},
  \href{http://dx.doi.org/10.1063/1.4898559}{no.~10},
  \href{http://dx.doi.org/10.1063/1.4898559}{p. 102201},
  \href{http://dx.doi.org/10.1063/1.4898559}{oct 2014}.

\bibitem{Hay132}
M.~Hayashi, ``Quantum wiretap channel with non-uniform random number and its
  exponent and equivocation rate of leaked information,'' \emph{IEEE
  Transactions on Information Theory, Volume 61, Issue 10, 5595-5622}, 2015.

\bibitem{CGH18}
H.-C. Cheng, L.~Gao, and M.-H. Hsieh, ``Properties of noncommutative
  r{\'{e}}nyi and {Augustin} information,''
  \href{http://dx.doi.org/10.1007/s00220-022-04319-8}{\emph{Communications in
  Mathematical Physics}},
  \href{http://dx.doi.org/10.1007/s00220-022-04319-8}{feb 2022}.

\end{thebibliography}

\end{document}